\newenvironment{proofw}{\par
  \pushQED{\qed}%
  \normalfont \topsep6\p@\@plus6\p@\relax
  \trivlist
  \item[]\ignorespaces
}{%
  \popQED\endtrivlist\@endpefalse
}
\global\long\def\AOS{\text{AOS}}
\global\long\def\ROS{\text{ROS}}
\global\long\def\IID{\text{i.i.d.}}
\title{Competitive Analysis with a Sample and the Secretary Problem}
\author{
Haim Kaplan\thanks{Blavatnik School of Computer Science, Tel Aviv University, Israel. Email:\texttt{haimk@tau.ac.il}. '
}
\and David Naori\thanks{Computer Science Department, Technion, Israel. Email:\texttt{\{dnaori,danny\}@cs.technion.ac.il}.} 
\and Danny Raz\footnotemark[2]
}
\date{}
\begin{document}
\maketitle
\begin{abstract}
We extend the standard online worst-case model to accommodate past experience which is available to the online player in many practical scenarios. We do this by revealing a random sample of the adversarial input to the online player ahead of time.
The online player competes with the expected optimal value on the part of the input that arrives online. Our model bridges between existing online stochastic models (e.g., items are drawn i.i.d.\ from a distribution) and the online worst-case model. We also extend in a similar manner (by revealing a sample) the online random-order model. 

We study the classical secretary problem in our new models. In the worst-case model we present a simple online algorithm with optimal competitive-ratio for any sample size. In the random-order model, we also give a simple online algorithm with an almost tight competitive-ratio for small sample sizes. Interestingly, we prove that for a large enough sample, no algorithm can be simultaneously optimal both in the worst-cast and random-order models. 

\end{abstract}
\section{Introduction}
Online algorithms have proven to be an important tool to study interactive scenarios where the input is revealed over time and we have to take decisions before seeing all the input. The analysis in the worst-case adversarial model provides robust guarantees 
of the expected performance when the algorithm has no prior knowledge about the input. However, in some cases this model is too powerful and no algorithm can achieve a non-trivial competitive-ratio (e.g., the secretary problem). In other cases (cache replacement policies for example) different algorithms provides similar worst-case guarantee, and thus the model does not help to distinguish between the algorithms~\cite{DBLP:journals/cacm/Roughgarden19}. Moreover, in many online scenarios, we have additional information about the online input (data from the past or other sources) and we want to use it.

To address this point we introduce a simple and natural generalization of the standard worst-case online model to accommodate past experience. As in this standard model, we allow an adversary to choose the input sequence as well as the order in which the input is revealed to the online player.
To model the experience that the online player may have, we assume she gets in advance a random sample of limited size from the adversarial input, which we call the \textit{history set}. The remaining part of the input, called the \textit{online set}, arrives online in adversarial order (that may depends on the random sample).
To evaluate the performance of an online algorithm, we compare the expected value of its solution to the expected optimal solution one could obtain in hindsight (i.e.,\ the optimal solution one can get if he knows the online set, and does not have to take decisions online. The expectation is over the random split into a history and online sets). We traditionally adopt the term \textit{competitive-ratio} to refer to the worst ratio between the two. The size of the random sample presented to the player is a parameter of the model; when limiting this size to be zero, the model is identical to the standard worst-case online model. We name this model the \textit{adversarial-order model with a sample ($\AOS$)}. We also consider the \textit{random-order model with a sample ($\ROS$)}, which is a similar generalization of the random-order online model, that is, a model in which the elements of the online set arrives in a uniformly random order. In this model the performance of the online algorithm is averaged not only on the split into history and online sets but also on the random order of the online set. 

Several models in the literature allow taking prior knowledge into consideration, however, they relay on overly strong assumptions that might not take place in most realistic conditions. 
(We overview notable examples in the related work subsection.) On the other hand, the random sample assumption we propose is focused solely on modeling the past experience the online player may have.

We apply our models to the classical online secretary problem which 
is arguably one of the most basic online problems. In the secretary problem, a sequence of candidates are presented one-by-one to an online player. Every candidate is associated with a value (non-negative real number) which is revealed when the candidate arrives. The online player may choose only one candidate, aiming to maximize the expected value of the candidate she chooses. 

In our $\AOS$ model, the adversary picks $n+h$ candidates. Then, $h$ uniformly random candidates are revealed to the online player upfront for the purpose of learning only. From here on, the process is identical to the (adversarial-order) secretary problem with the remaining $n$ candidates. 
Our results for this problem are illustrated by the two lower curves in Figure~\ref{fig:overview}. We distinguish between the cases where $h < n$ and $h \geq n$: For $h < n$ we describe a simple algorithm (Algorithm~\ref{alg:adv-sec-short-hist}) that achieves a competitive-ratio of $h/(n+h-1)$, and prove a matching upper-bound (Theorem~\ref{ao-upper-bound}). In particular, for $h=n-1$ the competitive-ratio of Algorithm~\ref{alg:adv-sec-short-hist} is $1/2$. For $h \geq n$, we show that this competitive-ratio of $1/2$ essentially cannot be further improved. We prove an upper-bound of $\frac{1}{2}\cdot \frac{2^n}{2^n-1}$ for this case (Theorem~\ref{thm:half_upper_bound}), and point-out a modification of Algorithm~\ref{alg:adv-sec-short-hist} (Algorithm~\ref{alg:adv-sec-long-hist}) that achieves a competitive-ratio of $1/2$ in this case ($h \geq n$).

We then move on to study the secretary problem in the $\ROS$ model. Our algorithm in this case can be viewed as a generalization of the well-known optimal algorithm for the ordinary secretary problem. The structure of the classical algorithm, sometimes referred to as \textit{sample-and-price}, is the same as the structure of most algorithms in the random-order model. It consists of two phases. In the first phase, known as the \textit{sampling phase}, the algorithm uses a prefix of the online input sequence to gather information about the input. This information is then used in the second phase to guide the decisions on the remaining part of the input. Taking the history set into account, our algorithm uses a sampling phase only when $h$ is not large enough (specifically, if $h\approx0.567n$ or larger, as illustrated in Figure~\ref{fig:overview}, a sampling phase is not used). More interestingly, it transition into a new phase when it accumulates a sample of size $n$. At this point it starts making decisions based on random subsets of the observed input. Our results for this case are illustrated by the two upper curves in Figure~\ref{fig:overview}. The competitive-ratio of our algorithm (Algorithm~\ref{alg:ro_secretary}) exhibits a trend similar to the $\AOS$ case. It improves as the size of the history set grows until $h=n$, and from there on it achieves a competitive-ratio of $1-(1-1/n)^n \geq 1-1/e$. We prove an upper-bound which is almost tight when $h$ is small compared to $n$, however, an interesting gap, especially for large $h$, remains unresolved.

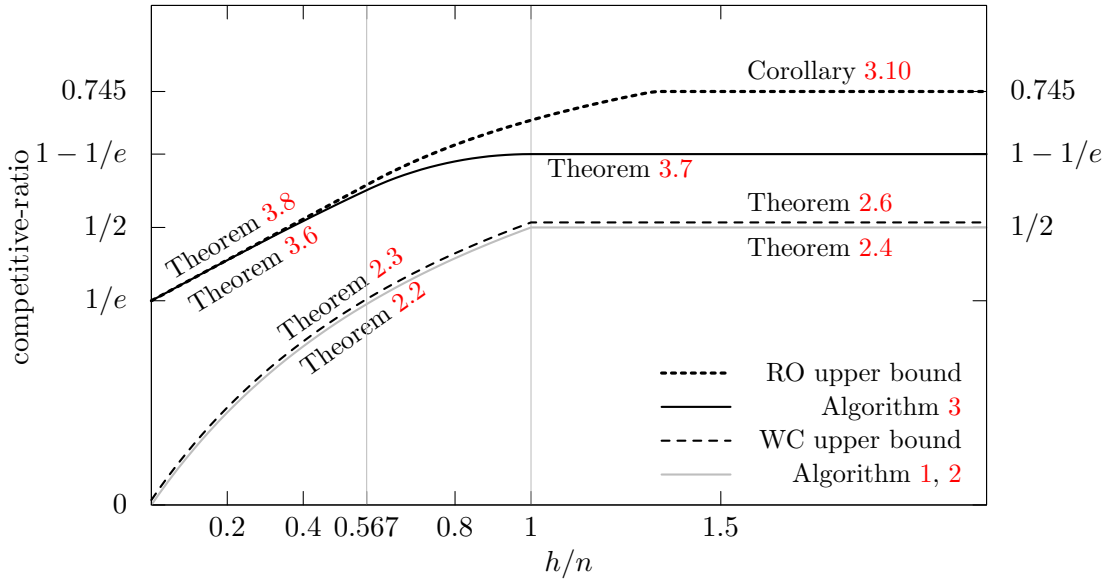
\begin{figure}
\centering
\caption{Overview of our results.}~\label{fig:overview}
\begin{tikzpicture}[gnuplot]
\path (0.000,0.000) rectangle (15.000,8.000);
\gpcolor{color=gp lt color border}
\gpsetlinetype{gp lt border}
\gpsetlinewidth{1.00}
\draw[gp path] (1.872,0.985)--(2.052,0.985);
\draw[gp path] (12.975,0.985)--(12.795,0.985);
\node[gp node right] at (1.688,0.985) {$0$};
\draw[gp path] (1.872,3.702)--(2.052,3.702);
\draw[gp path] (12.975,3.702)--(12.795,3.702);
\node[gp node right] at (1.688,3.702) {$1/e$};
\draw[gp path] (1.872,4.677)--(2.052,4.677);
\draw[gp path] (12.975,4.677)--(12.795,4.677);
\node[gp node right] at (1.688,4.677) {$1/2$};
\draw[gp path] (1.872,5.653)--(2.052,5.653);
\draw[gp path] (12.975,5.653)--(12.795,5.653);
\node[gp node right] at (1.688,5.653) {$1-1/e$};
\draw[gp path] (1.872,6.486)--(2.052,6.486);
\draw[gp path] (12.975,6.486)--(12.795,6.486);
\node[gp node right] at (1.688,6.486) {$0.745$};
\draw[gp path] (2.881,0.985)--(2.881,1.165);
\draw[gp path] (2.881,7.631)--(2.881,7.451);
\node[gp node center] at (2.881,0.677) {$0.2$};
\draw[gp path] (3.891,0.985)--(3.891,1.165);
\draw[gp path] (3.891,7.631)--(3.891,7.451);
\node[gp node center] at (3.891,0.677) {$0.4$};
\draw[gp path] (4.734,0.985)--(4.734,1.165);
\draw[gp path] (4.734,7.631)--(4.734,7.451);
\node[gp node center] at (4.734,0.677) {$0.567$};
\draw[gp path] (5.909,0.985)--(5.909,1.165);
\draw[gp path] (5.909,7.631)--(5.909,7.451);
\node[gp node center] at (5.909,0.677) {$0.8$};
\draw[gp path] (6.919,0.985)--(6.919,1.165);
\draw[gp path] (6.919,7.631)--(6.919,7.451);
\node[gp node center] at (6.919,0.677) {$1$};
\draw[gp path] (9.442,0.985)--(9.442,1.165);
\draw[gp path] (9.442,7.631)--(9.442,7.451);
\node[gp node center] at (9.442,0.677) {$1.5$};
\draw[gp path] (12.975,4.677)--(12.795,4.677);
\node[gp node left] at (13.159,4.677) {$1/2$};
\draw[gp path] (12.975,5.653)--(12.795,5.653);
\node[gp node left] at (13.159,5.653) {$1-1/e$};
\draw[gp path] (12.975,6.486)--(12.795,6.486);
\node[gp node left] at (13.159,6.486) {$0.745$};
\draw[gp path] (1.872,7.631)--(1.872,0.985)--(12.975,0.985)--(12.975,7.631)--cycle;
\node[gp node center,rotate=-270] at (0.15,4.308) {competitive-ratio};
\node[gp node center] at (7.423,0.215) {$h/n$};

\node[gp node left] at (9.659, 5) {\small{Theorem~\ref{thm:half_upper_bound}}};
\node[gp node left] at (9.659, 4.4) {\small{Theorem~\ref{thm_half_lower_bound}}};

\node[gp node left] at (7, 5.45) {\small{Theorem~\ref{thm:long_his_sec}}};
\node[gp node left] at (9.659, 6.75) {\small{Corollary~\ref{cor:hill_kertz}}};

\node[gp node left, rotate=31] at (2, 4) {\small{Theorem~\ref{ro-bound}}};
\node[gp node left, rotate=31] at (2.3, 3.55) {\small{Theorem~\ref{secretary CR}}};

\node[gp node left, rotate=35] at (3.5, 3.15) {\small{Theorem~\ref{ao-upper-bound}}};
\node[gp node left, rotate=35] at (3.8, 2.7) {\small{Theorem \ref{thm_lower_bound_short}}};

\gpcolor{rgb color={0.745,0.745,0.745}}
\gpsetlinetype{gp lt plot 0}
\draw[gp path](6.919,0.985)--(6.919,7.631);
\draw[gp path](4.734,0.985)--(4.734,7.631);
\gpcolor{color=gp lt color border}
\node[gp node right] at (12.791,2.740) {\small{RO upper bound}};
\gpcolor{rgb color={0.000,0.000,0.000}}
\gpsetlinetype{gp lt plot 2}
\gpsetlinewidth{3.00}
\draw[gp path] (8.655,2.740)--(9.571,2.740);
\draw[gp path] (1.872,3.702)--(1.984,3.762)--(2.096,3.822)--(2.208,3.883)--(2.321,3.943)%
  --(2.433,4.003)--(2.545,4.064)--(2.657,4.124)--(2.769,4.185)--(2.881,4.245)--(2.994,4.305)%
  --(3.106,4.366)--(3.218,4.426)--(3.330,4.486)--(3.442,4.547)--(3.554,4.607)--(3.666,4.667)%
  --(3.779,4.728)--(3.891,4.788)--(4.003,4.849)--(4.115,4.909)--(4.227,4.969)--(4.339,5.030)%
  --(4.451,5.090)--(4.564,5.150)--(4.676,5.211)--(4.788,5.271)--(4.900,5.331)--(5.012,5.388)%
  --(5.124,5.443)--(5.237,5.496)--(5.349,5.547)--(5.461,5.596)--(5.573,5.643)--(5.685,5.689)%
  --(5.797,5.733)--(5.909,5.776)--(6.022,5.817)--(6.134,5.857)--(6.246,5.895)--(6.358,5.933)%
  --(6.470,5.969)--(6.582,6.004)--(6.695,6.038)--(6.807,6.071)--(6.919,6.104)--(7.031,6.135)%
  --(7.143,6.165)--(7.255,6.195)--(7.367,6.223)--(7.480,6.251)--(7.592,6.279)--(7.704,6.305)%
  --(7.816,6.331)--(7.928,6.356)--(8.040,6.381)--(8.152,6.405)--(8.265,6.428)--(8.377,6.451)%
  --(8.489,6.473)--(8.601,6.486)--(8.713,6.486)--(8.825,6.486)--(8.938,6.486)--(9.050,6.486)%
  --(9.162,6.486)--(9.274,6.486)--(9.386,6.486)--(9.498,6.486)--(9.610,6.486)--(9.723,6.486)%
  --(9.835,6.486)--(9.947,6.486)--(10.059,6.486)--(10.171,6.486)--(10.283,6.486)--(10.396,6.486)%
  --(10.508,6.486)--(10.620,6.486)--(10.732,6.486)--(10.844,6.486)--(10.956,6.486)--(11.068,6.486)%
  --(11.181,6.486)--(11.293,6.486)--(11.405,6.486)--(11.517,6.486)--(11.629,6.486)--(11.741,6.486)%
  --(11.853,6.486)--(11.966,6.486)--(12.078,6.486)--(12.190,6.486)--(12.302,6.486)--(12.414,6.486)%
  --(12.526,6.486)--(12.639,6.486)--(12.751,6.486)--(12.863,6.486)--(12.975,6.486);
\gpcolor{color=gp lt color border}
\node[gp node right] at (12.791,2.290) {\small{Algorithm~\ref{alg:ro_secretary}}};
\gpcolor{rgb color={0.000,0.000,0.000}}
\gpsetlinetype{gp lt plot 0}
\gpsetlinewidth{2.00}
\draw[gp path] (8.655,2.290)--(9.571,2.290);
\draw[gp path] (1.872,3.702)--(1.984,3.762)--(2.096,3.822)--(2.208,3.883)--(2.321,3.943)%
  --(2.433,4.003)--(2.545,4.063)--(2.657,4.123)--(2.769,4.182)--(2.881,4.241)--(2.994,4.301)%
  --(3.106,4.359)--(3.218,4.418)--(3.330,4.476)--(3.442,4.534)--(3.554,4.592)--(3.666,4.649)%
  --(3.779,4.706)--(3.891,4.762)--(4.003,4.819)--(4.115,4.874)--(4.227,4.929)--(4.339,4.984)%
  --(4.451,5.038)--(4.564,5.092)--(4.676,5.145)--(4.788,5.198)--(4.900,5.247)--(5.012,5.294)%
  --(5.124,5.337)--(5.237,5.377)--(5.349,5.413)--(5.461,5.448)--(5.573,5.479)--(5.685,5.507)%
  --(5.797,5.533)--(5.909,5.556)--(6.022,5.577)--(6.134,5.595)--(6.246,5.610)--(6.358,5.624)%
  --(6.470,5.634)--(6.582,5.642)--(6.695,5.648)--(6.807,5.652)--(6.919,5.653)--(7.031,5.653)%
  --(7.143,5.653)--(7.255,5.653)--(7.367,5.653)--(7.480,5.653)--(7.592,5.653)--(7.704,5.653)%
  --(7.816,5.653)--(7.928,5.653)--(8.040,5.653)--(8.152,5.653)--(8.265,5.653)--(8.377,5.653)%
  --(8.489,5.653)--(8.601,5.653)--(8.713,5.653)--(8.825,5.653)--(8.938,5.653)--(9.050,5.653)%
  --(9.162,5.653)--(9.274,5.653)--(9.386,5.653)--(9.498,5.653)--(9.610,5.653)--(9.723,5.653)%
  --(9.835,5.653)--(9.947,5.653)--(10.059,5.653)--(10.171,5.653)--(10.283,5.653)--(10.396,5.653)%
  --(10.508,5.653)--(10.620,5.653)--(10.732,5.653)--(10.844,5.653)--(10.956,5.653)--(11.068,5.653)%
  --(11.181,5.653)--(11.293,5.653)--(11.405,5.653)--(11.517,5.653)--(11.629,5.653)--(11.741,5.653)%
  --(11.853,5.653)--(11.966,5.653)--(12.078,5.653)--(12.190,5.653)--(12.302,5.653)--(12.414,5.653)%
  --(12.526,5.653)--(12.639,5.653)--(12.751,5.653)--(12.863,5.653)--(12.975,5.653);
\gpcolor{color=gp lt color border}
\node[gp node right] at (12.791,1.840) {\small{WC upper bound}};
\gpcolor{rgb color={0.000,0.000,0.000}}
\gpsetlinetype{gp lt plot 1}
\draw[gp path] (8.655,1.840)--(9.571,1.840);
\draw[gp path] (1.872,1.051)--(1.984,1.212)--(2.096,1.366)--(2.208,1.513)--(2.321,1.654)%
  --(2.433,1.790)--(2.545,1.920)--(2.657,2.046)--(2.769,2.166)--(2.881,2.282)--(2.994,2.394)%
  --(3.106,2.502)--(3.218,2.606)--(3.330,2.707)--(3.442,2.804)--(3.554,2.898)--(3.666,2.988)%
  --(3.779,3.076)--(3.891,3.161)--(4.003,3.244)--(4.115,3.324)--(4.227,3.401)--(4.339,3.476)%
  --(4.451,3.549)--(4.564,3.620)--(4.676,3.689)--(4.788,3.756)--(4.900,3.821)--(5.012,3.884)%
  --(5.124,3.945)--(5.237,4.005)--(5.349,4.064)--(5.461,4.120)--(5.573,4.176)--(5.685,4.230)%
  --(5.797,4.282)--(5.909,4.333)--(6.022,4.383)--(6.134,4.432)--(6.246,4.480)--(6.358,4.526)%
  --(6.470,4.572)--(6.582,4.616)--(6.695,4.660)--(6.807,4.702)--(6.919,4.744)--(7.031,4.744)%
  --(7.143,4.744)--(7.255,4.744)--(7.367,4.744)--(7.480,4.744)--(7.592,4.744)--(7.704,4.744)%
  --(7.816,4.744)--(7.928,4.744)--(8.040,4.744)--(8.152,4.744)--(8.265,4.744)--(8.377,4.744)%
  --(8.489,4.744)--(8.601,4.744)--(8.713,4.744)--(8.825,4.744)--(8.938,4.744)--(9.050,4.744)%
  --(9.162,4.744)--(9.274,4.744)--(9.386,4.744)--(9.498,4.744)--(9.610,4.744)--(9.723,4.744)%
  --(9.835,4.744)--(9.947,4.744)--(10.059,4.744)--(10.171,4.744)--(10.283,4.744)--(10.396,4.744)%
  --(10.508,4.744)--(10.620,4.744)--(10.732,4.744)--(10.844,4.744)--(10.956,4.744)--(11.068,4.744)%
  --(11.181,4.744)--(11.293,4.744)--(11.405,4.744)--(11.517,4.744)--(11.629,4.744)--(11.741,4.744)%
  --(11.853,4.744)--(11.966,4.744)--(12.078,4.744)--(12.190,4.744)--(12.302,4.744)--(12.414,4.744)%
  --(12.526,4.744)--(12.639,4.744)--(12.751,4.744)--(12.863,4.744)--(12.975,4.744);
\gpcolor{color=gp lt color border}
\node[gp node right] at (12.791,1.390) {\small{Algorithm~\ref{alg:adv-sec-short-hist},~\ref{alg:adv-sec-long-hist}}};
\gpcolor{rgb color={0.745,0.745,0.745}}
\gpsetlinetype{gp lt plot 0}
\draw[gp path] (8.655,1.390)--(9.571,1.390);
\draw[gp path] (1.872,0.985)--(1.984,1.146)--(2.096,1.299)--(2.208,1.447)--(2.321,1.588)%
  --(2.433,1.723)--(2.545,1.854)--(2.657,1.979)--(2.769,2.100)--(2.881,2.216)--(2.994,2.328)%
  --(3.106,2.436)--(3.218,2.540)--(3.330,2.640)--(3.442,2.737)--(3.554,2.831)--(3.666,2.922)%
  --(3.779,3.010)--(3.891,3.095)--(4.003,3.177)--(4.115,3.257)--(4.227,3.335)--(4.339,3.410)%
  --(4.451,3.483)--(4.564,3.554)--(4.676,3.622)--(4.788,3.689)--(4.900,3.754)--(5.012,3.817)%
  --(5.124,3.879)--(5.237,3.939)--(5.349,3.997)--(5.461,4.054)--(5.573,4.109)--(5.685,4.163)%
  --(5.797,4.216)--(5.909,4.267)--(6.022,4.317)--(6.134,4.366)--(6.246,4.413)--(6.358,4.460)%
  --(6.470,4.505)--(6.582,4.550)--(6.695,4.593)--(6.807,4.636)--(6.919,4.677)--(7.031,4.677)%
  --(7.143,4.677)--(7.255,4.677)--(7.367,4.677)--(7.480,4.677)--(7.592,4.677)--(7.704,4.677)%
  --(7.816,4.677)--(7.928,4.677)--(8.040,4.677)--(8.152,4.677)--(8.265,4.677)--(8.377,4.677)%
  --(8.489,4.677)--(8.601,4.677)--(8.713,4.677)--(8.825,4.677)--(8.938,4.677)--(9.050,4.677)%
  --(9.162,4.677)--(9.274,4.677)--(9.386,4.677)--(9.498,4.677)--(9.610,4.677)--(9.723,4.677)%
  --(9.835,4.677)--(9.947,4.677)--(10.059,4.677)--(10.171,4.677)--(10.283,4.677)--(10.396,4.677)%
  --(10.508,4.677)--(10.620,4.677)--(10.732,4.677)--(10.844,4.677)--(10.956,4.677)--(11.068,4.677)%
  --(11.181,4.677)--(11.293,4.677)--(11.405,4.677)--(11.517,4.677)--(11.629,4.677)--(11.741,4.677)%
  --(11.853,4.677)--(11.966,4.677)--(12.078,4.677)--(12.190,4.677)--(12.302,4.677)--(12.414,4.677)%
  --(12.526,4.677)--(12.639,4.677)--(12.751,4.677)--(12.863,4.677)--(12.975,4.677);
\gpcolor{color=gp lt color border}
\gpsetlinetype{gp lt border}
\gpsetlinewidth{1.00}
\draw[gp path] (1.872,7.631)--(1.872,0.985)--(12.975,0.985)--(12.975,7.631)--cycle;
\gpdefrectangularnode{gp plot 1}{\pgfpoint{1.872cm}{0.985cm}}{\pgfpoint{12.975cm}{7.631cm}}
\end{tikzpicture}
\end{figure}

Our algorithm for the secretary problem in the $\ROS$ model improves upon recent results by Correa et al.~\cite{DBLP:conf/ec/CorreaDFS19} for the prophet inequality in the  $\IID$ model in which the online player gets access to a limited number of training samples from the (unknown) distribution. Intuitively, one can see that any algorithm for the secretary problem in the $\ROS$ model provides at least the same performance guarantee for the $\IID$ prophet inequality with a sample. Proving this simple observation also implies a (global) upper bound of approximately $0.745$ for the secretary problem in the $\ROS$ model (Corollary~\ref{cor:hill_kertz}).

After we study the secretary problem in each model separately, we explore how well can a single algorithm perform in both models simultaneously. Clearly, any performance guarantee in the $\AOS$ model also applies to the $\ROS$ model. However, we show that high worst-case performance guarantee might limit the increase in performance in the $\ROS$ model over the $\AOS$ model. More concretely, we show that a $c$-competitive algorithm in the $\AOS$ model, is at most $(1-c)$-competitive in the $\ROS$ model. For $h \geq n$, our algorithm for the $\AOS$ model (Algorithm~\ref{alg:adv-sec-long-hist}) is $1/2$-competitive and therefore cannot be more than $1/2$-competitive in the $\ROS$ model. In Section~\ref{sec:combine} we describe an algorithm that is simultaneously $1/e$-competitive in the $\AOS$ model and $(1-1/e)$-competitive in the $\ROS$ model.

Although in this paper we focus on the secretary problem, we stress that our models and methods are by no means limited to it. Azar et al.~\cite{soda/AzarKW14} observed that many online algorithms in the random-order model are in fact \textit{order-oblivious}, meaning that they use the random-order only to obtain a random sample from the input. In our models such a sample is given ``for free'' and we study how the online player should act when this is the case. In particular, our approach can be used to adapt existing order-oblivious algorithms to the $\AOS$ model and analyze their performance in this model.

In the $\ROS$ model, algorithms for various online problems can be obtained by combining our approach for the secretary problem with existing algorithms for online problems in the random-order model. For example, following the approach of Kesselheim et al.~\cite{DBLP:conf/esa/KesselheimRTV13}, our results for the secretary problem in the $\ROS$ model can be extended in a straightforward manner to the weighted bipartite matching problem, with the same performance guarantees.

Studying the $\AOS$ and the $\ROS$ models allows to distinguish between the power gained by the random-order assumption, and the ability to obtain a random sample from the input. Furthermore, these models allow for simple and direct analysis, as the proofs in this paper suggests.

\subsection{Further Related Work}
Since the secretary problem was solved by Lindley~\cite{lindley1961dynamic} and Dynkin~\cite{dynkin1963optimum}, various online problems have been studied in the random-order model (e.g. \cite{approx/BabaioffIKK07, soda/BabaioffIK07, DBLP:conf/esa/KesselheimRTV13, DBLP:conf/soda/Kleinberg05}), many of which are motivated by the relation to online mechanism design.  Kesselheim et al.~\cite{DBLP:conf/stoc/KesselheimKN15} studied the secretary problem with non-uniform arrival order, pointing out that in some cases, weakening the random-order assumption is essential.

There are several models in the literature where some prior knowledge about the input is assumed, however, in general these models introduce alongside additional assumptions that in many cases are not justifiable. One such model is the (known) $\IID$ model in which we assume the input consists of $\IID$ random variables from a known distribution (see~\cite{DBLP:conf/ec/CorreaDFS19, mehta2013online} for example). In most cases however, the knowledge of the exact distribution is unattainable and the model might not be robust in face of inaccurate estimates.

In the more general settings of the prophet inequality, the assumption that the random variables are identically distributed is discarded, and each random variable is allowed to be drawn from a different known distribution (for a recent survey see~\cite{DBLP:journals/sigecom/CorreaFHOV18}). The single sample prophet inequality (see~\cite{soda/AzarKW14} for example) relaxes the assumption that the distributions are known. Instead, the algorithm gets to sample one input sequence from the corresponding distributions for the purpose of learning. A main drawback of all these variations of the prophet inequality model is the strong assumption that the random variables are independent, which might not be realistic.

\subsection{Organization of the paper}
In Section~\ref{section:adv-order} we study the secretary problem in the $\AOS$ model. We begin by establishing formal definitions and notations, then we prove the lower and upper bounds for the case where $h < n$, and subsequently for the case where $h \geq n$. In Section~\ref{section:random-order} we study the secretary problem in the $\ROS$ model. Here too, we start with a formal definition of the problem in the $\ROS$ model, and prove lower and upper bounds. Then, we discuss the relations to the $\IID$ prophet inequality with a sample. Finally, in Section~\ref{sec:combine} we prove the inherent limitation of online algorithms when considered both in the $\AOS$ and the $\ROS$ models simultaneously, and discuss algorithms that achieve the best possible performance under this constraint.

\section{Adversarial Order}\label{section:adv-order}
We define the \textit{adversarial-order secretary problem with a sample of size $h$ ($h$-AO-SP)} as the following game between an online player and an adversary:
\begin{enumerate}
\item An adversary picks a set $\CS{C}=\left\{ \alpha_{1},\dots,\alpha_{n+h}\right\} $
of $n+h$ candidates. Each candidate $\alpha_{i}$ has a value $v\left(\alpha_{i}\right)\in\mathbb{R}_{\geq0}$.\footnote{Without loss of generality, one can think of the values as distinct. When this is not the case, we assume a consistent tie-breaker is available so that $\CS{C}$ is totally ordered. Throughout this paper, when candidates are compared by their value, we implicitly assume that this tie-breaker is applied.} For simplicity of notation, we use $\alpha_i$ to refer both the candidate and its value.

\item A subset $\RS{H} \subseteq \CS{C}$ of cardinality $h$, which we call the \textit{history set}, is drawn uniformly at random. $\RS{H}$ and $n$ are given to the online player upfront. The \textit{online set} denoted by $\RS{O}$ is the set of remaining candidates, i.e., $\RS{O}=\CS{C} \setminus \RS{H}$.
\item The adversary picks an ordering of the candidates in $\RS{O}$, we let $c_1,\dots, c_n$ denote the candidates in the chosen adversarial order. 
\item The candidates $c_1,\dots,c_n$ are presented one by one to the online player. After every arrival, the online player has to make an immediate and irrevocable decision whether to accept or reject the current candidate. If she accepts a candidate, the process terminates.
\end{enumerate}
The goal is to maximize the expected value of the accepted candidate compared to the expected value of the best candidate in $\RS{O}$. Let $\text{ALG}$ be an algorithm for the online player. For an instance $\mathcal{I}=\left(\CS{C},h\right)$, let $\text{ALG\ensuremath{\left(\mathcal{I}\right)}}$ be the random variable that gets the value of the candidate chosen by $\text{ALG}$, and let $\text{OPT\ensuremath{\left(\mathcal{I}\right)}}$ be the random variable that gets the maximum value of a candidate in $\RS{O}$. We say that $\ALG$ is \emph{$c$-competitive} if for every instance $\mathcal{I}$ we have $\E{ \text{ALG}\left(\mathcal{I}\right)}\geq c \cdot \E{\mbox{OPT}\left(\mathcal{I}\right)}$,
where the expectation is taken over the random choice of $\RS{H}\subseteq \CS{C}$ and the internal randomness of $\text{ALG}$. We write $\ALG$ and $\OPT$ instead of $\ALG(\mathcal{I})$ and $\OPT(\mathcal{I})$ when $\mathcal{I}$ is clear from the context.

\subsection{Short History}
For $h \leq n - 1$, we show that Algorithm~\ref{alg:adv-sec-short-hist} is optimal with competitive-ratio of $h/\left(n+h-1\right)$.
\begin{algorithm}
\label{alg:adv-sec-short-hist}
\caption{$h$-AO-SP for $h \leq n - 1$}
$T_{0}\leftarrow \RS{H}$\;

\For {candidate $c_{\ell}$ that arrives at round $\ell$ } {
    $T_{\ell} \leftarrow T_{\ell-1}\cup\left\{ c_{\ell}\right\}$\;
    \If {$c_{\ell}=\max\left\{T_\ell\right\}$ } {
	   accept $c_{\ell}$ and terminate\;
    }
}
\end{algorithm}

Let $\alpha_1,\dots,\alpha_{n+h}$ denote the candidates sorted by their value in decreasing order, i.e.,  $\alpha_1 > \alpha_2 > \dots > \alpha_{n+h}$. Observe that when $\alpha_1$ is in the online set and $\alpha_2$ is in the history set, the algorithm accepts $\alpha_1$ no matter what the adversary does. We start by extending this observation in a way that will allow us to account for the profit of the algorithm in case $\alpha_2$ is also in the online set. Denote by $\ES{H}_{i}$ the event that the top $i$ candidates are in the history set, that is, $\alpha_1,\dots,\alpha_i \in \RS{H}$. Also, denote by $\ES{O}_{i}$ the event that the top $i$ candidates are in the online set, namely, $\alpha_1,\dots,\alpha_i \in \RS{O}$.

\begin{lemma}~\label{Lemma adv}
For $1 < i \leq h \leq n-1$, $\E{\given{\ALG}{\ES{O}_{i}}} 
    \geq \frac{h}{n}\cdot\E{ \given{ \OPT }{ \ES{H}_{i-1} }}$.
\end{lemma}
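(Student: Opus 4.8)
The plan is to first determine exactly which candidate Algorithm~\ref{alg:adv-sec-short-hist} accepts conditioned on $\ES{O}_i$, and then to compare the two conditional expectations coefficient by coefficient.

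First I would analyze the algorithm on the event $\ES{O}_i$. Since $T_0=\RS{H}$, the algorithm accepts the first online candidate whose value exceeds $\max\RS{H}$: every earlier candidate is dominated by some element of $\RS{H}\subseteq T_\ell$ and so is not equal to $\max T_\ell$, whereas the first candidate beating $\max\RS{H}$ automatically beats all previously seen online candidates and is therefore accepted. Let $\alpha_m$ be the largest history candidate, i.e.\ $m=\min\{j:\alpha_j\in\RS{H}\}$; on $\ES{O}_i$ we have $m\ge i+1$. The candidates exceeding $\alpha_m$ are precisely $\alpha_1,\dots,\alpha_{m-1}$, and all of them lie in $\RS{O}$ by the choice of $m$. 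Hence, for any adversarial order, the accepted candidate is the first to arrive among $\alpha_1,\dots,\alpha_{m-1}$, so $\ALG\ge\alpha_{m-1}$ holds pointwise on $\{m=k+1\}\cap\ES{O}_i$. Taking expectations gives
\[
  \E{\given{\ALG}{\ES{O}_i}}\ \ge\ \sum_{k\ge i}\alpha_k\,\Pr\!\left[\,m=k+1 \mid \ES{O}_i\,\right],
\]
while conditioning on $\ES{H}_{i-1}$ forces $\alpha_1,\dots,\alpha_{i-1}\in\RS{H}$, so $\OPT=\alpha_j$ with $j=\min\{k\ge i:\alpha_k\in\RS{O}\}$ and $\E{\given{\OPT}{\ES{H}_{i-1}}}=\sum_{k\ge i}\alpha_k\,\Pr\!\left[\,j=k \mid \ES{H}_{i-1}\,\right]$.

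Because every $\alpha_k\ge 0$, it then suffices to prove the inequality term by term: for all $k\ge i$,
\[
  \Pr\!\left[\,m=k+1 \mid \ES{O}_i\,\right]\ \ge\ \frac{h}{n}\,\Pr\!\left[\,j=k \mid \ES{H}_{i-1}\,\right].
\]
Both probabilities are elementary for a uniformly random $\RS{H}$; expanding them sequentially (equivalently, as ratios of binomial coefficients) and writing $N=n+h$ I would get
\[
  \Pr\!\left[\,m=k+1 \mid \ES{O}_i\,\right]=\Bigl(\prod_{t=i+1}^{k}\tfrac{n-t+1}{N-t+1}\Bigr)\tfrac{h}{N-k},\qquad
  \Pr\!\left[\,j=k \mid \ES{H}_{i-1}\,\right]=\Bigl(\prod_{t=i}^{k-1}\tfrac{h-t+1}{N-t+1}\Bigr)\tfrac{n}{N-k+1}.
\]
For $k>h+1$ the right-hand probability vanishes (one cannot place $\alpha_1,\dots,\alpha_{k-1}$ in a history set of size $h$), so the inequality is trivial; for $i\le k\le h+1$ every factorial appearing below is that of a nonnegative integer.

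The remaining step, which I expect to be the crux, is the purely algebraic inequality. Cancelling the factor $h$ and clearing factorials reduces the claim to
\[
  (N-i+1)\,\frac{(n-i)!}{(n-k)!}\ \ge\ (N-k)\,\frac{(h-i+1)!}{(h-k+1)!},
\]
i.e.\ to $(N-i+1)\prod_{m=n-k+1}^{n-i} m\ \ge\ (N-k)\prod_{m=h-k+2}^{h-i+1} m$. Both products have exactly $k-i$ factors, and I would pair them in decreasing order: the hypothesis $h\le n-1$ makes each factor of the left product exceed the matching factor of the right product (their difference is the constant $n-h-1\ge 0$), while $i\le k$ gives $N-i+1>N-k$. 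Multiplying these dominations yields the inequality and hence the lemma. The main care needed is the bookkeeping of the supports of $k$ and keeping every factorial argument nonnegative, which is exactly the role of the assumption $h\le n-1$.
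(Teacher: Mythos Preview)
Your proof is correct and takes a genuinely different route from the paper's. The paper proves the lemma by \emph{downward induction} on $i$: for the base case $i=h$ it observes that on $\ES{O}_h\cap\{\alpha_{h+1}\in H\}$ the algorithm gets at least $\alpha_h$, and in the inductive step it conditions on whether $\alpha_{i+1}$ lies in $H$ or in $O$, applies the induction hypothesis to the $\ES{O}_{i+1}$ branch, and checks that the resulting linear combination dominates the analogous decomposition of $\E{\OPT\mid \ES{H}_{i-1}}$. No explicit closed forms for the two expectations are ever written down.

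Your approach instead unrolls both expectations completely: you identify the random index $m=\min\{j:\alpha_j\in H\}$, use the pointwise bound $\ALG\ge \alpha_{m-1}$, and reduce the lemma to the term-by-term inequality $\Pr[m=k+1\mid\ES{O}_i]\ge \tfrac{h}{n}\Pr[j=k\mid\ES{H}_{i-1}]$, which you then verify by a direct product comparison. This is more computational, but it has two pleasant features: it avoids induction altogether, and it makes the role of the hypothesis $h\le n-1$ completely transparent (it is exactly the constant gap $n-h-1\ge 0$ between the paired factors). The paper's inductive argument, by contrast, is shorter and hides these combinatorics inside the step ``both terms are non-negative.'' Both proofs rest on the same pointwise observation about the algorithm; they differ only in how they aggregate it.
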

\begin{proof}
By downwards induction on $i$. For $i=h$, conditioned on $\ES{H}_{h-1}$, the best candidate in $O$ has value of at most $\alpha_{h}$. On the other hand, observe that conditioned on $\ES{O}_{h}$ and $\alpha_{h+1} \in \RS{H}$, by the definition of the algorithm, it accepts one of $\alpha_{1},\dots,\alpha_{h}$, therefore, it gains a profit of at least $\alpha_{h}$. We have
\[
\E{\given{\ALG}{\ES{O}_{h}}} \geq \E{\given{\ALG}{\ES{O}_{h}, \alpha_{h+1} \in \RS{H}}}\Pr\left[ \given{\alpha_{h+1} \in \RS{H}} {\ES{O}_h} \right] = \frac{h}{n}\alpha_h.
\]
Assume that the lemma holds for $i+1$. We have
\begin{align}
\begin{split}
        \E{\given{\ALG}{\ES{O}_{i}}} 
        &\geq \alpha_{i} \Pr\left[\given{\alpha_{i+1} \in \RS{H}}{\ES{O}_{i}}\right] + \E{\given{\ALG}{\ES{O}_{i+1}}}
            \Pr\left[\given{\alpha_{i+1} \in \RS{O}}{\ES{O}_{i}}\right]\\  
        &\geq \alpha_{i} \Pr\left[\given{\alpha_{i+1} \in \RS{H}}{\ES{O}_{i}}\right] 
            + \frac{h}{n}\E{\given{\OPT}{\ES{H}_{i}}}
            \Pr\left[\given{\alpha_{i+1} \in \RS{O}}{\ES{O}_{i}}\right]\\
        &= \alpha_{i} \cdot \frac{h}{n+h-i} + \E{\given{\OPT}{\ES{H}_{i}}} \cdot \frac{h}{n}\cdot\frac{n-i}{n+h-i},
\end{split}~\label{eq:ALG_O_i}
\end{align}
where the second inequality in this derivation follows from the induction hypothesis. On the other hand, we have
\begin{align}
\begin{split}
    \E{ \given{ \OPT }{ \ES{H}_{i-1} }} 
    &= \alpha_{i} \Pr\left[\given{\alpha_{i} \in \RS{O}}{\ES{H}_{i-1}}\right] + \E{ \given{ \OPT }{ \ES{H}_{i} }}
         \Pr\left[\given{\alpha_{i} \in \RS{H}}{\ES{H}_{i-1}}\right] \\
    &= \alpha_{i} \cdot \frac{n}{n+h-(i-1)} + \E{ \given{ \OPT }{ \ES{H}_{i} }} \cdot \frac{h-(i-1)}{n+h-(i-1)}.
\end{split}~\label{eq:OPT_H_i-1}
\end{align}
By subtracting~\eqref{eq:OPT_H_i-1} multiplied by $h/n$ from~\eqref{eq:ALG_O_i},
the difference $\E{\given{\ALG}{\ES{O}_{i}}} - \frac{h}{n}\E{ \given{ \OPT }{ \ES{H}_{i-1} }}$ is lower bounded by
\begin{align*}
& \alpha_{i} 
\left( 
    \frac{h}{n+h-i} 
    - \frac{h}{n+h-i+1}
\right)
+
\frac{h}{n} \cdot \E{\given{\OPT}{\ES{H}_{i}}}
\left(
    \frac{n-i}{n+h-i} - \frac{h-i+1}{n+h-i+1}
\right).
\end{align*}
Since $h \leq n -1$, both terms are non-negative and the lemma follows.
\end{proof}
Having Lemma~\ref{Lemma adv} at hand, we are ready to prove the competitive-ratio of Algorithm~\ref{alg:adv-sec-short-hist}
\begin{theorem}\label{thm_lower_bound_short}
For $h\leq n-1$, Algorithm~\ref{alg:adv-sec-short-hist} is $\frac{h}{n+h-1}$-competitive.
\end{theorem}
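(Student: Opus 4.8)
The plan is to reduce the theorem to the $i=2$ instance of Lemma~\ref{Lemma adv}, by decomposing both $\E{\ALG}$ and $\E{\OPT}$ according to whether the two best candidates $\alpha_1,\alpha_2$ land in the history set or the online set. Since all values are non-negative, I would first discard the contribution of the event $\ES{H}_1$ (that is, $\alpha_1\in\RS{H}$) and keep only $\E{\ALG}\geq \E{\given{\ALG}{\ES{O}_1}}\Pr[\ES{O}_1]$. Next I would split the event $\ES{O}_1$ according to the location of $\alpha_2$: on $\ES{O}_1\cap\{\alpha_2\in\RS{H}\}$ the algorithm is forced to accept $\alpha_1$ (exactly the observation stated just before the lemma), while the complementary part of $\ES{O}_1$ is precisely $\ES{O}_2$. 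This yields $\E{\ALG}\geq \alpha_1\Pr[\ES{O}_1,\alpha_2\in\RS{H}] + \E{\given{\ALG}{\ES{O}_2}}\Pr[\ES{O}_2]$.

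On the $\ES{O}_2$ term I would invoke Lemma~\ref{Lemma adv} with $i=2$ to replace $\E{\given{\ALG}{\ES{O}_2}}$ by $\frac{h}{n}\E{\given{\OPT}{\ES{H}_1}}$. It then remains to insert the elementary hypergeometric probabilities $\Pr[\ES{O}_1,\alpha_2\in\RS{H}]=\frac{n}{n+h}\cdot\frac{h}{n+h-1}$ and $\Pr[\ES{O}_2]=\frac{n}{n+h}\cdot\frac{n-1}{n+h-1}$. After factoring out $\frac{h}{n+h-1}$, the two coefficients collapse nicely and give $\E{\ALG}\geq \frac{h}{n+h-1}\bigl(\frac{n}{n+h}\alpha_1+\frac{n-1}{n+h}\E{\given{\OPT}{\ES{H}_1}}\bigr)$.

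In parallel I would expand $\E{\OPT}$ by conditioning on $\ES{O}_1$ versus $\ES{H}_1$; since $\OPT=\alpha_1$ whenever $\alpha_1\in\RS{O}$, this gives the clean identity $\E{\OPT}=\frac{n}{n+h}\alpha_1+\frac{h}{n+h}\E{\given{\OPT}{\ES{H}_1}}$. Comparing the two expressions term by term, the coefficients of $\alpha_1$ agree, so the desired inequality $\E{\ALG}\geq\frac{h}{n+h-1}\E{\OPT}$ reduces to $(n-1)\,\E{\given{\OPT}{\ES{H}_1}}\geq h\,\E{\given{\OPT}{\ES{H}_1}}$. This holds because $\E{\given{\OPT}{\ES{H}_1}}\geq 0$ and $h\leq n-1$, which is exactly where the hypothesis is used, mirroring its role at the end of the lemma.

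I expect the main obstacle to be purely organizational: keeping the conditional decompositions aligned so that the $\alpha_1$ coefficients cancel and the entire gap is carried by the single term $(n-1-h)\,\E{\given{\OPT}{\ES{H}_1}}$. A secondary point to verify is the boundary case $h=1$, for which Lemma~\ref{Lemma adv} is vacuous and the $\ES{O}_2$ term must be bounded by hand; there the same base-case reasoning as in the lemma applies, since conditioning further on $\alpha_3\in\RS{H}$ forces the algorithm to take one of $\alpha_1,\alpha_2$ and gives $\E{\given{\ALG}{\ES{O}_2}}\geq \frac{1}{n-1}\alpha_2\geq \frac{1}{n}\E{\given{\OPT}{\ES{H}_1}}$, so the argument closes unchanged.
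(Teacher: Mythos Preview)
Your proposal is correct and follows essentially the same route as the paper: decompose $\E{\ALG}$ on the location of $\alpha_1,\alpha_2$, invoke Lemma~\ref{Lemma adv} at $i=2$, write $\E{\OPT}$ via the same split, and compare coefficients using $h\leq n-1$. The only cosmetic difference is which common factor you pull out; the paper factors $\frac{n}{n+h}\cdot\frac{h}{n+h-1}$ and rewrites $\E{\OPT}=\frac{n}{n+h}\bigl(\alpha_1+\frac{h}{n}\E{\given{\OPT}{\ES{H}_1}}\bigr)$, whereas you factor $\frac{h}{n+h-1}$ directly---the two are algebraically identical. Your explicit treatment of $h=1$ (where the hypothesis $1<i\leq h$ of Lemma~\ref{Lemma adv} is vacuous) is a valid observation that the paper leaves implicit.
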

\begin{proof}
Using Lemma~\ref{Lemma adv} for $i=2$, we get that
\begin{align*}
    \E{\ALG }
                &\geq \E{ \given{\ALG }{ \alpha_{1} \in \RS{O}, \alpha_{2} \in \RS{H} }}\Pr\left[ \alpha_{1} \in \RS{O}, \alpha_{2} \in \RS{H}\right] + \E{\given{\ALG}{ \alpha_{1}, \alpha_{2} \in \RS{O} }}\Pr\left[ \alpha_{1}, \alpha_{2} \in \RS{O}\right] \\ 
                &\geq  \frac{n}{n+h} \cdot \frac{h}{n+h-1}\alpha_{1} + \frac{n}{n+h} \cdot \frac{n-1}{n+h-1} \cdot \frac{h}{n}\E{ \given{ \OPT }{ \alpha_1 \in \RS{H} }}\\
                &\geq \frac{n}{n+h} \cdot \frac{h}{n+h-1}\left(  \alpha_{1} + \frac{h}{n}\E{ \given{ \OPT }{ \alpha_1 \in \RS{H} }}\right),
\end{align*}
where the last inequality follows from the assumption that $h \leq n-1$.
On the other hand
\begin{align*}
    \E{\OPT }
                &=  \E{ \given{ \OPT }{ \alpha_1 \in \RS{O} }}\Pr\left[ \alpha_{1} \in O\right] + \E{ \given{ \OPT }{ \alpha_1 \in \RS{H} }}\Pr\left[ \alpha_{1} \in \RS{H}\right] \\ 
                &= \frac{n}{n+h}\alpha_1 + \frac{h}{n+h}\E{ \given{ \OPT }{ \alpha_1 \in \RS{H} }}\\
                &= \frac{n}{n+h}\left(  \alpha_{1} + \frac{h}{n}\E{ \given{ \OPT }{ \alpha_1 \in \RS{H} }} \right). \qedhere
\end{align*}
\end{proof}
We now prove a matching upper bound.
\begin{theorem}~\label{ao-upper-bound}
For $h \geq 1$, any online algorithm for the $h$-AO-SP has a competitive-ratio of at most $\frac{h}{n+h-1}$. 
\end{theorem}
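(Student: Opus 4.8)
The plan is to prove this matching upper bound via Yao's principle, exhibiting a distribution over instances on which every (randomized) algorithm attains a ratio of at most $\frac{h}{n+h-1}$. The construction would have two ingredients. First, I make the values \emph{super-increasing}: the sorted values $\alpha_1>\alpha_2>\dots>\alpha_{n+h}$ satisfy $\alpha_i \gg \sum_{j>i}\alpha_j$, so that $\OPT=\max\RS{O}$ is dominated by the largest online element. Letting the separation tend to infinity, $\E{\OPT}\to \alpha_1\Pr[\alpha_1\in\RS{O}]=\alpha_1\cdot\frac{n}{n+h}$ and $\E{\ALG}\to\alpha_1\Pr[\text{$\ALG$ accepts }\alpha_1]$, since the value of any accepted item other than the global maximum becomes negligible. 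Hence the competitive ratio converges to $\Pr[\text{$\ALG$ accepts }\alpha_1\mid\alpha_1\in\RS{O}]$, and it suffices to show that no algorithm accepts $\alpha_1$, conditioned on $\alpha_1\in\RS{O}$, with probability exceeding $\frac{h}{n+h-1}$. Second -- and this is essential, since against a \emph{fixed} value set an algorithm could reconstruct $\RS{O}$ from $\RS{H}$ and simply wait for its maximum -- I randomize the scale (a random global shift together with independent, huge, random gaps between consecutive values). This keeps the instance super-increasing while ensuring that the algorithm's observations carry only the \emph{relative ranks} of the elements seen so far ($\RS{H}$ together with the online arrivals up to the current step); all absolute-magnitude information is washed out in the limit.

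With this reduction, the problem becomes purely combinatorial: a uniformly random $h$-subset $\RS{H}$ of $n+h$ totally ordered elements is revealed as relative ranks, the complementary set $\RS{O}$ is presented in an adversarial order that may depend on the true ranks, the algorithm observes only relative ranks and must stop on one online element, and we must bound the probability it stops on $\alpha_1$. I would specify the adversary's order so as to neutralize the algorithm's knowledge of the count $n$: presenting $\RS{O}$ in increasing order is useless for the adversary (the algorithm would take the last element, which is $\max\RS{O}$), so instead the order is chosen to place the elements exceeding $\max\RS{H}$ into a secretary-like configuration in which each is a running record but the algorithm cannot tell whether a still-larger online element is yet to come. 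Existence of a single fixed (possibly randomized) order that is simultaneously hard for \emph{all} algorithms is guaranteed by the minimax theorem, and for the analysis it can be described explicitly in terms of where the top elements fall between $\RS{H}$ and $\RS{O}$.

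I would then establish $\Pr[\text{accept }\alpha_1\mid\alpha_1\in\RS{O}]\le\frac{h}{n+h-1}$ by induction, in the same spirit as Lemma~\ref{Lemma adv} but in the reverse direction, tracking the conditional acceptance probabilities under the events $\ES{H}_i$ and $\ES{O}_i$. The inductive step should mirror the recursions~\eqref{eq:ALG_O_i} and~\eqref{eq:OPT_H_i-1}: conditioning on whether $\alpha_{i+1}$ lands in $\RS{H}$ or in $\RS{O}$ splits the probability along the factors $\frac{h}{n+h-i}$ and $\frac{n-i}{n+h-i}$ that already appear there, and the induction should close because the adversary's order forces the algorithm to ``pay'' for each additional top element that happens to be online.

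The step I expect to be the main obstacle is exactly this inductive bound, for a subtle reason: the algorithm does \emph{not} know whether the top surviving elements lie in $\RS{H}$ or in $\RS{O}$, so its decision rule must be consistent across scenarios it cannot distinguish. A naive symmetrization -- arguing that among the $k$ top elements that are online the algorithm selects $\alpha_1$ with probability $1/k$ -- is too weak; for $n=2,\,h=1$ it yields $3/4$ rather than the required $1/2$, because the adversary gains extra leverage precisely by coupling the indistinguishable scenarios against the algorithm. Capturing this coupling is the crux, and I would handle it by treating the algorithm's behavior as a function of the relative-rank pattern it observes and optimizing the adversary's order against the resulting (bilinear) objective, i.e.\ invoking LP/minimax duality, which is what pins the value of the game to exactly $\frac{h}{n+h-1}$. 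The remaining care lies in justifying the scale-randomization limit rigorously, for which the finiteness, hence compactness, of the space of relative-rank decision rules makes the convergence uniform.
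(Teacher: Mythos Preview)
Your high-level framework matches the paper's: make the values super-increasing so that the ratio collapses to $\Pr[\text{ALG accepts }\alpha_1\mid \alpha_1\in O]$, reduce the algorithm to one that sees only relative ranks, and have the adversary present the elements of $O$ exceeding $\max H$ as successive records. Two points of divergence are worth noting.

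First, the paper achieves the rank-only reduction differently and more cleanly. Rather than randomizing the scale and taking a limit, it invokes a Ramsey-type result of Moran, Snir and Manber: for any fixed algorithm (viewed as finitely many functions $P_1,\dots,P_n$), there is an infinite set $V\subseteq\mathbb{N}$ on which all $P_i$ depend only on the order type of their inputs. The hard instance is then built from values in $V$, so no limiting argument or compactness is needed.

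Second, and more important, the step you flag as ``the main obstacle'' is in fact immediate once the adversary's order is fixed, and your planned induction and LP/minimax fallback are unnecessary. The adversary reveals $U=\{c\in O:c>\max H\}$ in increasing order. Conditioned on $c_i=\alpha_1$ (equivalently $\alpha_1,\dots,\alpha_i\in O$ and $\alpha_{i+1}\in H$), the sequence observed through any step $j\le i$ consists of $h+j$ values in increasing order; by order-invariance, the acceptance probability at step $j$ given the algorithm has reached step $j$ is some fixed $p_j$ that does \emph{not} depend on $i$. Hence
\[
\Pr[\text{accept }\alpha_1\mid c_i=\alpha_1]=p_i\prod_{j<i}(1-p_j),
\qquad
\Pr[c_i=\alpha_1]\le \frac{h}{n+h}\cdot\frac{n}{n+h-1}\ \text{for every }i,
\]
and therefore
\[
\Pr[\text{accept }\alpha_1]\;\le\;\frac{h}{n+h}\cdot\frac{n}{n+h-1}\sum_{i=1}^{n}p_i\prod_{j<i}(1-p_j)\;\le\;\frac{h}{n+h}\cdot\frac{n}{n+h-1},
\]
since the sum is itself a probability. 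Dividing by $\Pr[\alpha_1\in O]=\tfrac{n}{n+h}$ gives exactly $\tfrac{h}{n+h-1}$. The ``coupling'' you worry about is precisely what is captured by the independence of $p_j$ from $i$; no induction in the style of Lemma~\ref{Lemma adv}, no minimax, and no randomized adversary order are required.
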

\begin{proof}
Fix $n \in \mathbb{N}$ and let $\varepsilon > 0$. Let $\ALG$ be a $c$-competitive algorithm for the $h$-AO-SP. We can view $\ALG$ restricted to inputs of length $n+h$ as a family of functions, $P_1,\dots,P_n$, $P_i:\mathbb{R}_{\geq 0}^{h+i} \rightarrow [0,1]$ where $P_i(x_1,\dots,x_{h+i})$ is the probability that $\ALG$ accepts $x_{h+i}$ conditioned on reaching to round $i$ and receiving $H=\left\{x_1,\dots,x_h\right\}$ and $c_1=x_{h+1},\dots,c_{i}=x_{h+i}$ as input. Since $\ALG$ receives the elements of $H$ unordered, we may assume that the first $h$ inputs to $P_i$ are ordered in increasing order. We call two sequences $x_1,\dots, x_k$ and $y_1,\dots,y_k$ \textit{order-equivalent} if for all $1\leq i,j \leq k$, $x_i < x_j \iff y_i < y_j$. We call the equivalence class of $x_1,\dots,x_k$ its \textit{order-type}. We say that $\ALG$ is \textit{order-invariant} on a set $\CS{V}$ if for all $i\in [n]$, the value of $P_i$ on a sequence of $h+i$ elements from $\CS{V}$ depends only on the order-type of the sequence.

By Moran et al.~\cite{DBLP:journals/jacm/MoranSM85} (Corollary 3.4), there is an infinite set $\CS{V} \subseteq \mathbb{N}$ such that $\ALG$ is order-invariant on $\CS{V}$. We construct an instance $\mathcal{I}=(\CS{C},h)$ where $\CS{C}= \left\{\alpha_1,\dots, \alpha_{n+h}\right\} \subseteq \CS{V}$, $\alpha_2 > \alpha_3 > \dots > \alpha_{n+h}$ and $\alpha_1 > \alpha_2 / \varepsilon'$ where $\varepsilon' = \frac{n}{n+h} \varepsilon$. Such instance exists since $\CS{V} \subseteq \mathbb{N}$ is infinite. Now every choice of an element that is not $\alpha_1$ results in a profit of at most $\varepsilon' \alpha_1$. Let $\RS{U}$ denote the set of candidates with value higher than the best candidate in $\RS{H}$. Consider an adversary who first reveals the elements of $\RS{U}$ in increasing order. Observe that for all $i \in [n]$, we have $c_i = \alpha_1$ if and only if $\alpha_1, \dots, \alpha_i \in O$ and $\alpha_{i+1} \in H$, therefore
\begin{align}
\Pr[c_i = \alpha_1] = \frac{h}{n+h}\cdot \frac{n}{n+h-1} \cdot \frac{n-1}{n+h-2} \cdots \frac{n-(i-1)}{n+h-i} \leq \frac{h}{n+h}\cdot \frac{n}{n+h-1}.~\label{eq:p_ci_is_ai}
\end{align}
By our assumption that the elements of $H$ are ordered in increasing order, the entire prefix of the observed sequence until $\alpha_1$ is in increasing order, hence, it is order-equivalent to $\alpha_{i+h},\dots,\alpha_1$ for some $i \in [n]$. Since $\ALG$ is order-invariant on $\CS{V}$, the probability of accepting $c_j$ conditioned on reaching round $j$ and on $c_i = \alpha_1$ for some $i \geq j$ is $p_j = P_j(\alpha_{j+h}, \dots, \alpha_1)$. Therefore, conditioned on $c_i = \alpha_1$, the probability of reaching round $i$ is $\prod_{j=1}^{i-1} \left(1-p_j\right)$ independently of $c_1,\dots,c_{i-1}$. We get that for all $i\in [n]$
\begin{align}
    \Pr \left[ \given{ \ALG \text{ accepts } \alpha_1}{c_i = \alpha_1} \right] 
    = p_i \prod_{j=1}^{i-1} (1 - p_j).~\label{eq:p_accept_ci_ai}
\end{align}
Hence, 
\begin{align}
\begin{split}
\Pr\left[\ALG \text{ accepts } \alpha_1\right] &=
\sum_{i=1}^{n} \Pr \left[\given{\ALG \text{ accepts } \alpha_1}{c_i = \alpha_1} \right] \cdot \Pr\left[c_i = \alpha_1\right] \\
&\leq \sum_{i=1}^{n} p_i \prod_{j=1}^{i-1} (1 - p_j)\cdot \frac{h}{n+h}\cdot \frac{n}{n+h-1} \\
&= \frac{h}{n+h}\cdot \frac{n}{n+h-1}\sum_{i=1}^{n} p_i \prod_{j=1}^{i-1} (1 - p_j)
\leq \frac{h}{n+h}\cdot \frac{n}{n+h-1}.
\end{split} \notag
\end{align}
The first inequality in this derivation follows from Equations~\eqref{eq:p_ci_is_ai} and~\eqref{eq:p_accept_ci_ai}, and the second inequality is due to the fact that $\sum_{i=1}^{n} p_i \prod_{j=1}^{i-1} (1 - p_j)$ is a probability of some event, and as such, upper bounded by $1$. We get that
\begin{align*}
    \E{\ALG} 
    &\leq \alpha_1 \Pr[\ALG \text{ accepts } \alpha_1] + \varepsilon' \alpha_1 \\
    &\leq \alpha_1 \left(  \frac{h}{n+h}\cdot \frac{n}{n+h-1} + \varepsilon' \right) \\
    &= \alpha_1 \frac{n}{n+h} \left( \frac{h}{n+h-1} + \varepsilon \right),
\end{align*}
while $\E{\OPT} \geq \alpha_1 \frac{n}{n+h}$. Overall $\E{\ALG} \leq \left(\frac{h}{n+h-1} + \varepsilon \right)\E{\OPT}$. Since it is true for any $\varepsilon>0$, the theorem follows.

\end{proof}

\subsection{Long History}
For the case $h \geq n$, we describe an optimal $1/2$-competitive algorithm.
\begin{algorithm}
\label{alg:adv-sec-long-hist}
\caption{$h$-AO-SP for $h \geq n$}
draw a subset $\RS{T} \subseteq \RS{H}$ of cardinality $n-1$ uniformly at random\;
\For {candidate $c_{\ell}$ that arrives at round $\ell$ } {
    \If {$c_{\ell} > \max\left\{\RS{T}\right\}$ } {
	   accept $c_{\ell}$ and terminate\;
    }
}
\end{algorithm}

\begin{theorem}~\label{thm_half_lower_bound}
Algorithm~\ref{alg:adv-sec-long-hist} is $1/2$-competitive.
\end{theorem}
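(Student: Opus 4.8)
The plan is to first strip away the adversary's control over the arrival order, and then exploit an exact symmetry between $\RS{T}$ and $\RS{O}$.

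\emph{Removing the order.} For any realization of $\RS{H},\RS{T},\RS{O}$ and \emph{any} arrival order, the candidate the algorithm accepts (if it accepts at all) is an online candidate whose value exceeds $\max\RS{T}$; hence its value is at least the \emph{smallest} element of $\RS{O}$ that exceeds $\max\RS{T}$. Denoting this quantity by $W$ (and $W=0$ if no online element exceeds $\max\RS{T}$), we obtain the pointwise bound $\ALG\ge W$, valid for every adversarial order. Thus it suffices to prove $\E{W}\ge\frac12\E{\OPT}$, where $\OPT=\max\RS{O}$. Crucially, both $W$ and $\OPT$ depend only on the pair $(\RS{T},\RS{O})$, and not on the order nor on $\RS{H}\setminus\RS{T}$.

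\emph{Symmetrization.} Let $S=\RS{T}\cup\RS{O}$, a set of size $2n-1$. Drawing $\RS{H}$ uniformly and then $\RS{T}\subseteq\RS{H}$ uniformly induces a uniform ordered partition of $\CS{C}$ into parts of sizes $(n,n-1,h-n+1)$; consequently, conditioned on $S$, the split of $S$ into $\RS{O}$ (size $n$) and $\RS{T}$ (size $n-1$) is uniform. Sort $S$ in decreasing order as $\beta_1>\dots>\beta_{2n-1}$ and let $\rho$ be the position of the first $\RS{T}$-element. Then $\beta_1,\dots,\beta_{\rho-1}\in\RS{O}$ and $\max\RS{T}=\beta_\rho$, so $W=\beta_{\rho-1}$ (with $W=0$ when $\rho=1$), while $\OPT=\beta_\tau$ for $\tau$ the position of the first $\RS{O}$-element. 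It therefore suffices to prove $\E{\given{W}{S}}\ge\frac12\E{\given{\OPT}{S}}$ for every fixed $S$.

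\emph{The computation.} Write $a_p=\Pr[\given{W=\beta_p}{S}]$ and $o_p=\Pr[\given{\OPT=\beta_p}{S}]$. Since $\beta_1\ge\dots\ge\beta_{2n-1}\ge0$, Abel summation reduces the goal to the prefix inequalities $\sum_{p\le m}a_p\ge\frac12\sum_{p\le m}o_p$ for all $m$. Both prefix sums are explicit in terms of $Q_T(m)=\Pr[\given{\beta_1,\dots,\beta_m\in\RS{T}}{S}]$ and $Q_O(m)=\Pr[\given{\beta_1,\dots,\beta_m\in\RS{O}}{S}]$: one has $\sum_{p\le m}o_p=1-Q_T(m)$ and $\sum_{p\le m}a_p=1-Q_O(m+1)-\frac{n-1}{2n-1}$. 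Using the identity $Q_O(m+1)=\frac{n}{(2n-1)-m}\,Q_T(m)$, the inequality collapses to the single clean bound $Q_T(m)\le\frac{(2n-1)-m}{(2n-1)(m+1)}$, which I would prove by induction on $m$ (the inductive step simplifies to $nm\ge0$).

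\emph{Main obstacle.} The conceptual crux lies in the first two steps: recognizing that the pointwise bound $\ALG\ge W$ eliminates the adversarial order, and that conditioning on $S=\RS{T}\cup\RS{O}$ restores exact symmetry so that $W$ and $\OPT$ become functions of a single uniform split. Once these are in place the remaining estimate is elementary. I expect the only delicate bookkeeping to be the off-by-one between $W=\beta_{\rho-1}$ and the first-$\RS{T}$ position, which is precisely what pins the constant at $1/2$; the equality cases $m=0,1$ indicate the bound is tight.
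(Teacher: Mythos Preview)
Your proof is correct and shares with the paper the key symmetrization step: both condition on $S=\RS{T}\cup\RS{O}$ and use that, given $S$, the split into $\RS{O}$ (size $n$) and $\RS{T}$ (size $n-1$) is uniform. From there the two arguments diverge. The paper expands $\E{\given{\OPT}{S}}$ and $\E{\given{\ALG}{S}}$ one level (casing on $\beta_1\in\RS{O}$ and $\beta_2\in\RS{T}$ versus $\beta_1,\beta_2\in\RS{O}$) and then invokes Lemma~\ref{Lemma adv} with parameters $(h',n')=(n-1,n)$ to bound the remaining conditional term $\E{\given{\ALG}{\beta_1,\beta_2\in\RS{O},\,S}}$; the whole estimate is thus obtained as a corollary of the short-history lemma. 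Your route is self-contained: you replace $\ALG$ by the explicit worst-case value $W=\beta_{\rho-1}$, compute the laws of $W$ and $\OPT$ directly, and establish factor-$\tfrac12$ prefix dominance via Abel summation and the clean inequality $Q_T(m)\le\frac{2n-1-m}{(2n-1)(m+1)}$. Your approach makes the source of the constant $\tfrac12$ transparent (equality at $m=0,1$, reflecting the hard instance $\beta_1\gg\beta_2$) and avoids any dependence on the recursive lemma; the paper's approach is shorter once Lemma~\ref{Lemma adv} is already available.
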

\begin{proof}
For the analysis, we think of the selection of the set $\RS{H} \subseteq \CS{C}$ as being determined by the following process: first, a subset $\RS{U} \subseteq \CS{C}$ of cardinality $2n-1$ is chosen uniformly at random, then the online set $\RS{O} \subseteq \RS{U}$ of cardinality $n$ is chosen uniformly at random, and $\RS{H} = \CS{C} \setminus \RS{O}$. Since $\RS{U}$ is a uniformly random subset of cardinality $2n-1$, $\RS{U}$ and $\RS{T} \cup \RS{O}$ are identically distributed. Fix $\RS{U} = \CS{Y}$. Let $\beta_{1}, \dots, \beta_{2n-1}$ denote the candidates in $\CS{Y}$ ordered by their value in decreasing order. We have
\begin{align}
\begin{split}
\E{\given{\OPT}{\RS{U}=\CS{Y}}} &=
\beta_1 \Pr\left[\given{\beta_1 \in \RS{O}}{\RS{U}=\CS{Y}}\right]+ \E{\given{\OPT}{\beta_1 \notin \RS{O}, \RS{U}=\CS{Y}}}\Pr\left[\given{\beta_1 \notin \RS{O}}{\RS{U}=\CS{Y}}\right] \\
&=  \beta_1 \frac{n}{2n-1} + \E{\given{\OPT}{\beta_1 \notin \RS{O}, \RS{U}=\CS{Y}}} \frac{n-1}{2n-1}.
\end{split}~\label{eq:max_o} 
\end{align}
On the other hand
\begin{align}
\E{\given{\ALG}{\RS{T} \cup \RS{O} = \CS{Y}}} 
&\geq \beta_1\Pr\left[\given{\beta_1 \in \RS{O}, \beta_2 \in \RS{T}}{ \RS{T} \cup \RS{O} = \CS{Y}}\right] \notag \\
&\quad + \E{\given{\ALG}{\beta_1,\beta_2 \in \RS{O}, \RS{T} \cup \RS{O} = \CS{Y}}}\Pr\left[\given{\beta_1,\beta_2 \in \RS{O}}{\RS{T}\cup \RS{O} = \CS{Y}}\right] \notag \\
&= \beta_{1} \frac{n}{2n-1}\cdot\frac{n-1}{2n-2}  + \E{\given{\ALG}{\beta_1,\beta_2 \in \RS{O}, \RS{T} \cup \RS{O} = \CS{Y}}} \frac{n}{2n-1}\cdot\frac{n-1}{2n-2}.~\label{eq:alg_t_o}
\end{align}
Following the proof of Lemma~\ref{Lemma adv}, we have
\begin{align}
\E{\given{\ALG}{\beta_1,\beta_2 \in \RS{O}, \RS{T} \cup \RS{O} = \CS{Y}}} \geq \frac{n-1}{n}\E{\given{\OPT}{\beta_1 \notin O, U=\CS{Y}}}.\label{eq:alg_given}
\end{align}
Substituting \eqref{eq:alg_given} in Inequality \eqref{eq:alg_t_o} we get
\begin{align}
\begin{split}
\E{\given{\ALG}{\RS{T} \cup \RS{O} = \CS{Y}}} &\geq
\beta_{1} \cdot \frac{n}{2n-1} \cdot \frac{n-1}{2n-2}  + \E{\given{\OPT}{\beta_1 \notin O, U=\CS{Y}}} \cdot \frac{n-1}{2n-1} \cdot \frac{n-1}{2n-2} \\
&= \frac{1}{2}\E{\given{\OPT}{\RS{U}=\CS{Y}}}\label{eq:half_comp},
\end{split}
\end{align}
where the last equality follows from~\eqref{eq:max_o}. By law of total expectation, we get 
\begin{align*}
\E{\ALG} &= \sum_{\substack{\CS{Y}\subseteq \CS{C} \\ |\CS{Y}| = 2n-1}}{\E{\given{\ALG}{\RS{T} \cup \RS{O} = \CS{Y}}}\Pr\left[ \RS{T}\cup \RS{O} = \CS{Y}\right]} \\
&\geq  \sum_{\substack{\CS{Y}\subseteq \CS{C} \\ |\CS{Y}| = 2n-1}} {\frac{1}{2}\E{\given{\OPT}{\RS{U}=\CS{Y}}}\Pr\left[ \RS{U} = \CS{Y}\right]} \\
&= \frac{1}{2} \E{\OPT},
\end{align*}
where the inequality follows from~\eqref{eq:half_comp} and the fact that $\Pr\left[ \RS{T}\cup \RS{O} = \CS{Y}\right] = \Pr\left[ \RS{U} = \CS{Y}\right]$.
\end{proof}

Next we prove that asymptotically, no online algorithm can be better than $1/2$-competitive. To this end we use the following proposition.

\begin{proposition}~\label{prop:choose-r}
For any $n,k \in \mathbb{N}$ such that $n \geq k$, and $r \geq 1$ such that $r n \in \mathbb{N}$, we have $\binom{n}{k} \leq \frac{1}{r^k}\binom{r n}{k} $.
\end{proposition}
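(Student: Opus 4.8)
The plan is to reduce this binomial inequality to an elementary term-by-term comparison of two falling-factorial products. Clearing the common denominator $k!$, the claim $\binom{n}{k} \leq \frac{1}{r^k}\binom{rn}{k}$ is equivalent to $r^k\binom{n}{k} \leq \binom{rn}{k}$, which (since $rn \in \mathbb{N}$ makes $\binom{rn}{k}$ a genuine binomial coefficient) after multiplying through by $k!$ becomes
\[
r^{k}\prod_{j=0}^{k-1}(n-j) \;\leq\; \prod_{j=0}^{k-1}(rn-j).
\]

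Next I would absorb the factor $r^{k}=\prod_{j=0}^{k-1} r$ into the left-hand product, rewriting its left side as $\prod_{j=0}^{k-1} r(n-j)=\prod_{j=0}^{k-1}(rn-rj)$. Both sides are now products of $k$ factors indexed by the same $j$, so it suffices to compare them factor by factor: for each $0 \leq j \leq k-1$ I would show $rn-rj \leq rn-j$, which is equivalent to $(r-1)j \geq 0$ and therefore holds because $r \geq 1$ and $j \geq 0$.

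The only point requiring care is positivity, since multiplying a collection of inequalities preserves their direction only when the quantities involved are nonnegative. Here the hypothesis $n \geq k$ is exactly what is needed: for every $j \leq k-1 < k \leq n$ we have $n-j \geq 1$, so $r(n-j) > 0$, and since $rn-j \geq rn-rj > 0$, every factor on both sides is strictly positive. Taking the product of the $k$ factorwise inequalities then yields the displayed bound, and dividing back by $r^{k}k!$ gives the proposition. I do not expect any genuine obstacle; the whole argument is a reduction to the one-line observation $(r-1)j \geq 0$, with the constraints $n \geq k$ and $r \geq 1$ serving only to guarantee that the term-by-term domination lifts to the product.
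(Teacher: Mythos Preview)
Your proposal is correct and follows essentially the same approach as the paper: both expand the binomial coefficients as falling factorials, distribute the $r^{k}$ across the $k$ factors of the numerator, and then compare term by term via $r(n-j)\leq rn-j$. Your write-up is in fact slightly more careful than the paper's, since you make explicit why positivity of all factors (guaranteed by $n\geq k$) is needed to multiply the factorwise inequalities.
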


For a proof see Appendix~\ref{apx:choose-r}.

\begin{theorem}~\label{thm:half_upper_bound}
Any online algorithm for the $h$-AO-SP has a competitive-ratio of at most $\frac{1}{2} \cdot \frac{2^n}{2^n -1}$.
\end{theorem}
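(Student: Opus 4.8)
The plan is to reuse the reduction from the proof of Theorem~\ref{ao-upper-bound} and then drive an order-invariant algorithm through a \emph{geometric} instance that turns the online phase into a prophet-inequality gauntlet. First I fix a $c$-competitive algorithm $\ALG$ and, by Moran et al.\ (exactly as in Theorem~\ref{ao-upper-bound}), assume it is order-invariant on an infinite set $\CS{V}\subseteq\mathbb{N}$. I let the adversary reveal $\RS{O}$ in increasing order; then the only consequential candidates are the \emph{records} (those exceeding $\max\RS{H}$), which arrive in increasing order, and order-invariance collapses $\ALG$ to a single sequence of acceptance probabilities $p_1,p_2,\dots$ (the probability of taking the $j$-th record given it is reached), with survival factors $\prod_{l<j}(1-p_l)$.

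For the instance I choose $\CS{C}\subseteq\CS{V}$ whose top $n$ values $\alpha_1>\dots>\alpha_n$ have consecutive ratios arbitrarily close to $2$ (so $\alpha_k\approx 2\,\alpha_{k+1}$; since $\CS{V}$ is infinite this is possible up to an $\varepsilon$ that I let tend to $0$, as in Theorem~\ref{ao-upper-bound}), while $\alpha_{n+1},\dots,\alpha_{n+h}$ are negligible. The doubling is the crux: whenever the top prefix $\alpha_1,\dots,\alpha_m$ all land in $\RS{O}$ while $\alpha_{m+1}\in\RS{H}$, the algorithm sees \emph{exactly} $m$ records, arriving in increasing order $\alpha_m<\dots<\alpha_1$ with each worth (about) twice the previous, and $\OPT=\alpha_1$ is the last of them. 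Hence accepting the $j$-th record secures only a $\approx 2^{-(m-j)}$ fraction of $\OPT$ --- precisely the prophet trade-off of ``value now versus double later.''

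Writing $\E{\OPT}$ and $\E{\ALG}$ as sums over the prefix length $m$, each case is weighted by $w_m=\Pr[\alpha_1,\dots,\alpha_m\in\RS{O},\ \alpha_{m+1}\in\RS{H}]=\binom{n+h-m-1}{n-m}\big/\binom{n+h}{n}$; conditioned on length $m$, $\OPT$ contributes $\alpha_1$ and $\ALG$ contributes $\sum_{j=1}^{m}\alpha_{m-j+1}\,p_j\prod_{l<j}(1-p_l)$. The key quantitative input is the survival probability
\[ \Pr[\alpha_1,\dots,\alpha_m\in\RS{O}]=\binom{n}{m}\Big/\binom{n+h}{m}\le 2^{-m}, \]
which holds for $h\ge n$ by Proposition~\ref{prop:choose-r} applied with $r=(n+h)/n\ge 2$. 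Combining this geometric halving of the survival probabilities with the geometric doubling of the values produces exactly the classical $n$-stage prophet ratio: the finite geometric series $\sum_{j\ge1}2^{-j}=1-2^{-n}$ supplies the factor $\frac{2^n}{2^n-1}=\frac{1}{1-2^{-n}}$, and the bare $\tfrac12$ is the prophet gap (one checks, e.g., that for $n=2$ this is $\tfrac{1}{2-2^{-1}}=\tfrac23$, matching $\tfrac12\cdot\frac{2^2}{2^2-1}$). Using that $\sum_j p_j\prod_{l<j}(1-p_l)\le 1$ (disjoint acceptance events), these pieces assemble into $\E{\ALG}\le\tfrac12\cdot\frac{2^n}{2^n-1}\,\E{\OPT}$, whence $c\le\tfrac12\cdot\frac{2^n}{2^n-1}$.

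The main obstacle is this last estimate: the acceptance probabilities $p_j$ are \emph{shared} across all prefix lengths $m$ (the $j$-th record plays a different role for different $m$), so I cannot optimize them case by case and must instead argue, as in the classical prophet lower bound, that no single sequence $(p_j)$ beats one half against the doubling benchmark. The second difficulty is purely arithmetic --- bounding the hypergeometric weights $w_m$ and survival probabilities for a \emph{general} history size $h$ rather than in a clean $1/2$-per-step limit --- and this is exactly what Proposition~\ref{prop:choose-r} resolves, via $\binom{n}{m}\le r^{-m}\binom{rn}{m}$. Finally, for $h\le n-1$ the claimed bound is weaker than the $\tfrac{h}{n+h-1}$ of Theorem~\ref{ao-upper-bound}, so it already holds there, and only the regime $h\ge n$ (where $r=(n+h)/n\ge2$) needs the construction above.
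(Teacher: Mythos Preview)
Your plan differs substantially from the paper's and, as written, has a genuine gap at its heart.

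\textbf{The doubling instance cannot be built from $\CS{V}$.} The Moran--Snir--Manber argument produces an infinite $\CS{V}\subseteq\mathbb{N}$ on which $\ALG$ is order-invariant, but gives no control over the \emph{values}: $\CS{V}$ could be $\{3^k:k\ge 0\}$ or $\{k!:k\ge 1\}$, in which case no two elements have ratio anywhere near~$2$. In Theorem~\ref{ao-upper-bound} one only needs a single large gap $\alpha_1>\alpha_2/\varepsilon'$, which any infinite set provides; your construction requires $n$ consecutive ratios all close to~$2$, which an arbitrary infinite set does not. And the ratio~$2$ is not incidental: with ratios $R\gg 2$ you collapse back to the single-dominant-value instance of Theorem~\ref{ao-upper-bound}, whose bound $h/(n+h-1)$ exceeds $1/2$ once $h\ge n$.

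\textbf{The case $\alpha_1\in\RS{H}$ is unaccounted for and is not negligible.} Your decomposition ``over the prefix length $m$'' only covers $m\ge 1$, i.e.\ $\alpha_1\in\RS{O}$. But for $h\ge n$ the complementary event $\alpha_1\in\RS{H}$ has probability $h/(n+h)\ge 1/2$. On that event there are no records at all, yet $\OPT=\alpha_k$ for some $k\ge 2$ is \emph{not} negligible in a doubling instance. Crucially, the order-types the algorithm sees when $\alpha_1\in\RS{H}$ (every online element lies below $x_h=\max\RS{H}$) are disjoint from the record order-types $x_1<\dots<x_h<c_1<\dots<c_j$, so an order-invariant algorithm has independent ``knobs'' there and can recover a substantial fraction of $\OPT$ in that regime. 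Bounding this interaction is not addressed, and the hand-waved ``these pieces assemble into $\E{\ALG}\le\frac12\cdot\frac{2^n}{2^n-1}\,\E{\OPT}$'' does not follow from what precedes it.

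\textbf{What the paper does instead.} The paper does \emph{not} use order-invariance or a prophet analogy. It builds two nearly indistinguishable instances: $\mathcal{I}_1$ with $(n+h)/2$ copies of $\varepsilon$ and $(n+h)/2$ copies of~$0$, and $\mathcal{I}_2$ where one $0$ is replaced by a large value $\alpha=(n+h)/n$ revealed last. Competitiveness on $\mathcal{I}_1$ forces $\ALG$ to accept an $\varepsilon$ with probability close to $c$; competitiveness on $\mathcal{I}_2$ forces $\ALG$ to keep probability close to $c$ in reserve for $\alpha$; since these events are disjoint one gets $2c\lesssim 1$. The factor $2^n/(2^n-1)$ enters only through Proposition~\ref{prop:choose-r} bounding $\Pr[\CS{E}\cap\RS{O}=\emptyset]\le 2^{-n}$. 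This two-instance trade-off sidesteps both obstacles above.
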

\begin{proof}
Let $\ALG$ be a $c$-competitive algorithm. Fix $\varepsilon > 0$. We construct two instances: the first instance $\mathcal{I}_1$ consists of a set $\CS{E}$ of $(h+n)/2$ candidates of value $\varepsilon$, and a set $\CS{Z}$ of $(h+n)/2$ candidates of value $0$.\footnote{We assume, without loss of generality, that $h+n$ is even.} For the second instance $\mathcal{I}_2$, we replace one arbitrary candidate of value $0$ with a candidate $\alpha$ of value $\frac{n+h}{n}$. Consider an adversary who first reveals all candidates in $\CS{E} \cap \RS{O}$, then, all candidates in $\CS{Z} \cap \RS{O}$. At the end, it reveals $\alpha$ if $\alpha \in \RS{O}$. Let $p$ be the probability that at least one candidate from $\CS{Z}$ is in the online set. We have $p = \Pr[\CS{Z} \cap \RS{O} \neq \emptyset] = \Pr[\CS{E} \cap \RS{O} \neq \emptyset]$. By Proposition~\ref{prop:choose-r} for $r=2$, we have $p \geq 1-1/2^n$.

Let us first consider $\mathcal{I}_1$. We have $\E{\OPT(\mathcal{I}_1)} = \varepsilon \cdot p$. Now let $\ES{A}(\mathcal{I})$ denote the event that $\ALG(\mathcal{I})$ accepts a candidate of value $\varepsilon$.
By the assumption that $\ALG$ is $c$-competitive, on $\mathcal{I}_1$ it must accept a candidate of value $\varepsilon$ with probability $c \cdot p$, that is, $\Pr \left[\ES{A}(\mathcal{I}_1)\right] \geq c\cdot p$. Therefore
\begin{align}
 c \cdot p
 &\leq \Pr \left[\ES{A}(\mathcal{I}_1)\right]
 \leq  \Pr \left[ \given{\ES{A}(\mathcal{I}_1)} {\CS{Z} \cap O \neq \emptyset} \right]\cdot p + (1 - p).~\label{eq:c-p-epsilon}
\end{align}

For the second instance, we have $  \E{\OPT(\mathcal{I}_2)} \geq \alpha \Pr[\alpha \in O] = \frac{n+h}{n} \cdot \frac{n}{n+h} = 1$. Since $\ALG$ is $c$-competitive, we have $\E{\ALG\left(\mathcal{I}_2\right)} \geq c \cdot \E{\OPT\left(\mathcal{I}_2\right)} \geq c$. On the other hand, when $\alpha \in \RS{H}$ the profit is at most $\varepsilon$. We get that
\begin{align*}
    \E{\ALG\left(\mathcal{I}_2\right)} &\leq \alpha \Pr[\ALG\text{ accepts } \alpha \wedge \alpha \in O] + \varepsilon \\
    &= \alpha \Pr[\alpha \in O]\Pr\left[\given{\ALG\text{ accepts } \alpha}{\alpha \in O}\right]  + \varepsilon \\
    &= \Pr\left[\given{\ALG\text{ accepts } \alpha}{\alpha \in O}\right] + \varepsilon.
\end{align*}
Therefore, conditioned on $\alpha \in \RS{O}$, the probability that $\ALG$ accepts $\alpha$ is at least $c-\varepsilon$. 
Since all the candidates in $\CS{E}$ arrive first and $\alpha$ arrives last, by Inequality~\eqref{eq:c-p-epsilon} and the lower bound on $p$ we have
\begin{align*}
\Pr \left[ \given{A(\mathcal{I}_2)}{\alpha \in \RS{O}} \right] 
&= \Pr\left[ \given{A(\mathcal{I}_1)}{\CS{Z} \cap O \neq \emptyset} \right] 
\geq \frac{c \cdot p - (1-p) }{ p } \geq c - \frac{2^n}{2^n - 1} + 1.
\end{align*}
Conditioned on $\alpha \in \RS{O}$, the probability that $\ALG$ accepts $\alpha$ is at most $1- \Pr\left[\given{A(\mathcal{I}_2)}{\alpha \in \RS{O}}\right]$. We get that
\begin{align*}
c - \varepsilon \leq 1- \Pr\left[\given{A(\mathcal{I}_2)}{\alpha \in \RS{O}}\right] \leq \frac{2^n}{2^n - 1} - c.
\end{align*}
Hence, $c \leq \frac{1}{2} \cdot \frac{2^n}{2^n-1} + \frac{\varepsilon}{2}$. 
\end{proof}

\section{Random Order}\label{section:random-order}
We define the \textit{random-order secretary problem with a sample of size $h$ ($h$-RO-SP)} similarly to the game defined for the adversarial order case, the only difference is that the order in which the candidates in $\RS{O}$ are presented to the online player is chosen uniformly at random. More explicitly, we define it as the following game:

\begin{enumerate}
\item An adversary picks a set $\CS{C}=\left\{ \alpha_{1},\dots,\alpha_{n+h}\right\} $
of $n+h$ candidates. Each candidate $\alpha_{i}$ has a value $v\left(\alpha_{i}\right)\in\mathbb{R}_{\geq0}$.\footnote{The assumptions we made regarding the values of the candidates in Section~\ref{section:adv-order} also apply in this case.}

\item A subset $\RS{H} \subseteq \CS{C}$ of cardinality $h$ is drawn uniformly at random.

\item The candidates in $\RS{O}$ are presented to the online player one by one in a uniformly random order. After every arrival, the online player has to make an immediate and irrevocable decision whether to accept or reject the current candidate. If she accepts a candidate, the process terminates.

\end{enumerate}
Let $\text{ALG}$ be an algorithm for the online player. In this section, $\E{\ALG}$ is taken over the random choice of $\RS{H}\subseteq \CS{C}$, the random arrival order of candidates in $\RS{O}$ and the internal randomness of $\ALG$.

We consider a natural algorithm for this problem. It operates in three phases which we call the \textit{sampling phase}, \textit{Phase~\ref{phase 1}} and \textit{Phase~\ref{phase 2}}. The sampling phase and Phase 1, are similar to the optimal algorithm for the ordinary secretary problem. The parameter $q$ is the fraction of rounds used for the sampling phase. It is determined as a function of $n$ and $h$ which emerges from the analysis.\footnote{We assume $qn$ is an integer.} In Phase~\ref{phase 1} the algorithm accepts a candidate if he is the best so far (including the history set). In Phase~\ref{phase 2}, at every round the algorithm selects $n-1$ random candidates from the past, and accepts the current candidate if he is the best among them.
\IncMargin{1em}
\begin{algorithm}
\label{alg:ro_secretary}
\caption{$h$-RO-SP}
$T_{0}\leftarrow \RS{H}$\;
$q \leftarrow \max\left\{ e^{-e^{-h/n}} - \frac{h}{n} ,0\right\}$\;
\For {candidate $c_{\ell}$ that arrives at round $\ell$ } {
    $T_{\ell} \leftarrow T_{\ell-1}\cup\left\{ c_{\ell}\right\}$\;
    \uIf(\tcc*[f]{sampling phase}) {$ \ell \leq q n$} {
        continue to the next round\;
    }
    \lnl{phase 1}\uElseIf(\tcc*[f]{phase 1}){ $\left| T_{\ell} \right| \leq n$} {
        \If {$c_{\ell}=\max\left\{ T \right\}$} {
	        accept $c_{\ell}$ and terminate\;
        }
    }
    \lnl{phase 2}\Else(\tcc*[f]{phase 2}) {
        draw a subset $\RS{X}_{\ell}\subseteq T_{\ell - 1}$ of cardinality $(n-1)$ uniformly at random\;
        \If {$c_{\ell}>\max\left\{\RS{X}_{\ell}\right\}$} {
	        accept $c_{\ell}$ and terminate\;
        }
    }
}
\end{algorithm}
\DecMargin{1em}

We now analyze the performance of Algorithm~\ref{alg:ro_secretary}. Unless specifically indicated otherwise
we assume that $h \leq n-1$. Our reasoning still applies for $h \geq n$, but since Phase~\ref{phase 1} is skipped completely in this case, it requires a different base-case which we discuss afterwards. We bound the expected profit of the algorithm at each round separately. To this end, for a fixed round $\ell$, we think of the random process that leads to round $\ell$ as if it is determined by the following steps:
\begin{enumerate}[label=({\arabic*})]
    \item~\label{step 1} First, a set of candidates $\RS{S}_\ell \subseteq \CS{C}$ of cardinality $h + \ell$ is chosen uniformly at random.
    \item~\label{step 2} Second, the candidate to arrive at round $\ell$, $c_\ell$, is chosen uniformly at random from $S_\ell$. Let $\RS{S}_{\ell-1}\leftarrow \RS{S}_{\ell} \setminus \left\{ c_\ell \right\} $.
    \item~\label{step 3} Finally, Step~\ref{step 2} is repeated with $\RS{S}_{\ell - 1}$ to determine the candidates that arrive at rounds $\ell - 1,\dots,1$. $\RS{H}$ is the set of the remaining candidates, i.e., $\RS{S}_{0}$.
\end{enumerate}

We now bound the expected profit of the algorithm at each round separately. Let $\RS{R}_\ell$ denote the profit of the algorithm at round $\ell$. In Lemma~\ref{Not Accepted Phase1} we derive the probability that $\ALG$ rejects the first $\ell$ candidates for a round $\ell$ in Phase~\ref{phase 1}. Then, we use this result to bound $\E{R_\ell}$ in Lemma~\ref{Secretary Lemma}. Analogous results for Phase~\ref{phase 2} are presented in Lemma~\ref{Not Accepted Phase2} and Lemma~\ref{Secretary History Lemma}. 

For $qn + 1 \leq k \leq n$, we denote by $\ES{M}_k$ the event that $c_k$ is accepted by the algorithm.
\begin{lemma}~\label{Not Accepted Phase1}
For $qn + 1 \leq \ell \leq n-h$, for any  $\CS{U}_\ell \subseteq \CS{C}$ such that $\left| \CS{U}_\ell \right|= h+\ell$, we have
\begin{align*}
 \Pr \left[\given{\bigwedge_{k=qn+1}^{\ell} \neg \ES{M}_k}{\RS{S}_{\ell}=\CS{U}_\ell}\right] = \frac{h+qn}{h+\ell}.
\end{align*}
\end{lemma}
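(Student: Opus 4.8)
The plan is to reduce the conjunction of rejection events to a statement about records (left-to-right maxima) of a uniformly random permutation, for which the relevant probabilities are classical and independent. First I would observe that throughout Phase~\ref{phase 1} the algorithm accepts $c_k$ precisely when $c_k = \max(T_k) = \max(\RS{S}_k)$ (here $T_k = \RS{H}\cup\{c_1,\dots,c_k\} = \RS{S}_k$), and that acceptance is sequential: the algorithm reaches round $k$ iff it has rejected every earlier candidate. Consequently, conditioned on $\RS{S}_\ell=\CS{U}_\ell$, the event $\bigwedge_{k=qn+1}^{\ell}\neg\ES{M}_k$ is exactly the event that $c_k \neq \max(\RS{S}_k)$ for every $k\in\{qn+1,\dots,\ell\}$: no round in the sampling phase can trigger an acceptance, so round $qn+1$ is always reached, and $\ell\leq n-h$ guarantees that all these rounds lie in Phase~\ref{phase 1} (since $|T_k|=h+k\leq n$).

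Next I would translate the generative process of Steps~\ref{step 1}--\ref{step 3} into a permutation picture. Conditioned on $\RS{S}_\ell=\CS{U}_\ell$, the rule ``draw $c_\ell$ uniformly from $\RS{S}_\ell$, then $c_{\ell-1}$ uniformly from $\RS{S}_{\ell-1}$, and so on'' makes $(c_\ell,\dots,c_1)$ a uniformly random ordered sequence of $\ell$ distinct elements of $\CS{U}_\ell$, with $\RS{H}=\RS{S}_0$ the remaining $h$ elements; reversing, $(c_1,\dots,c_\ell)$ is uniform as well. This is identical to drawing a uniformly random permutation $\pi=(\pi_1,\dots,\pi_{h+\ell})$ of $\CS{U}_\ell$ and setting $\RS{H}=\{\pi_1,\dots,\pi_h\}$ and $c_k=\pi_{h+k}$. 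Under this identification $\RS{S}_k=\{\pi_1,\dots,\pi_{h+k}\}$, so $c_k=\max(\RS{S}_k)$ says exactly that position $h+k$ is a record of $\pi$, i.e., that $\pi_{h+k}$ exceeds $\pi_1,\dots,\pi_{h+k-1}$.

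Finally I would invoke the classical fact that the record indicators of a uniformly random permutation are mutually independent, with position $j$ being a record with probability $1/j$. Since records at distinct positions are independent,
\begin{align*}
\Pr\left[\given{\bigwedge_{k=qn+1}^{\ell}\neg\ES{M}_k}{\RS{S}_\ell=\CS{U}_\ell}\right]
= \prod_{k=qn+1}^{\ell}\left(1-\frac{1}{h+k}\right)
= \prod_{k=qn+1}^{\ell}\frac{h+k-1}{h+k}
= \frac{h+qn}{h+\ell},
\end{align*}
the last step being a telescoping product. I expect the main obstacle to be the bookkeeping in the first two steps rather than the final computation: one must argue carefully that rejecting every candidate in rounds $qn+1,\dots,\ell$ is equivalent to none of them being a running maximum of $\RS{S}_k$, and that conditioning on $\RS{S}_\ell=\CS{U}_\ell$ turns the arrival order into a genuinely uniform permutation so that the record structure applies. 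This also explains transparently why the answer does not depend on the particular $\CS{U}_\ell$: only the relative order of the candidates matters, never their actual values.
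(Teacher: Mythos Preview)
Your argument is correct, but it proceeds differently from the paper's. The paper bypasses the product over rounds entirely by identifying a \emph{single} equivalent event: conditioned on $\RS{S}_\ell=\CS{U}_\ell$, the conjunction $\bigwedge_{k=qn+1}^{\ell}\neg\ES{M}_k$ holds if and only if $\max\{\CS{U}_\ell\}\in T_{qn}$. One direction is immediate (if the overall maximum sits in $T_{qn}$ then no later candidate can be a running maximum); the converse is because if $\max\{\CS{U}_\ell\}$ arrives in some round $k>qn$, the algorithm must accept at round $k$ or earlier. Since $T_{qn}=\RS{S}_{qn}$ is, conditioned on $\RS{S}_\ell=\CS{U}_\ell$, a uniformly random $(h+qn)$-subset of $\CS{U}_\ell$, the probability is $(h+qn)/(h+\ell)$ in one stroke. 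Your route---recasting the arrivals as a uniform permutation of $\CS{U}_\ell$, identifying $\ES{M}_k$ with a record at position $h+k$, and telescoping $\prod_{k=qn+1}^{\ell}(1-\frac{1}{h+k})$---is equally valid and makes the independence structure explicit; it is the ``per-round'' counterpart to the paper's ``global'' argument. The paper's version is shorter and arguably cleaner here, while your record-based viewpoint is more modular and would adapt naturally if one wanted finer information (e.g., the distribution of the first acceptance round). Either way, the observation that only relative order matters, and hence that the answer is independent of the particular $\CS{U}_\ell$, falls out immediately.
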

\begin{proof}
Recall that $\RS{T}_{qn}$ (defined in the description of Algorithm~\ref{alg:adv-sec-short-hist}) is the set of $h+qn$ candidates that the algorithm observes before Phase~\ref{phase 1} begins. Conditioned on $S_\ell = \CS{U}_\ell$, by the definition of Phase~\ref{phase 1}, if $\max{\left\{\CS{U}_\ell\right\}}$ appears in $\RS{T}_{qn}$, the algorithm rejects all candidates until round $\ell$. Conversely, if  $\max{\left\{\CS{U}_\ell\right\}} \notin \RS{T}_{qn}$, the algorithm must accept it, or some other candidate before it encounters  $\max{\left\{\CS{U}_\ell\right\}}$. We get
\[
 \Pr \left[\given{\bigwedge_{k=qn+1}^{\ell} \neg \ES{M}_k}{\RS{S_\ell}=\CS{U}_\ell}\right] = \Pr\left[\given{ \max{\left\{\CS{U}_\ell\right\}} \in T_{qn}}{\RS{S}_\ell = \CS{U}_\ell}\right]
 = \frac{h+qn}{h+\ell}.
 \qedhere
\]
\end{proof}

For the proof of Lemma~\ref{Secretary Lemma} below, we also use the following proposition
\begin{proposition}~\label{expected max prop}
Suppose $\CS{X} \subseteq \mathbb{R}$ is a finite set. Let $\RS{A} \subseteq \CS{X}$ and $\RS{B} \subseteq \CS{X}$ be uniformly random subsets of cardinality $k$ and $n$ respectively, where $k \leq n$. Then, $\E{\max\left\{ \RS{A} \right\}} \geq \frac{k}{n} \E{\max\left\{\RS{B}\right\}}$.
\end{proposition}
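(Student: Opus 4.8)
The plan is to prove the inequality by a coupling between $\RS{A}$ and $\RS{B}$ followed by conditioning, rather than by comparing the two expectations term by term (which is awkward, since the smaller subset concentrates its mass on the larger elements). The starting point is the standard observation that a uniformly random size-$k$ subset of $\CS{X}$ has exactly the same law as the output of the two-stage experiment: first draw a uniformly random subset $\RS{B} \subseteq \CS{X}$ of size $n$, and then draw $\RS{A} \subseteq \RS{B}$ uniformly among the size-$k$ subsets of $\RS{B}$. I would verify this with the subset-of-subset identity $\binom{m}{n}\binom{n}{k} = \binom{m}{k}\binom{m-k}{n-k}$, where $m = |\CS{X}| \geq n$: for any fixed size-$k$ set $A_0$, the number of size-$n$ sets $\RS{B}$ with $A_0 \subseteq \RS{B}$ is $\binom{m-k}{n-k}$, so
\[
\Pr[\RS{A} = A_0] = \binom{m-k}{n-k}\cdot\frac{1}{\binom{m}{n}\binom{n}{k}} = \frac{1}{\binom{m}{k}},
\]
which is precisely the uniform size-$k$ distribution. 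Hence this coupling preserves the marginal law of $\RS{A}$, and it suffices to prove the inequality for the coupled pair.

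Having set up the coupling, I would condition on $\RS{B}$ and compare $\E{\given{\max\left\{\RS{A}\right\}}{\RS{B}}}$ with $\frac{k}{n}\max\left\{\RS{B}\right\}$. Writing $y^{\ast} = \max\left\{\RS{B}\right\}$ and using that $\RS{A}$ is a uniformly random size-$k$ subset of the $n$-element set $\RS{B}$, any fixed element of $\RS{B}$ — in particular $y^{\ast}$ — lies in $\RS{A}$ with probability exactly $k/n$, i.e.\ $\Pr\left[\given{y^{\ast}\in\RS{A}}{\RS{B}}\right] = k/n$. On the event $\{y^{\ast}\in\RS{A}\}$ we have $\max\left\{\RS{A}\right\} = y^{\ast}$, and on the complementary event $\max\left\{\RS{A}\right\}$ is still nonnegative, so
\[
\E{\given{\max\left\{\RS{A}\right\}}{\RS{B}}} \;\geq\; \frac{k}{n}\,y^{\ast} \;=\; \frac{k}{n}\max\left\{\RS{B}\right\}.
\]

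Finally I would apply the tower property and take the expectation over $\RS{B}$, using that the coupled $\RS{A}$ has the correct marginal, to conclude
\[
\E{\max\left\{\RS{A}\right\}} = \E{\,\E{\given{\max\left\{\RS{A}\right\}}{\RS{B}}}\,} \;\geq\; \frac{k}{n}\,\E{\max\left\{\RS{B}\right\}}.
\]
The only genuinely delicate step is the first one: one must confirm that the two-stage experiment really reproduces the uniform size-$k$ law, which is exactly where the binomial identity is used; once the coupling is justified, the remainder is a one-line conditional bound. I would also flag that the conditional estimate relies on $\max\left\{\RS{A}\right\} \geq 0$, so the nonnegativity of the values (assumed throughout, since the elements of $\CS{X}$ are candidate values in $\mathbb{R}_{\geq 0}$) is what lets us discard the contribution of the event $\{y^{\ast}\notin\RS{A}\}$.
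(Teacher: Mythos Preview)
Your proposal is correct and follows essentially the same route as the paper: couple $\RS{A}$ as a uniform size-$k$ subset of $\RS{B}$, condition on $\RS{B}$, use that $\max\{\RS{B}\}\in\RS{A}$ with probability $k/n$, and take expectations. Your explicit flag that the conditional bound needs nonnegativity is a good catch---the paper's proof uses the same inequality $\E{\given{\max\{\RS{A}\}}{\RS{B}}}\geq \max\{\RS{B}\}\Pr[\given{\max\{\RS{B}\}\in\RS{A}}{\RS{B}}]$ without comment, and indeed the proposition as literally stated (with $\CS{X}\subseteq\mathbb{R}$) fails for, e.g., $\CS{X}=\{-1,-2\}$, $k=1$, $n=2$.
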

For a proof see Appendix~\ref{proof expected max prop}.

\begin{lemma}~\label{Secretary Lemma}
For $  q  n + 1\leq \ell \leq n  - h $, we have 
\[ 
\E{\RS{R}_\ell} = \frac{h+qn}{h+\ell-1}\cdot\frac{1}{n}\E{\OPT}.
\]
\end{lemma}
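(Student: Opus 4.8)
The plan is to compute $\E{\RS{R}_\ell}$ exactly by conditioning on the set $\RS{S}_\ell$ of the first $h+\ell$ observed candidates, using the three-step view of the process in Steps~\ref{step 1}--\ref{step 3}. Fix a realization $\RS{S}_\ell=\CS{U}_\ell$ with $|\CS{U}_\ell|=h+\ell$, and let $m=\max\{\CS{U}_\ell\}$. In Phase~\ref{phase 1} the algorithm collects a nonzero profit at round $\ell$ exactly when two events occur together: (a) $c_\ell=\max\{\RS{T}_\ell\}=m$, that is, the largest element of $\CS{U}_\ell$ is the one arriving at round $\ell$; and (b) the algorithm has not stopped before round $\ell$. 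When both hold, the accepted value is $m$, so $\E{\RS{R}_\ell\mid \RS{S}_\ell=\CS{U}_\ell}=m\cdot\Pr[(a)\text{ and }(b)\mid \RS{S}_\ell=\CS{U}_\ell]$.

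First I would evaluate this probability with the record argument of Lemma~\ref{Not Accepted Phase1}. By Step~\ref{step 2}, $c_\ell$ is uniform over $\CS{U}_\ell$, so the probability of (a) is $\frac{1}{h+\ell}$. Conditioned on (a), the remaining $h+\ell-1$ candidates are split uniformly, and the algorithm reaches round $\ell$ iff the second largest element of $\CS{U}_\ell$ (which is now $\max\{\RS{T}_{\ell-1}\}$) lies in the initially observed block $\RS{T}_{qn}$ of size $h+qn$; this is precisely Lemma~\ref{Not Accepted Phase1} applied with $\ell-1$ in place of $\ell$, and its probability is $\frac{h+qn}{h+\ell-1}$, independent of the position of $m$. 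Multiplying the two factors and averaging over $\RS{S}_\ell$ yields the exact per-round identity
\[
\E{\RS{R}_\ell}=\frac{h+qn}{(h+\ell)(h+\ell-1)}\,\E{\max\{\RS{S}_\ell\}}.
\]

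The remaining step converts $\E{\max\{\RS{S}_\ell\}}$ into $\frac1n\E{\OPT}$. Since $\RS{S}_\ell$ is a uniform $(h+\ell)$-subset and $\RS{O}$ a uniform $n$-subset of $\CS{C}$, and $h+\ell\le n$ throughout the range $qn+1\le\ell\le n-h$, Proposition~\ref{expected max prop} (with $\CS{X}=\CS{C}$ and $k=h+\ell$) gives $\E{\max\{\RS{S}_\ell\}}\ge \frac{h+\ell}{n}\E{\OPT}$; substituting produces the displayed right-hand side $\frac{h+qn}{h+\ell-1}\cdot\frac1n\E{\OPT}$.

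This last substitution is the delicate point and the main obstacle. Proposition~\ref{expected max prop} only delivers the comparison in one direction, and it is strict in general (for instance whenever $h+\ell<n$ and the values are distinct), so the genuinely exact content of the argument is the per-round identity in terms of $\E{\max\{\RS{S}_\ell\}}$, while the passage to $\frac1n\E{\OPT}$ holds as ``$\ge$'' — which is exactly the direction needed when the lemma is used to lower bound $\E{\ALG}=\sum_\ell\E{\RS{R}_\ell}$ in the sequel. I would therefore keep the per-round computation exact, carry it through the summation over rounds, and invoke Proposition~\ref{expected max prop} only once at the end, so that no tightness is lost in the intermediate bookkeeping.
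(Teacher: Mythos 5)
You have reproduced the paper's argument essentially verbatim: the same conditioning on $\RS{S}_\ell=\CS{U}_\ell$, the same factorization of the acceptance probability into $\frac{1}{h+\ell}$ (from Step~\ref{step 2}) times the survival probability $\frac{h+qn}{h+\ell-1}$ obtained from Lemma~\ref{Not Accepted Phase1} applied at round $\ell-1$, and the same final appeal to Proposition~\ref{expected max prop}. Your closing caveat is also accurate: since Proposition~\ref{expected max prop} only yields $\E{\max\left\{\RS{S}_\ell\right\}}\geq\frac{h+\ell}{n}\E{\OPT}$ (strict in general), the paper's proof really establishes the lemma with ``$\geq$'' rather than the stated ``$=$'', which is exactly the direction used later in the proof of Theorem~\ref{secretary CR}.
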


\begin{proof}
By Step~\ref{step 1}, $\RS{S}_\ell \subseteq \CS{C}$ is a uniformly random subset of size $h + \ell \leq n$, therefore, by Proposition~\ref{expected max prop}, $\E{\max\left\{\RS{S}_\ell\right\} } \geq \frac{h + \ell}{n} \E{\OPT}$. We now bound $\E{\given{R_\ell}{S_\ell = \CS{U_\ell}}}$. We have
\begin{align*}
    \E{\given{R_\ell}{S_\ell = \CS{U}_\ell}} &= \max\left\{\CS{U}_\ell\right\} \Pr \left[\given{\ES{M}_\ell \bigwedge_{k=qn+1}^{\ell-1} \neg \ES{M}_k}{S_\ell = \CS{U_\ell}}\right]\\ 
    &= \max\left\{\CS{U}_\ell\right\} \Pr \left[\given{\bigwedge_{k=qn+1}^{\ell-1} \neg \ES{M}_k}{S_{\ell-1} = \CS{U}_\ell \setminus \left\{\max\left\{\CS{U_\ell}\right\}\right\} }\right] \frac{1}{h + \ell} \\
    &= \max\left\{\CS{U}_\ell\right\} \frac{h+qn}{h+\ell -1 } \cdot \frac{1}{h + \ell},
\end{align*}
where the first equality is due to the fact that $c_\ell$ is accepted if and only if $c_\ell = \max\left\{\CS{U}_\ell\right\}$, the second equality follows from the fact that $\Pr\left[\given{c_\ell = \max\left\{\CS{U}_\ell\right\}}{\RS{S}_\ell = \CS{U}_\ell}\right] = \frac{1}{h+\ell}$ due to Step~\ref{step 2}, and the third equality follows from Lemma~\ref{Not Accepted Phase1}. By law of total expectation, we get
\begin{align*}
    \E{R_\ell} &=  \E{\E{\given{R_\ell}{S_\ell}}} = \E{\max\left\{S_\ell\right\} \frac{h+qn}{h+\ell -1 } \cdot \frac{1}{h + \ell}} = \frac{h+qn}{h+\ell-1}\cdot\frac{1}{n}\E{\OPT}. \qedhere
\end{align*}
\end{proof}
We now move to bound the expected profit of the algorithm at Phase~\ref{phase 2}.
\begin{lemma}\label{Not Accepted Phase2}
For $n-h \leq \ell \leq n$, for any $\CS{U}_\ell\subseteq \CS{C}$ such that $\left|\CS{U}_\ell\right|= h+\ell$, we have
\begin{align*}
 \Pr \left[\given{\bigwedge_{k=qn+1}^{\ell} \neg \ES{M}_k}{\RS{S_\ell}=\CS{U}_\ell}\right] = \frac{h+qn}{n} \cdot \left(1-\frac{1}{n} \right)^{\ell-(n-h)}
\end{align*}
\end{lemma}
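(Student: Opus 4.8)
The plan is to induct on $\ell$, moving forward from the last round of Phase~\ref{phase 1}. Write $N_\ell(\CS{U}_\ell)$ for the probability in the statement; the induction will confirm (as the lemma asserts) that it depends on $\CS{U}_\ell$ only through the cardinality $h+\ell$, so I abbreviate it $N_\ell$. The base case is $\ell=n-h$, which still belongs to Phase~\ref{phase 1}: Lemma~\ref{Not Accepted Phase1} applies and gives $N_{n-h}=\frac{h+qn}{h+(n-h)}=\frac{h+qn}{n}$, exactly the asserted value at exponent $0$. For the inductive step I would fix $\ell\ge n-h+1$ and a set $\CS{U}_\ell$ of size $h+\ell$, assume the statement for $\ell-1$ and \emph{every} set of size $h+\ell-1$, and condition on the identity of $c_\ell$ using the generative description in Steps~\ref{step 1}--\ref{step 3}: given $\RS{S}_\ell=\CS{U}_\ell$, the arrival $c_\ell$ is uniform over $\CS{U}_\ell$, and $\RS{S}_{\ell-1}=\CS{U}_\ell\setminus\{c_\ell\}$ is then determined.

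The decisive step is to split the event of rejecting rounds $qn+1,\dots,\ell$ into reaching round $\ell$ (i.e.\ rejecting rounds $qn+1,\dots,\ell-1$) and then rejecting at round $\ell$. Conditioned on $c_\ell=x$ and $\RS{S}_\ell=\CS{U}_\ell$, the first event is governed solely by the process on rounds $1,\dots,\ell-1$, which depends on the history only through $\RS{S}_{\ell-1}=\CS{U}_\ell\setminus\{x\}$; by the induction hypothesis its probability equals $\frac{h+qn}{n}(1-1/n)^{(\ell-1)-(n-h)}=N_{\ell-1}$, a value independent of $x$. The second event is controlled by the fresh subset $\RS{X}_\ell$, a uniformly random $(n-1)$-subset of $\RS{T}_{\ell-1}=\CS{U}_\ell\setminus\{x\}$ drawn anew at round $\ell$; the algorithm rejects $x$ exactly when $\RS{X}_\ell$ contains an element larger than $x$. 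I expect the one delicate point, and the main obstacle, to be arguing cleanly that these two events are conditionally independent given $c_\ell=x,\RS{S}_\ell=\CS{U}_\ell$: this holds because once $c_\ell=x$ is fixed the set $\CS{U}_\ell\setminus\{x\}$ that $\RS{X}_\ell$ is sampled from is determined, and $\RS{X}_\ell$ uses randomness independent of all earlier rounds. Everything after this point is deterministic counting.

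It remains to average over $x$. Sorting $\CS{U}_\ell$ as $u_1<\dots<u_{h+\ell}$, the candidate $x=u_j$ sits above exactly $j-1$ elements of $\CS{U}_\ell\setminus\{x\}$, so the probability that $\RS{X}_\ell$ lies entirely below $x$ is $\binom{j-1}{n-1}/\binom{h+\ell-1}{n-1}$ and the round-$\ell$ rejection probability is $1$ minus this. Weighting each $u_j$ by $\Pr[\given{c_\ell=u_j}{\RS{S}_\ell=\CS{U}_\ell}]=\frac{1}{h+\ell}$ and multiplying by the constant factor $N_{\ell-1}$, the combinatorial core reduces to $\sum_{j=1}^{h+\ell}\binom{j-1}{n-1}$. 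By the hockey-stick identity this equals $\binom{h+\ell}{n}$, and since $\binom{h+\ell}{n}/\binom{h+\ell-1}{n-1}=(h+\ell)/n$, the bracket $\big[(h+\ell)-\binom{h+\ell}{n}/\binom{h+\ell-1}{n-1}\big]$ collapses and the whole expression becomes $N_{\ell-1}(1-1/n)$. This is the one-step recurrence $N_\ell=(1-1/n)N_{\ell-1}$, which with the base case $N_{n-h}=\frac{h+qn}{n}$ yields the claimed closed form. A minor thing to check en route is that all binomials are well defined, i.e.\ $h+\ell-1\ge n-1$ throughout Phase~\ref{phase 2}; this holds since $\ell\ge n-h$.
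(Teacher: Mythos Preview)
Your proposal is correct and follows essentially the same approach as the paper: induction on $\ell$ with base case $\ell=n-h$ supplied by Lemma~\ref{Not Accepted Phase1}, then conditioning on $c_\ell$, invoking the induction hypothesis for the probability of reaching round $\ell$, and using the independence of the freshly drawn $\RS{X}_\ell$ from earlier rounds to factor out the round-$\ell$ rejection probability. The only cosmetic difference is in the combinatorial bookkeeping---you sum the acceptance probabilities via the hockey-stick identity $\sum_{j}\binom{j-1}{n-1}=\binom{h+\ell}{n}$, whereas the paper conditions further on $\RS{X}_\ell$ and counts the rejection pairs $(\alpha,\CS{X})$ by an exchange of order of summation---but both routes collapse to the same ratio $\binom{h+\ell}{n}/\binom{h+\ell-1}{n-1}=(h+\ell)/n$ and hence the recurrence $N_\ell=(1-1/n)N_{\ell-1}$.
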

\begin{proof}
For $\ell=n-h$, by Lemma~\ref{Not Accepted Phase1}, the claim holds. Now assume by induction that the claim holds for $\ell - 1$. Fix $\CS{U}_\ell \subseteq \CS{C}$. Let $\FS{F}_m$ be the family of subsets of $\CS{U}_\ell$ of cardinality $m$. For $\alpha \in \CS{U}_\ell$ let $\FS{F}_{m}^\alpha \subseteq \FS{F}_{m}$ be the family of subsets in which the maximum value is greater than $\alpha$. Recall that $X_\ell$ is the subset drawn by $\ALG$ at round $\ell$. For the purpose of the analysis, we assume that $\ALG$ draws all of $X_1,\dots,X_n$, regardless of the round in which it picks the secretary. Observe that conditioned on $S_{\ell} = \CS{U}_\ell$ and $c_\ell = \alpha$, $\ALG$ rejects $c_\ell$ if and only if $X_\ell \in \FS{F}_{n-1}^\alpha$. By this observation along with the law of total probability, we have
\begin{align}
\begin{split}
\Pr \left[\given{\bigwedge_{k=qn+1}^{\ell} \neg \ES{M}_\ell}{\RS{S}_\ell = \CS{U}_\ell}\right]
 &= \sum_{\alpha \in \CS{U}_\ell} \sum_{\CS{X} \in \FS{F}^{\alpha}_{n-1}}
    \Pr \left[\given{\bigwedge_{k=qn+1}^{\ell-1} \neg \ES{M}_\ell}{ \RS{S}_{\ell}=\CS{U}_\ell, c_\ell = \alpha, X_\ell = \CS{X} } \right]\\
    & \qquad \qquad \qquad \cdot \Pr\left[ \given{c_\ell = \alpha, \RS{X}_\ell = \CS{X}}{\RS{S}_\ell = \CS{U}_\ell}\right].~\label{eq:p_rej}
\end{split}
\end{align}
In addition, conditioned on $S_{\ell -1}=\CS{U}_\ell \setminus \left\{c_\ell\right\}$ the event $\bigwedge_{k=qn+1}^{\ell - 1} \neg \ES{M}_\ell$ is independent of $X_\ell$. Therefore, we get
\begin{align}
\begin{split}
    \Pr \left[\given{\bigwedge_{k=qn+1}^{\ell-1} \neg \ES{M}_\ell}{ \RS{S}_{\ell}=\CS{U}_\ell, c_\ell = \alpha, X_\ell = \CS{X} } \right] 
    &=
    \Pr \left[\given{\bigwedge_{k=qn+1}^{\ell-1} \neg \ES{M}_\ell}{ \RS{S}_{\ell-1}=\CS{U}_\ell \setminus \left\{ \alpha\right\} } \right]  \\
    &= \frac{h+qn}{n} \cdot \left(1-\frac{1}{n} \right)^{\ell - 1 - (n-h)},
\end{split}~\label{eq:p_induction}
\end{align}
where the second equality follows from the induction hypothesis. Now by Step~\ref{step 2} and since $X_\ell$ is a uniformly random subset of $\CS{U}_\ell \setminus \left\{c_\ell\right\}$, for every $\alpha \in \CS{U}_\ell$ and $\CS{X} \in \FS{F}^{\alpha}_{n-1}$ we have
 \begin{align}
 \begin{split}
 \Pr\left[ \given{c_\ell = \alpha, \RS{X}_\ell = \CS{X}}{\RS{S}_\ell = \CS{U}_\ell}\right] &= \Pr\left[ \given{c_\ell = \alpha}{\RS{S}_\ell = \CS{U}_\ell}\right] \cdot \Pr\left[ \given{\RS{X}_\ell = \CS{X}}{c_\ell = \alpha, \RS{S}_\ell = \CS{U}_\ell}\right ] \\
 &= \frac{1}{h + \ell} \cdot \frac{1}{\binom{h + \ell -1}{n-1}}.
 \end{split}~\label{eq:p_c_l_x_l}
 \end{align}
Substituting~\eqref{eq:p_induction} and~\eqref{eq:p_c_l_x_l} in Equation \eqref{eq:p_rej}, we get
\begin{align*}
\Pr \left[\given{\bigwedge_{k=qn+1}^{\ell} \neg \ES{M}_\ell}{\RS{S}_\ell = \CS{U}_\ell}\right]
 &= \sum_{\alpha \in \CS{U}_\ell} \sum_{\CS{X} \in \FS{F}_{n-1}^{\alpha}}
    \frac{h+qn}{n} \cdot \left(1-\frac{1}{n} \right)^{\ell-1-(n-h)} 
    \cdot \frac{1}{\ell + h} \cdot \frac{1}{\binom{h + \ell -1}{n-1}} \\
 &= \frac{h+qn}{n} \cdot \left(1-\frac{1}{n} \right)^{\ell-1-(n-h)} 
    \cdot \frac{1}{\ell + h} \cdot \frac{1}{\binom{h + \ell -1}{n-1}}   
    \sum_{\alpha \in \CS{U}_\ell}{\sum_{\CS{X} \in \FS{F}_{n-1}^{\alpha}}{1}}.
 \end{align*}
For a subset $\CS{X} \subseteq \CS{U}_\ell$ let $\CS{X}^{-} = \CS{X} \setminus \max\left\{ \CS{X} \right\}$. By exchange of order of summation, we have $ \sum_{\alpha \in \CS{U}_\ell} \sum_{\CS{X} \in \FS{F}^{\alpha}_{n-1}} {1} = \sum_{\CS{X} \in \FS{F}_n} \sum_{\alpha \in \CS{X}^{-}} 1 = \binom{h + \ell}{n}(n-1) $. Overall, we get
 \begin{align*}
  \Pr \left[\given{\bigwedge_{k=qn+1}^{\ell} \neg \ES{M}_\ell}{\RS{S}_\ell = \CS{U}_\ell}\right]
  &= \frac{h+qn}{n} \cdot \left(1-\frac{1}{n} \right)^{\ell-1-(n-h)} 
    \cdot \frac{n-1}{\ell + h} \cdot \frac{\binom{h + \ell}{n}}{\binom{h + \ell -1}{n-1}} \\
 &= \frac{h+qn}{n} \cdot \left(1-\frac{1}{n} \right)^{\ell-(n-h)}. \qedhere
\end{align*}
\end{proof}

\begin{lemma}~\label{Secretary History Lemma}
For $\ell \geq n - h + 1 $, we have 
\[ 
\E{\RS{R}_\ell} = \frac{h+qn}{n}\cdot\left(1 - \frac{1}{n}\right)^{\ell - (n - h + 1)} \cdot \frac{1}{n}\E{\OPT}.
\]
\end{lemma}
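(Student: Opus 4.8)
The plan is to mirror the structure of the proof of Lemma~\ref{Secretary Lemma}: fix the set $\RS{S}_\ell$ of candidates observed through round $\ell$, compute the conditional expectation $\E{\given{\RS{R}_\ell}{\RS{S}_\ell=\CS{U}_\ell}}$, and then take the law of total expectation over $\RS{S}_\ell$. The profit $\RS{R}_\ell$ equals $v(c_\ell)$ exactly when $\ES{M}_\ell$ occurs, and in Phase~\ref{phase 2} the event $\ES{M}_\ell$ factors as ``reaching round $\ell$'' (candidates $qn+1,\dots,\ell-1$ are all rejected) together with the local acceptance test $c_\ell > \max\{\RS{X}_\ell\}$. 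The whole argument is really just a matter of separating these two pieces cleanly.

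First I would condition additionally on $c_\ell=\alpha$. Given $\RS{S}_\ell=\CS{U}_\ell$ and $c_\ell=\alpha$ we have $\RS{S}_{\ell-1}=\CS{U}_\ell\setminus\{\alpha\}$, a set of size $h+\ell-1$, and---exactly as established inside the proof of Lemma~\ref{Not Accepted Phase2}---the event of reaching round $\ell$ depends only on the process that generates $c_1,\dots,c_{\ell-1}$ from $\RS{S}_{\ell-1}$, hence is independent of the fresh draw $\RS{X}_\ell$. Crucially, the probability of reaching round $\ell$ is the quantity computed in Lemma~\ref{Not Accepted Phase2} with $\ell$ replaced by $\ell-1$, namely the \emph{same constant} $\frac{h+qn}{n}\left(1-\frac1n\right)^{\ell-1-(n-h)}$ for every conditioning set of size $h+\ell-1$; in particular it does not depend on $\alpha$, so it pulls out of the sum over $\alpha$.

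What remains is $\frac{1}{h+\ell}\sum_{\alpha\in\CS{U}_\ell} v(\alpha)\Pr[\alpha>\max\{\RS{X}_\ell\}]$, where $\RS{X}_\ell$ is a uniform $(n-1)$-subset of $\CS{U}_\ell\setminus\{\alpha\}$. I would evaluate this by the observation that drawing $c_\ell$ uniformly from $\CS{U}_\ell$ and then $\RS{X}_\ell$ uniformly from the remaining elements is the same as drawing a uniform $n$-subset $\RS{Y}=\{c_\ell\}\cup\RS{X}_\ell$ of $\CS{U}_\ell$ together with a uniformly chosen element $c_\ell$ inside it, and that $c_\ell>\max\{\RS{X}_\ell\}$ holds precisely when $c_\ell=\max\{\RS{Y}\}$. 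Thus the sum equals $\E{v(c_\ell)\,\mathbf{1}[c_\ell=\max\{\RS{Y}\}]}=\frac1n\E{\max\{\RS{Y}\}}$; the same factor can be checked directly from the binomial identity $n\binom{h+\ell}{n}=(h+\ell)\binom{h+\ell-1}{n-1}$. Since Phase~\ref{phase 2} has $h+\ell\ge n+1$, the $n$-subset $\RS{Y}$ is well defined.

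Finally I would apply the law of total expectation over $\RS{S}_\ell$. Because $\RS{S}_\ell$ is a uniform $(h+\ell)$-subset of $\CS{C}$ and $\RS{Y}$ is a uniform $n$-subset of $\RS{S}_\ell$, the composition makes $\RS{Y}$ a uniform $n$-subset of all of $\CS{C}$, and since $\RS{O}$ is itself a uniform $n$-subset we get $\E{\max\{\RS{Y}\}}=\E{\OPT}$; the resulting equality (rather than an inequality) reflects that $\RS{Y}$ has size exactly $n=|\RS{O}|$, in contrast to the smaller observed set governing Phase~\ref{phase 1}. Rewriting $\ell-1-(n-h)=\ell-(n-h+1)$ then yields the stated formula. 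The one point demanding care is the independence bookkeeping in the factorization of $\ES{M}_\ell$---that ``reaching round $\ell$'' carries a probability independent both of the fresh draw $\RS{X}_\ell$ and of the identity of $\alpha$---which is precisely what Lemma~\ref{Not Accepted Phase2} was set up to provide, so I expect no genuine obstacle beyond making that separation explicit.
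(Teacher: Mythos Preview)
Your proposal is correct and follows essentially the same approach as the paper: both arguments hinge on Lemma~\ref{Not Accepted Phase2} for the reaching probability and on the observation that $\RS{X}_\ell\cup\{c_\ell\}$ is a uniformly random $n$-subset of $\CS{C}$, hence has expected maximum equal to $\E{\OPT}$. The only cosmetic difference is that the paper conditions directly on the pair $(\RS{S}_\ell,\,\RS{X}_\ell\cup\{c_\ell\})$ rather than on $(\RS{S}_\ell,\,c_\ell=\alpha)$ before summing over $\alpha$, but the underlying computation is the same.
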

\begin{proof}
By Step~\ref{step 1}, $\RS{S}_\ell \subseteq \CS{C}$ is a uniformly random subset of cardinality $h+\ell$, by Step~\ref{step 2}, $c_\ell$ is uniformly random element of $\RS{S}_\ell$ and by the definition of the algorithm, $\RS{X}_\ell \subseteq \RS{S}_{\ell} \setminus \left\{c_\ell\right\}$ is uniformly random subset of cardinality $n-1$. Therefore, $\RS{X}_\ell \cup \{c_\ell\} \subseteq \CS{C}$ is a uniformly random subset of size $n$. Hence, $\E{ \max\left\{ \RS{X}_\ell \cup \{c_\ell\} \right\} } = \E{\OPT}$. Also, the probability that $c_\ell$ has the maximum value in $\RS{X}_\ell \cup \{c_\ell\} $ is $1/n$. We now bound $\E{\given{R_\ell}{\RS{S}_\ell} = \CS{U}_\ell, \RS{X}_\ell\cup \left\{ c_\ell \right\}=\CS{V}_\ell }$ for any $\CS{U}_\ell \subseteq \CS{C}$ and $\CS{V}_\ell \subseteq \CS{U}_\ell$ of appropriate cardinalities. We have
\begin{align*}
\E{\given{R_\ell}{\RS{S}_\ell} = \CS{U}_\ell, \RS{X}_\ell\cup \left\{ c_\ell \right\}=\CS{V}_\ell } 
&= \max\left\{\CS{V}_\ell\right\} \Pr \left[\given{\ES{M}_\ell \bigwedge_{k=qn+1}^{\ell-1} \neg \ES{M}_k}{S_\ell = \CS{U_\ell}, \RS{X}_\ell\cup \left\{ c_\ell \right\}=\CS{V}_\ell }\right] \\
&= \max\left\{\CS{V}_\ell\right\} \frac{1}{n} \Pr \left[\given{\bigwedge_{k=qn+1}^{\ell-1} \neg \ES{M}_k}{S_{\ell-1} = \CS{U_\ell} \setminus{\max\left\{ \CS{V}_\ell \right\}}}\right]\\
&= \max\left\{\CS{V}_\ell\right\} \frac{1}{n} \cdot \frac{h+qn}{n} \left( 1- \frac{1}{n}\right)^{\ell - \left(n-h+1\right)},
\end{align*}
where the first equality in this derivation is due to the fact that $c_\ell$ is accepted if and only if $c_\ell = \max\left\{\CS{V}_\ell\right\}$, and the last equality follows from Lemma~\ref{Not Accepted Phase2}. We can now conclude
\begin{align*}
    \E{R_\ell} &=  \E{\E{\given{R_\ell}{\RS{S}_\ell}, \RS{X}_\ell\cup \left\{ c_\ell \right\}}} \\
    &= \E{\max\left\{\RS{X}_\ell\cup \left\{ c_\ell \right\} \right\} \frac{1}{n} \cdot \frac{h+qn}{n} \left( 1- \frac{1}{n}\right)^{\ell - \left(n-h+1\right)}}\\
    &=  \frac{h+qn}{n} \cdot \left(1 - \frac{1}{n}\right)^{\ell - (n - h + 1)} \cdot \frac{1}{n}\E{\OPT}. \qedhere
\end{align*}
\end{proof}

\begin{theorem}~\label{secretary CR}
Algorithm~\ref{alg:ro_secretary} is $c(h,n)$-competitive, where \begin{equation*}
    c(h,n) = \begin{cases}
    e^{-e^{-h/n}} &  0 \leq h \leq rn\\
    \frac{h}{n}\left(1-\ln\left(\frac{h}{n}\right)-e^{-h/n}\right) & rn < h \leq n-1,
    \end{cases}
\end{equation*}
and $r \approx 0.567$ is the solution of $e^{-e^{-x}} - x = 0$.
\end{theorem}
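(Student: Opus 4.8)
The plan is to evaluate $\E{\ALG}$ exactly as a sum of per-round profits and then lower-bound the resulting closed form. Since no candidate is accepted during the sampling phase, $\E{\ALG}=\sum_{\ell=qn+1}^{n}\E{\RS{R}_\ell}$, and this sum breaks at $\ell=n-h$ into the Phase~\ref{phase 1} rounds $qn+1\le \ell\le n-h$ and the Phase~\ref{phase 2} rounds $n-h+1\le \ell\le n$, for which Lemma~\ref{Secretary Lemma} and Lemma~\ref{Secretary History Lemma} give exact expressions. This presupposes $qn<n-h$ so that Phase~\ref{phase 1} is nonempty, which I would note holds because the chosen $q$ satisfies $h/n+q\le e^{-e^{-h/n}}<1$.

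First I would substitute the two lemmas. The Phase~\ref{phase 1} contribution collapses into a harmonic tail: after reindexing $m=h+\ell-1$ it equals $\frac{h+qn}{n}\bigl(\sum_{m=h+qn}^{n-1}\frac1m\bigr)\E{\OPT}$. The Phase~\ref{phase 2} contribution is a geometric series; writing $k=\ell-(n-h+1)$ and summing $\sum_{k=0}^{h-1}(1-1/n)^k=n\bigl(1-(1-1/n)^h\bigr)$ gives $\frac{h+qn}{n}\bigl(1-(1-1/n)^h\bigr)\E{\OPT}$. Adding the two yields the exact identity $\E{\ALG}=\frac{h+qn}{n}\bigl(\sum_{m=h+qn}^{n-1}\frac1m+1-(1-1/n)^h\bigr)\E{\OPT}$.

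Next I would convert this into a clean lower bound via two elementary estimates, each in the direction that decreases the expression: the harmonic tail is bounded below by the integral, $\sum_{m=h+qn}^{n-1}\frac1m\ge \ln\frac{n}{h+qn}$, and $(1-1/n)^h\le e^{-h/n}$ gives $1-(1-1/n)^h\ge 1-e^{-h/n}$. Setting $s:=h/n+q=(h+qn)/n$, these combine to $\E{\ALG}\ge s\bigl(1-\ln s-e^{-h/n}\bigr)\E{\OPT}=:f(s)\,\E{\OPT}$, so it remains to show that the algorithm's choice of $q$ makes $f(s)=c(h,n)$ in each regime.

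Finally I would plug in $q=\max\{e^{-e^{-h/n}}-h/n,\,0\}$, which sets $s=\max\{e^{-e^{-h/n}},\,h/n\}$. When $h/n\le r$ the fixed-point definition $e^{-e^{-r}}=r$ gives $e^{-e^{-h/n}}\ge h/n$, so $s=e^{-e^{-h/n}}$; since $\ln s=-e^{-h/n}$, the bracket $1-\ln s-e^{-h/n}$ collapses to $1$ and $f(s)=s=e^{-e^{-h/n}}$, matching the first case. When $h/n>r$ the clamp forces $q=0$ and $s=h/n$, so $f(s)=\frac hn\bigl(1-\ln\frac hn-e^{-h/n}\bigr)$, matching the second case. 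The only genuinely delicate points are bookkeeping rather than depth: verifying that the two phase ranges tile $\{qn+1,\dots,n\}$ with neither overlap nor gap, and spotting the exact cancellation $\ln e^{-e^{-h/n}}=-e^{-h/n}$ that turns the Phase~\ref{phase 1} logarithm and the Phase~\ref{phase 2} term into the strikingly simple $e^{-e^{-h/n}}$. The value of $q$ is precisely the maximizer of $f$, which is what motivates its definition.
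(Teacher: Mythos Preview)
Your proposal is correct and follows essentially the same approach as the paper's proof: sum the per-round profits using Lemma~\ref{Secretary Lemma} and Lemma~\ref{Secretary History Lemma}, bound the harmonic tail by the integral and $(1-1/n)^h$ by $e^{-h/n}$, and then substitute the chosen $q$. The only difference is that you spell out the final substitution and case analysis explicitly (including the cancellation $\ln e^{-e^{-h/n}}=-e^{-h/n}$ and the verification that Phase~\ref{phase 1} is nonempty), whereas the paper simply states that the bound is maximized at the given $q$ and yields $c(h,n)$.
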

Note that for $h=0$, we are in the settings of the ordinary secretary problem. By Theorem~\ref{secretary CR}, the competitive-ratio of Algorithm~\ref{alg:ro_secretary} in this case is $c(0,n)=1/e$. This is not surprising since in this case we set $q(0,n) = 1/e$ and our algorithm is identical to the classical optimal algorithm for the secretary problem. 

\begin{proof}[Proof of Theorem~\ref{secretary CR}]
We use Lemma~\ref{Secretary Lemma} and Lemma~\ref{Secretary History Lemma} to sum over the expected profit of the algorithm in each phase. For Phase~\ref{phase 1}, we get by Lemma~\ref{Secretary Lemma} that
\begin{align*}
\begin{split}
   \sum_{\ell=qn+1}^{n-h}{\E{\RS{R}_\ell}} &= \frac{1}{n}\E{\OPT} \sum_{\ell=qn+1}^{n-h}\frac{h+qn}{h+\ell-1}\\
   &= \E{\OPT}\frac{h+qn}{n}\sum_{k=qn+h}^{n-1}\frac{1}{k} \\
   &\geq \E{\OPT}\frac{h+qn}{n}\ln\left(\frac{n}{qn+h}\right),
\end{split}
\end{align*}
where the inequality follows from the fact that $\sum_{k=qn+h}^{n-1}\frac{1}{k}\geq\intop_{qn+h}^{n}\frac{1}{x}dx=\ln\left(\frac{n}{qn+h}\right)$.
For Phase~\ref{phase 2}, we get by Lemma~\ref{Secretary History Lemma} that
\begin{align*}
\begin{split}
    \sum_{\ell=n-h+1}^{n}{\E{\RS{R}_\ell}} &=\E{\OPT}\frac{1}{n} \cdot \frac{h+qn}{n}\sum_{\ell=n-h+1}^{n}\left(1 - \frac{1}{n}\right)^{\ell - (n - h + 1)}\\
    &= \E{\OPT}\frac{1}{n} \cdot \frac{h+qn}{n}\sum_{k=0}^{h-1}\left(1 - \frac{1}{n}\right)^{k} \\
    &= \E{\OPT}\frac{h+qn}{n}\left(1-\left(1-\frac{1}{n}\right)^h\right)\\
    &\geq \E{\OPT}\frac{h+qn}{n}\left(1-e^{-\frac{h}{n}}\right).
\end{split}
\end{align*}
where the inequality follows from the fact that $1 + x \leq e^x$ for all $x \in \mathbb{R}$. Overall, 
\begin{align*}
    \E{\ALG} &= \sum_{\ell=qn+1}^{n-h}{\E{R_\ell}} + \sum_{\ell=n-h+1}^{n}{\E{R_\ell}} \geq \E{\OPT}\frac{h+qn}{n} \left( \ln\left(\frac{n}{qn+h}\right) +\left(1-e^{-\frac{h}{n}}\right) \right).
\end{align*}
This bound is maximized at $q(h,n)=\max\left\{ e^{-e^{-h/n}} - \frac{h}{n} ,0\right\}$ for which we get $\E{\ALG} \geq c(h,n)\cdot \E{\OPT}$.
\end{proof}
Note that for fixed $n \in \mathbb{N}$, $q(h,n)=\max\left\{ e^{-e^{-h/n}} - \frac{h}{n} ,0\right\}$ is a monotone non-increasing function of $h$, meaning that the sampling phase is getting shorter as the history size grows, and for $h > rn$, the sampling phase is skipped completely.

\begin{theorem}~\label{thm:long_his_sec}
For $h \geq n$, Algorithm~\ref{alg:ro_secretary} is $\left(1-\left(1-\frac{1}{n}\right)^n\right)$-competitive.
\end{theorem}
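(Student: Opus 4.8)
The plan is to observe that for $h \geq n$ Algorithm~\ref{alg:ro_secretary} degenerates into a pure Phase~\ref{phase 2} procedure, and then to re-run the Phase~\ref{phase 2} analysis (Lemma~\ref{Not Accepted Phase2} and Lemma~\ref{Secretary History Lemma}) from a fresh base case. Indeed, when $h \geq n$ we have $e^{-e^{-h/n}} - h/n < 0$, so $q = 0$ and the sampling phase is empty; moreover $|T_\ell| = h + \ell > n$ already at $\ell = 1$, so the Phase~\ref{phase 1} condition $|T_\ell| \leq n$ never holds. Hence every round $\ell \in \{1,\dots,n\}$ is a Phase~\ref{phase 2} round, and the only real issue is that the recursion of Lemma~\ref{Not Accepted Phase2} was anchored at $\ell = n - h$ via Lemma~\ref{Not Accepted Phase1}, an index that is now non-positive.

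First I would supply the missing base case at $\ell = 1$. Conditioned on $\RS{S}_1 = \CS{U}_1$, the candidate $c_1$ is uniform in $\CS{U}_1$ and $\RS{X}_1$ is a uniform $(n-1)$-subset of $T_0 = \RS{H} = \CS{U}_1 \setminus \{c_1\}$, so $\RS{X}_1 \cup \{c_1\}$ is a uniform $n$-subset of $\CS{U}_1$ in which $c_1$ is a uniform element; thus $\ALG$ accepts $c_1$ (i.e.\ $c_1 > \max\{\RS{X}_1\}$) with probability exactly $1/n$, giving $\Pr\left[\given{\neg \ES{M}_1}{\RS{S}_1 = \CS{U}_1}\right] = 1 - 1/n$. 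The inductive step of Lemma~\ref{Not Accepted Phase2} is otherwise unchanged: its core identity $\frac{n-1}{\ell+h}\cdot\frac{\binom{h+\ell}{n}}{\binom{h+\ell-1}{n-1}} = \frac{n-1}{n}$ holds for every $\ell$, so each additional round contributes a multiplicative factor $1 - 1/n$. Carrying the recursion up from the new base case then yields $\Pr\left[\given{\bigwedge_{k=1}^{\ell}\neg \ES{M}_k}{\RS{S}_\ell = \CS{U}_\ell}\right] = \left(1-\frac1n\right)^{\ell}$ for all $1 \leq \ell \leq n$.

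With this in hand the per-round profit follows as in Lemma~\ref{Secretary History Lemma}: conditioning on $\RS{S}_\ell = \CS{U}_\ell$ and $\RS{X}_\ell \cup \{c_\ell\} = \CS{V}_\ell$, the probability that $c_\ell$ is accepted after $\ell - 1$ rejections factors as $\frac1n \cdot \left(1-\frac1n\right)^{\ell-1}$, and since $\RS{X}_\ell \cup \{c_\ell\}$ is a uniform $n$-subset of $\CS{C}$ we have $\E{\max\{\RS{X}_\ell \cup \{c_\ell\}\}} = \E{\OPT}$; hence $\E{\RS{R}_\ell} = \frac1n \left(1-\frac1n\right)^{\ell-1}\E{\OPT}$. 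Summing the resulting geometric series over $\ell = 1,\dots,n$ gives
\[
\E{\ALG} = \frac{1}{n}\E{\OPT}\sum_{\ell=1}^{n}\left(1-\frac1n\right)^{\ell-1} = \frac{1}{n}\E{\OPT}\cdot n\left(1-\left(1-\frac1n\right)^{n}\right) = \left(1-\left(1-\frac1n\right)^{n}\right)\E{\OPT},
\]
which is exactly the claimed competitive-ratio (in fact with equality). The only genuinely new ingredient is this base-case computation; everything else is a verbatim reuse of the Phase~\ref{phase 2} machinery with $q = 0$, so I expect the main point requiring care to be verifying that the recursion of Lemma~\ref{Not Accepted Phase2} indeed remains valid all the way down to $\ell = 1$ rather than stopping at $\ell = n - h$.
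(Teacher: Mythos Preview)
Your proposal is correct and follows essentially the same approach as the paper's own proof in Appendix~\ref{apx:long_hist}: both observe that for $h\geq n$ the algorithm is in Phase~\ref{phase 2} from round~1, re-anchor the recursion of Lemma~\ref{Not Accepted Phase2} at a new base case to obtain $\Pr\bigl[\bigwedge_{k=1}^{\ell}\neg\ES{M}_k\mid\RS{S}_\ell=\CS{U}_\ell\bigr]=(1-1/n)^{\ell}$, feed this into the argument of Lemma~\ref{Secretary History Lemma} to get $\E{\RS{R}_\ell}=\frac{1}{n}(1-1/n)^{\ell-1}\E{\OPT}$, and sum the geometric series. Your write-up is in fact more explicit than the paper's terse appendix proof, in that you actually carry out the base-case computation at $\ell=1$ rather than merely asserting that it changes.
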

The proof is similar to the case where $h \leq n-1$. In this case the algorithm starts operating directly from Phase~\ref{phase 2}, thus, a small modification to the proof is need. For completeness, we provide a proof in Appendix~\ref{apx:long_hist}.

We next prove an upper bound on the competitive-ratio for the problem. A similar asymptotic result was established in~\cite{DBLP:conf/ec/CorreaDFS19}, we provide here a proof that applies for any $n\in \mathbb{N}$. 
\begin{theorem}~\label{ro-bound}
Any online algorithm for the $h$-RO-SP has a competitive-ratio of at most 
\begin{equation*}
    \begin{cases}
    \frac{1}{e} \cdot \frac{n+h}{n} + \frac{1}{n}& \frac{h}{n+h} \leq \frac{1}{e} \\
    \frac{h}{n}\ln\left(\frac{h+n}{h}\right) + \frac{1}{n} & \text{\normalfont{otherwise}}.
    \end{cases}
\end{equation*}
\end{theorem}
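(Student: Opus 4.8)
The plan is to mirror the reduction used in the proof of Theorem~\ref{ao-upper-bound} and then collapse the whole question to the classical secretary problem with a forced sample. First I would fix a $c$-competitive algorithm $\ALG$ and, exactly as in Theorem~\ref{ao-upper-bound}, invoke Moran et al.~\cite{DBLP:journals/jacm/MoranSM85} to obtain an infinite set $\CS{V}\subseteq\mathbb{N}$ on which $\ALG$ is order-invariant. On $\CS{V}$ I would build, for a small $\varepsilon'>0$, an instance $\mathcal{I}=(\CS{C},h)$ whose values are steeply separated, $\alpha_1>\alpha_2/\varepsilon'>\alpha_2>\dots>\alpha_{n+h}$, so that any selection other than the global maximum $\alpha_1$ is worth at most $\varepsilon'\alpha_1$. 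This reduces the problem to a single quantity: writing $\ES{A}$ for the event that $\ALG$ selects $\alpha_1$, we have $\E{\ALG}\le\alpha_1(\Pr[\ES{A}]+\varepsilon')$, while $\E{\OPT}\ge\alpha_1\Pr[\alpha_1\in\RS{O}]=\alpha_1\cdot\frac{n}{n+h}$. Hence $c\le\frac{n+h}{n}(\Pr[\ES{A}]+\varepsilon')$, and it suffices to upper bound $\Pr[\ES{A}]$, the probability of catching the overall best, and then let $\varepsilon'\to0$.

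The core step is to bound $\Pr[\ES{A}]$. I would first observe that the law of $(\RS{H},\text{online order})$ is exactly that of a uniformly random permutation of $\CS{C}$ whose first $h$ elements form the (unordered) history and whose remaining elements arrive online; since only $\max\RS{H}$ matters, this is precisely the classical secretary problem on $n+h$ items in which a uniformly random $h$-subset is observed but unselectable. Because $\alpha_1$ is the global record, accepting any non-record can only forfeit the chance of catching it, so I may assume $\ALG$ accepts only records (left-to-right maxima of $\RS{H}\cup\{c_1,\dots,c_\ell\}$); such a strategy is described by acceptance probabilities $q_\ell\in[0,1]$. Conditioning on the online position of $\alpha_1$ and using that the round-$\ell$ record events are independent with probability $\frac{1}{h+\ell}$ gives the exact expression
\begin{equation*}
\Pr[\ES{A}]=\frac{1}{n+h}\sum_{\ell=1}^{n}q_\ell\prod_{j=1}^{\ell-1}\Bigl(1-\tfrac{q_j}{h+j}\Bigr).
\end{equation*}
Invoking the classical optimality of threshold rules for selecting the single best (which persists with the forced sample, as the latter only constrains the effective sample size to be at least $h$), this is maximized by rejecting until an effective sample size $s\in\{h,\dots,n+h-1\}$ and then taking the first record, yielding $\Pr[\ES{A}]\le P^\ast:=\max_{h\le s\le n+h-1}\frac{s}{n+h}\sum_{k=s}^{n+h-1}\frac1k$.

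It then remains to estimate $P^\ast$, which is routine. I would bound the harmonic tail by an integral, $\sum_{k=s}^{n+h-1}\frac1k\le\frac1s+\ln\frac{n+h}{s}$, so that $\frac{n+h}{n}P^\ast\le\frac1n+\max_{s\ge h}\phi(s)$ with $\phi(s)=\frac{s}{n}\ln\frac{n+h}{s}$. The function $\phi$ is maximized over the reals at $s=(n+h)/e$, with value $\frac{n+h}{en}$. If $\frac{h}{n+h}\le\frac1e$ this maximizer is feasible ($\ge h$) and gives the first branch $\frac1e\cdot\frac{n+h}{n}+\frac1n$; otherwise $\phi$ is decreasing on $[h,n+h]$, the maximum is attained at $s=h$, and I obtain the second branch $\frac{h}{n}\ln\frac{n+h}{h}+\frac1n$ (the two branches agree at $\frac{h}{n+h}=\frac1e$). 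Letting $\varepsilon'\to0$ finishes the argument, and the $\frac1n$ in both branches is exactly the error term from the integral estimate of the harmonic sum.

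The main obstacle I anticipate is the core step: justifying rigorously that an \emph{arbitrary} order-invariant, possibly randomized, algorithm cannot beat a record-accepting threshold rule at catching $\alpha_1$, and that the reach probabilities factor as the stated product. The reduction to record-acceptance and the independence of record events are standard for the plain secretary problem, but here they must be carried out with the free sample of size $h$ folded in (equivalently, with the record counter started at $h$); getting the conditioning on the position of $\alpha_1$ and the forced-rejection constraint $s\ge h$ exactly right is the delicate part. Everything else---the order-invariance reduction, the instance construction, and the final optimization of $\phi$---is routine and parallels Theorem~\ref{ao-upper-bound}.
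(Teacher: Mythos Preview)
Your proposal is correct and follows essentially the same route as the paper: the order-invariance reduction via Moran et al., the steep instance, the inequality $c\le\frac{n+h}{n}(\Pr[\ES{A}]+\varepsilon')$, the identification of $\Pr[\ES{A}]$ with the success probability in the classical secretary problem on $n+h$ items subject to rejecting the first $h$, and the integral estimate yielding the two branches are all exactly what the paper does. The only cosmetic difference is that where you write out the product formula and appeal to ``classical optimality of threshold rules'', the paper cites Gilbert--Mosteller directly for the threshold structure of the optimum and makes the reduction explicit by building a simulator $\ALG'$ that feeds $\ALG$ an order-equivalent sequence from $\CS{C}$; this cleanly dispatches the obstacle you flag at the end.
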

\begin{proof}
Fix $n,h \in \mathbb{N}$. We consider a classical settings of the secretary problem in which the input to the online player at each online round is only the rank of the arriving candidate among the sub-sequence of already observed candidates, and the goal is to maximize the probability of accepting the best candidate overall. For ease of presentation, we denote this problem by \textit{SP}. Gilbert and Mosteller~\cite{gilbert1966recognizing} showed that the structure of an optimum strategy for the SP is to reject the first $q$-fraction of candidates, for some $q \in [0,1]$, then accepting any candidate who is the best so far. Using this strategy, the probability of accepting the best candidate on input of size $n+h$ is
\[
q\sum_{\ell = q(n+h)}^{n+h-1}{\frac{1}{\ell}} = q\sum_{\ell = q(n+h)+1}^{n+h-1}{\frac{1}{\ell}} +\frac{1}{n+h}
\leq q\intop_{q(n+h)}^{n+h}{\frac{1}{x} dx} +  \frac{1}{n+h}
=  q\ln\left(\frac{1}{q}\right) + \frac{1}{n+h}.
\]
Subject to an additional constraint that the first $h$ candidates must be rejected, i.e., $q \geq \frac{h}{n+h}$, this bound on the probability is maximized for $q=\max\left\{1/e,h/(n+h)\right\}$. Therefore, if the first $h$ candidates must be rejected, the probability of accepting the best candidate is at most
\begin{align*}
    p = \begin{cases}
    \frac{1}{e}+ \frac{1}{n+h}  & \frac{h}{n+h} \leq \frac{1}{e} \\
    \frac{h}{n+h}\ln \left( \frac{h+n}{h}\right) + \frac{1}{n+h} & \text{otherwise}.
    \end{cases}~\label{eq:pr_accept_best}
\end{align*}

Let $0<\varepsilon \ll 1$ and let $\ALG$ be a $c$-competitive online algorithm for the $h$-RO-SP. Following the discussion in the proof of Theorem~\ref{ao-upper-bound}, there exists an infinite $\CS{V} \subseteq \mathbb{N}$, such that $\ALG$ is order-invariant on $\CS{V}$, and we think of $\ALG$ as a family of functions $P_1,\dots,P_n$. We construct an instance for the $h$-RO-SP, $\mathcal{I}=(\CS{C},h)$ such that $\CS{C}= \left\{\alpha_1,\dots, \alpha_{n+h}\right\} \subseteq \CS{V}$, $\alpha_2 > \alpha_3 > \dots > \alpha_{n+h}$ and $\alpha_1 > \alpha_2 / \varepsilon'$ where $\varepsilon' = \frac{n}{n+h} \varepsilon $. Let $p_1$ be the probability that $\ALG$ accepts $\alpha_1$. We get $\E{\ALG} \leq p_1\alpha_1 + \varepsilon' \alpha_1$, whereas $\E{\OPT} \geq \alpha_1 \frac{n}{n+h}$, therefore
\[
c \leq \frac{\E{\ALG(\mathcal{I})}}{\E{\OPT{(\mathcal{I})}}} 
\leq \frac{\alpha_1\left(p_1+\varepsilon'\right)}{\alpha_1\frac{n}{n+h}}
= \frac{n+h}{n}\left(p_1+\varepsilon'\right).
\]

Using $\ALG$, we construct an algorithm $\ALG'$ for the SP subject to the constraint that the first $h$ candidates must be rejected. $\ALG'$ rejects the first $h$ candidates, then, at round $h+j$ for all $j\in[n]$, it constructs a sequence $x_1,\dots,x_{h+j}$ from the elements of $\CS{C}$ which is order-equivalent to the input sequence  of relative ranks until that point in time. Then, it accepts the current candidate with probability $P_j(x_1,\dots,x_{h+j})$. Observe that the probability of $\ALG'$ to accept the best candidate is exactly $p_1$, thus, $p_1 \leq p$. Overall we get 
\[
c 
\leq  \frac{n+h}{n} \left(p_1 + \varepsilon'\right) 
\leq  \frac{n+h}{n} \left(p + \varepsilon'\right) =  \begin{cases}
    \frac{1}{e} \cdot \frac{n+h}{n} + \frac{1}{n} & \frac{h}{n+h} \leq \frac{1}{e} \\
    \frac{h}{n}\ln\left(\frac{h+n}{h}\right) + \frac{1}{n} & \text{\normalfont{otherwise}}
    \end{cases} + \varepsilon. \qedhere
\]
\end{proof}

We note that Theorem~\ref{ro-bound} can also be derived from the work of Buchbinder et al.~\cite{DBLP:journals/mor/BuchbinderJS14}, by including the constraint that the first $h$ candidates must be rejected in the linear program that characterizes all algorithms for the secretary problem.

\subsection{The Relation to the I.I.D. Prophet Inequality with a Sample}

We denote the prophet inequality for $\IID$ random variables from an unknown distribution, with $h$ training samples by \textit{$h$-IID-PI}. We prove the following simple observation in Appendix~\ref{proof:iid_subsummed}. 

\begin{theorem}~\label{thm:iid_subsummed} 
Let $\ALG$ be a $c$-competitive algorithm for the $h$-RO-SP, then $\ALG$ is a $c$-competitive algorithm for the $h$-IID-PI.
\end{theorem}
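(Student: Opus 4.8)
The plan is to exploit the exchangeability of i.i.d.\ draws to couple an instance of the $h$-IID-PI with a (random) instance of the $h$-RO-SP, and then invoke the competitiveness of $\ALG$ conditionally. First I would set up the $h$-IID-PI instance explicitly: let $\mathcal{D}$ be the (unknown) distribution and let $Y_1,\dots,Y_{n+h}$ be i.i.d.\ draws from $\mathcal{D}$, where $Y_1,\dots,Y_h$ are the training samples revealed upfront and $Y_{h+1},\dots,Y_{n+h}$ are the online variables presented in index order. The prophet benchmark is $\E{\max\{Y_{h+1},\dots,Y_{n+h}\}}$, the expected maximum over the online variables.

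The heart of the argument is the following coupling. Let $\CS{C}$ denote the (multi)set $\{Y_1,\dots,Y_{n+h}\}$ of all $n+h$ draws, and condition on $\CS{C}$. Since the $Y_i$ are i.i.d.\ they are exchangeable, so conditioned on $\CS{C}$ the set of indices assigned to the training samples is a uniformly random $h$-subset, and the remaining online variables appear in a uniformly random order. This is \emph{exactly} the random process defining the $h$-RO-SP on the fixed instance $\mathcal{I}=(\CS{C},h)$: a uniformly random history set $\RS{H}$ of size $h$, with the online set $\RS{O}$ presented in uniformly random order. Hence, conditioned on $\CS{C}$, the input stream seen by $\ALG$ in the IID setting is identically distributed to the input stream $\ALG$ sees on $\mathcal{I}$; the accepted value therefore has the same conditional distribution, and $\max\{Y_{h+1},\dots,Y_{n+h}\}=\max\RS{O}=\OPT(\mathcal{I})$.

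With the coupling in place, I would finish by conditioning and then averaging. For every fixed multiset $\CS{Y}$ of $n+h$ values, applying $c$-competitiveness of $\ALG$ to the $h$-RO-SP instance $\mathcal{I}=(\CS{Y},h)$ gives $\E{\ALG(\mathcal{I})}\geq c\,\E{\OPT(\mathcal{I})}$; by the coupling these equal the conditional expectations $\E{\given{\ALG}{\CS{C}=\CS{Y}}}$ and $\E{\given{\OPT}{\CS{C}=\CS{Y}}}$ in the IID setting. Taking the expectation over the random draw of $\CS{C}$ (i.e.\ over the i.i.d.\ samples $Y_1,\dots,Y_{n+h}$) and using the law of total expectation yields $\E{\ALG}\geq c\,\E{\OPT}$ for the $h$-IID-PI, which is the desired claim.

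The only real subtlety, and the step I would be most careful about, is the exchangeability/coupling step, specifically ensuring it survives the tie-breaking convention. The secretary analysis assumes a consistent total order on the candidates, whereas $\mathcal{D}$ may have atoms and produce ties among the $Y_i$. I would resolve this by attaching an independent uniform random key to each draw and breaking ties by these keys; since the keys are themselves i.i.d.\ and independent of the values, the augmented tuples remain exchangeable, so both the coupling and the identity of the two benchmarks are unaffected. It is also worth stating explicitly that the prophet benchmark (expected maximum of the online variables) coincides with $\OPT$ in the secretary sense, so no slack is introduced in the comparison.
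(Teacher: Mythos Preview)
Your proposal is correct and follows essentially the same route as the paper: condition on the multiset of all $n+h$ i.i.d.\ draws, use exchangeability to identify the conditional law with an $h$-RO-SP instance, apply $c$-competitiveness pointwise, and then average over the draws. Your added care about tie-breaking via auxiliary uniform keys is a nice refinement that the paper leaves implicit.
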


A direct implication of Theorem~\ref{thm:iid_subsummed} is that the upper bound of $1 / \beta \approx 0.745$ by Hill and Kertz~\cite{hill1982comparisons, kertz1986stop} on the prophet inequality for $\IID$ random variables from a known distribution applies to the $h$-RO-SP for any $h$.\footnote{$\beta$ is the unique value solving $\intop_{0}^{1}{\frac{1}{y\left(1-\ln(y)\right)+(\beta-1)}dy}=1$.}
\begin{corollary}~\label{cor:hill_kertz}
Any online algorithm for the $h$-RO-SP has a competitive-ratio of at most $1/\beta \approx 0.745$.
\end{corollary}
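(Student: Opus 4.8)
The plan is to obtain this global upper bound essentially for free by chaining Theorem~\ref{thm:iid_subsummed} with the classical Hill--Kertz impossibility result, exploiting the simple fact that perfect knowledge of a distribution dominates any finite number of samples drawn from it. First I would invoke Theorem~\ref{thm:iid_subsummed}: any $c$-competitive algorithm $\ALG$ for the $h$-RO-SP is also $c$-competitive for the $h$-IID-PI, that is, for the i.i.d.\ prophet inequality against an \emph{unknown} distribution when the player is handed $h$ training samples.

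The crux is a short simulation reduction from this sampled, unknown-distribution setting to the \emph{known}-distribution i.i.d.\ prophet inequality. Fix any distribution $D$ and suppose it is known to the player, who faces $n$ online i.i.d.\ draws from $D$. Define $\ALG_D$ to first draw $h$ fresh i.i.d.\ samples from $D$ on its own, feed them to $\ALG$ as the history set, and then run $\ALG$ on the online sequence. The internally generated samples are distributed exactly like the training samples $\ALG$ expects, and the prophet benchmark $\E{\max_i X_i}$ is untouched, so $\ALG_D$ inherits $\E{\ALG_D} \ge c \cdot \E{\max_i X_i}$ for this $D$. Since $D$ was arbitrary, $\ALG_D$ is a $c$-competitive algorithm for the i.i.d.\ prophet inequality from a known distribution.

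Finally I would apply the Hill--Kertz bound~\cite{hill1982comparisons, kertz1986stop}, which states that no algorithm for the known-distribution i.i.d.\ prophet inequality beats the ratio $1/\beta$, the bound being approached on their hard instances as the number of variables tends to infinity. Because $c$-competitiveness for the $h$-RO-SP is demanded on \emph{every} instance --- in particular on instances whose online set has arbitrarily large size $n$ --- I can let $n \to \infty$ and force $c \le 1/\beta$, which is precisely the corollary.

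The step I expect to require the most care is the simulation reduction: I must verify that self-generating the $h$ samples reproduces the exact joint law underlying the $h$-IID-PI guarantee, so that both the benchmark and $\ALG$'s acceptance probabilities agree with the sampled setting. The only other thing to watch is the asymptotic character of the Hill--Kertz constant: since the optimal i.i.d.\ ratio decreases to $1/\beta$ from above as $n$ grows, the bound is attained only in the limit, which is compatible with the per-instance definition of competitiveness precisely because that definition quantifies over all online-set sizes.
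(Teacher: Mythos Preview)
Your approach is essentially the same as the paper's: the paper presents the corollary as ``a direct implication'' of Theorem~\ref{thm:iid_subsummed} together with the Hill--Kertz bound, without writing out any argument. You simply make explicit the two steps the paper leaves implicit --- the simulation reduction from the sampled/unknown setting to the known-distribution setting (generate the $h$ samples yourself), and the need to take $n$ large so that the Hill--Kertz hard instances drive the achievable ratio down to $1/\beta$. Both points are handled correctly in your write-up, so there is nothing substantive to add.
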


Another consequence of Theorem~\ref{thm:iid_subsummed} is that Algorithm~\ref{alg:ro_secretary} improves upon the results of Correa et al.~\cite{DBLP:conf/ec/CorreaDFS19} for the $h$-IID-PI when $h < n-1$. The improvement is illustrated in Figure~\ref{fig:improvement}.

\begin{figure}
\centering
\caption{Improvement for the $h$-IID-PI}~\label{fig:improvement}
\begin{tikzpicture}[gnuplot]
\path (0.000,0.000) rectangle (11.000,7.000);
\gpcolor{color=gp lt color border}
\gpsetlinetype{gp lt border}
\gpsetlinewidth{1.00}
\draw[gp path] (1.688,0.985)--(1.868,0.985);
\draw[gp path] (9.159,0.985)--(8.979,0.985);
\node[gp node right] at (1.504,0.985) {$0$};
\draw[gp path] (1.688,2.397)--(1.868,2.397);
\draw[gp path] (9.159,2.397)--(8.979,2.397);
\node[gp node right] at (1.504,2.397) {$0.2$};
\draw[gp path] (1.688,3.581)--(1.868,3.581);
\draw[gp path] (9.159,3.581)--(8.979,3.581);
\node[gp node right] at (1.504,3.581) {$1/e$};
\draw[gp path] (1.688,4.514)--(1.868,4.514);
\draw[gp path] (9.159,4.514)--(8.979,4.514);
\node[gp node right] at (1.504,4.514) {$0.5$};
\draw[gp path] (1.688,5.446)--(1.868,5.446);
\draw[gp path] (9.159,5.446)--(8.979,5.446);
\node[gp node right] at (1.504,5.446) {$1-1/e$};
\draw[gp path] (1.688,6.631)--(1.868,6.631);
\draw[gp path] (9.159,6.631)--(8.979,6.631);
\node[gp node right] at (1.504,6.631) {$0.8$};
\draw[gp path] (2.684,0.985)--(2.684,1.165);
\draw[gp path] (2.684,6.631)--(2.684,6.451);
\node[gp node center] at (2.684,0.677) {$0.2$};
\draw[gp path] (3.680,0.985)--(3.680,1.165);
\draw[gp path] (3.680,6.631)--(3.680,6.451);
\node[gp node center] at (3.680,0.677) {$0.4$};
\draw[gp path] (4.676,0.985)--(4.676,1.165);
\draw[gp path] (4.676,6.631)--(4.676,6.451);
\node[gp node center] at (4.676,0.677) {$0.6$};
\draw[gp path] (5.673,0.985)--(5.673,1.165);
\draw[gp path] (5.673,6.631)--(5.673,6.451);
\node[gp node center] at (5.673,0.677) {$0.8$};
\draw[gp path] (6.669,0.985)--(6.669,1.165);
\draw[gp path] (6.669,6.631)--(6.669,6.451);
\node[gp node center] at (6.669,0.677) {$1$};
\draw[gp path] (9.159,0.985)--(9.159,1.165);
\draw[gp path] (9.159,6.631)--(9.159,6.451);
\node[gp node center] at (9.159,0.677) {$1.5$};
\draw[gp path] (1.688,0.985)--(1.688,1.165);
\draw[gp path] (1.688,6.631)--(1.688,6.451);
\node[gp node center] at (1.688,0.677) {$0$};
\draw[gp path] (2.684,0.985)--(2.684,1.165);
\draw[gp path] (2.684,6.631)--(2.684,6.451);
\draw[gp path] (3.680,0.985)--(3.680,1.165);
\draw[gp path] (3.680,6.631)--(3.680,6.451);
\draw[gp path] (4.676,0.985)--(4.676,1.165);
\draw[gp path] (4.676,6.631)--(4.676,6.451);
\draw[gp path] (5.673,0.985)--(5.673,1.165);
\draw[gp path] (5.673,6.631)--(5.673,6.451);
\draw[gp path] (6.669,0.985)--(6.669,1.165);
\draw[gp path] (6.669,6.631)--(6.669,6.451);
\draw[gp path] (9.159,5.446)--(8.979,5.446);
\node[gp node left] at (9.343,5.446) {$1-1/e$};
\draw[gp path] (1.688,6.631)--(1.688,0.985)--(9.159,0.985)--(9.159,6.631)--cycle;
\node[gp node center,rotate=-270] at (0.0,3.808) {competitive-ratio};
\node[gp node center] at (5.423,0.215) {$h/n$};
\node[gp node right] at (8.975,2.682) {Upper bound};
\gpcolor{rgb color={0.000,0.000,0.000}}
\gpsetlinetype{gp lt plot 1}
\gpsetlinewidth{2.00}
\draw[gp path] (4.931,2.682)--(5.847,2.682);
\draw[gp path] (1.688,3.581)--(1.763,3.621)--(1.839,3.660)--(1.914,3.699)--(1.990,3.739)%
  --(2.065,3.778)--(2.141,3.817)--(2.216,3.857)--(2.292,3.896)--(2.367,3.935)--(2.443,3.975)%
  --(2.518,4.014)--(2.594,4.053)--(2.669,4.093)--(2.745,4.132)--(2.820,4.171)--(2.895,4.211)%
  --(2.971,4.250)--(3.046,4.289)--(3.122,4.329)--(3.197,4.368)--(3.273,4.407)--(3.348,4.447)%
  --(3.424,4.486)--(3.499,4.525)--(3.575,4.565)--(3.650,4.604)--(3.726,4.643)--(3.801,4.683)%
  --(3.876,4.722)--(3.952,4.761)--(4.027,4.801)--(4.103,4.840)--(4.178,4.879)--(4.254,4.919)%
  --(4.329,4.958)--(4.405,4.997)--(4.480,5.037)--(4.556,5.076)--(4.631,5.115)--(4.707,5.153)%
  --(4.782,5.191)--(4.858,5.227)--(4.933,5.262)--(5.008,5.296)--(5.084,5.330)--(5.159,5.362)%
  --(5.235,5.394)--(5.310,5.425)--(5.386,5.455)--(5.461,5.485)--(5.537,5.513)--(5.612,5.541)%
  --(5.688,5.569)--(5.763,5.596)--(5.839,5.622)--(5.914,5.648)--(5.989,5.673)--(6.065,5.698)%
  --(6.140,5.722)--(6.216,5.745)--(6.291,5.768)--(6.367,5.791)--(6.442,5.813)--(6.518,5.835)%
  --(6.593,5.856)--(6.669,5.877)--(6.744,5.897)--(6.820,5.917)--(6.895,5.937)--(6.971,5.956)%
  --(7.046,5.975)--(7.121,5.994)--(7.197,6.012)--(7.272,6.030)--(7.348,6.048)--(7.423,6.065)%
  --(7.499,6.082)--(7.574,6.099)--(7.650,6.115)--(7.725,6.131)--(7.801,6.147)--(7.876,6.163)%
  --(7.952,6.178)--(8.027,6.193)--(8.102,6.208)--(8.178,6.223)--(8.253,6.237)--(8.329,6.243)%
  --(8.404,6.243)--(8.480,6.243)--(8.555,6.243)--(8.631,6.243)--(8.706,6.243)--(8.782,6.243)%
  --(8.857,6.243)--(8.933,6.243)--(9.008,6.243)--(9.084,6.243)--(9.159,6.243);
\gpcolor{color=gp lt color border}
\node[gp node right] at (8.975,2.345) { };
\gpcolor{rgb color={1.000,1.000,1.000}}
\gpsetlinewidth{1.00}
\gpsetpointsize{4.00}
\gppoint{gp mark 0}{(5.389,2.345)}
\gpcolor{color=gp lt color border}
\node[gp node right] at (8.975,2.008) {Algorithm~\ref{alg:ro_secretary}};
\gpcolor{rgb color={0.000,0.000,0.000}}
\gpsetlinetype{gp lt plot 0}
\gpsetlinewidth{3.00}
\draw[gp path] (4.931,2.008)--(5.847,2.008);
\draw[gp path] (1.688,3.581)--(1.763,3.621)--(1.839,3.660)--(1.914,3.699)--(1.990,3.739)%
  --(2.065,3.778)--(2.141,3.817)--(2.216,3.856)--(2.292,3.895)--(2.367,3.934)--(2.443,3.973)%
  --(2.518,4.012)--(2.594,4.051)--(2.669,4.090)--(2.745,4.128)--(2.820,4.167)--(2.895,4.205)%
  --(2.971,4.243)--(3.046,4.281)--(3.122,4.319)--(3.197,4.357)--(3.273,4.395)--(3.348,4.432)%
  --(3.424,4.470)--(3.499,4.507)--(3.575,4.544)--(3.650,4.581)--(3.726,4.617)--(3.801,4.654)%
  --(3.876,4.690)--(3.952,4.726)--(4.027,4.762)--(4.103,4.798)--(4.178,4.833)--(4.254,4.868)%
  --(4.329,4.903)--(4.405,4.938)--(4.480,4.973)--(4.556,5.007)--(4.631,5.040)--(4.707,5.071)%
  --(4.782,5.101)--(4.858,5.129)--(4.933,5.156)--(5.008,5.182)--(5.084,5.206)--(5.159,5.230)%
  --(5.235,5.251)--(5.310,5.272)--(5.386,5.291)--(5.461,5.309)--(5.537,5.326)--(5.612,5.342)%
  --(5.688,5.357)--(5.763,5.370)--(5.839,5.383)--(5.914,5.394)--(5.989,5.404)--(6.065,5.413)%
  --(6.140,5.421)--(6.216,5.428)--(6.291,5.433)--(6.367,5.438)--(6.442,5.442)--(6.518,5.444)%
  --(6.593,5.446)--(6.669,5.446)--(6.744,5.446)--(6.820,5.446)--(6.895,5.446)--(6.971,5.446)%
  --(7.046,5.446)--(7.121,5.446)--(7.197,5.446)--(7.272,5.446)--(7.348,5.446)--(7.423,5.446)%
  --(7.499,5.446)--(7.574,5.446)--(7.650,5.446)--(7.725,5.446)--(7.801,5.446)--(7.876,5.446)%
  --(7.952,5.446)--(8.027,5.446)--(8.102,5.446)--(8.178,5.446)--(8.253,5.446)--(8.329,5.446)%
  --(8.404,5.446)--(8.480,5.446)--(8.555,5.446)--(8.631,5.446)--(8.706,5.446)--(8.782,5.446)%
  --(8.857,5.446)--(8.933,5.446)--(9.008,5.446)--(9.084,5.446)--(9.159,5.446);
\gpcolor{color=gp lt color border}
\node[gp node right] at (8.975,1.671) { };
\gpcolor{rgb color={1.000,1.000,1.000}}
\gpsetlinewidth{1.00}
\gppoint{gp mark 0}{(5.389,1.671)}
\gpcolor{color=gp lt color border}
\node[gp node right] at (8.975,1.334) {Correa et al.~\cite{DBLP:conf/ec/CorreaDFS19}};
\gpcolor{rgb color={0.745,0.745,0.745}}
\gpsetlinetype{gp lt plot 2}
\gpsetlinewidth{3.00}
\draw[gp path] (4.931,1.334)--(5.847,1.334);
\draw[gp path] (1.688,3.581)--(1.763,3.581)--(1.839,3.581)--(1.914,3.581)--(1.990,3.581)%
  --(2.065,3.581)--(2.141,3.581)--(2.216,3.581)--(2.292,3.581)--(2.367,3.581)--(2.443,3.581)%
  --(2.518,3.587)--(2.594,3.621)--(2.669,3.655)--(2.745,3.689)--(2.820,3.723)--(2.895,3.756)%
  --(2.971,3.790)--(3.046,3.824)--(3.122,3.858)--(3.197,3.892)--(3.273,3.925)--(3.348,3.959)%
  --(3.424,3.993)--(3.499,4.027)--(3.575,4.061)--(3.650,4.094)--(3.726,4.128)--(3.801,4.162)%
  --(3.876,4.196)--(3.952,4.230)--(4.027,4.263)--(4.103,4.297)--(4.178,4.331)--(4.254,4.365)%
  --(4.329,4.398)--(4.405,4.432)--(4.480,4.466)--(4.556,4.500)--(4.631,4.534)--(4.707,4.567)%
  --(4.782,4.601)--(4.858,4.635)--(4.933,4.669)--(5.008,4.703)--(5.084,4.736)--(5.159,4.770)%
  --(5.235,4.804)--(5.310,4.838)--(5.386,4.872)--(5.461,4.905)--(5.537,4.939)--(5.612,4.973)%
  --(5.688,5.007)--(5.763,5.041)--(5.839,5.074)--(5.914,5.108)--(5.989,5.142)--(6.065,5.176)%
  --(6.140,5.210)--(6.216,5.243)--(6.291,5.277)--(6.367,5.311)--(6.442,5.345)--(6.518,5.379)%
  --(6.593,5.412)--(6.669,5.446)--(6.744,5.446)--(6.820,5.446)--(6.895,5.446)--(6.971,5.446)%
  --(7.046,5.446)--(7.121,5.446)--(7.197,5.446)--(7.272,5.446)--(7.348,5.446)--(7.423,5.446)%
  --(7.499,5.446)--(7.574,5.446)--(7.650,5.446)--(7.725,5.446)--(7.801,5.446)--(7.876,5.446)%
  --(7.952,5.446)--(8.027,5.446)--(8.102,5.446)--(8.178,5.446)--(8.253,5.446)--(8.329,5.446)%
  --(8.404,5.446)--(8.480,5.446)--(8.555,5.446)--(8.631,5.446)--(8.706,5.446)--(8.782,5.446)%
  --(8.857,5.446)--(8.933,5.446)--(9.008,5.446)--(9.084,5.446)--(9.159,5.446);
\gpcolor{color=gp lt color border}
\gpsetlinetype{gp lt border}
\gpsetlinewidth{1.00}
\draw[gp path] (1.688,6.631)--(1.688,0.985)--(9.159,0.985)--(9.159,6.631)--cycle;
\gpdefrectangularnode{gp plot 1}{\pgfpoint{1.688cm}{0.985cm}}{\pgfpoint{9.159cm}{6.631cm}}
\end{tikzpicture}
`
\end{figure}
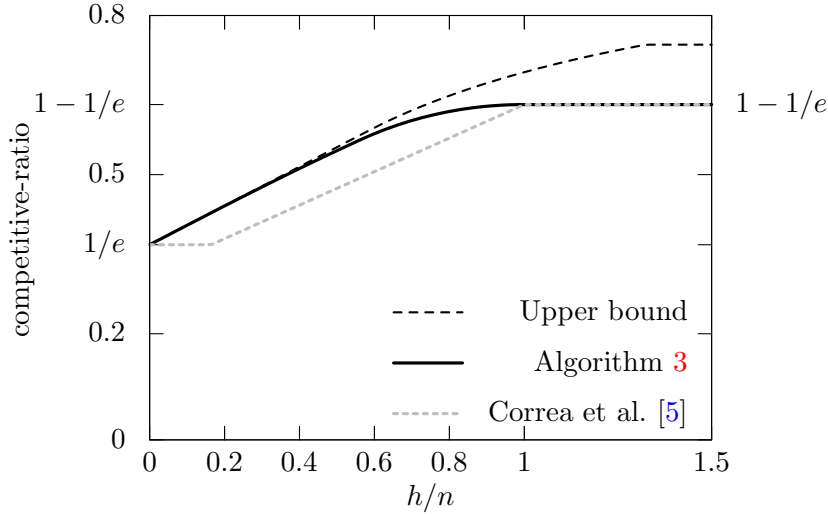

\section{Have it Both Ways}\label{sec:combine}
Clearly, any performance guarantee in the $\AOS$ model applies also to the $\ROS$ model.
A natural question in this context is: can we have a single algorithm that performs well in both models? A very desired property of an algorithm would be a good worst-case performance guarantee, and a better one in case the input arrives in a random order. Next we formalize and prove the following intuitive result: optimizing the algorithm for the worst-case might inherently reduce performance in the random-order case, and vice versa. 

\begin{theorem}\label{thm:wc-ro}
Let $\ALG$ be a $c$-competitive algorithm for the $h$-AO-SP, and let $\varepsilon>0$. Then there exists $n_0\in \mathbb{N}$ such that for any $n\geq n_0$, the competitive-ratio of $\ALG$ for the $h$-RO-SP on instances of length $n+h$ is at most $(1-c+\varepsilon)$.
\end{theorem}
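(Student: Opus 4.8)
The plan is to reduce $\ALG$ to a family of order-type--dependent acceptance functions and then exhibit a single hard instance on which the adversarial-order guarantee and the random-order performance are in direct tension. First I would invoke the order-invariance reduction already used for Theorem~\ref{ao-upper-bound} and Theorem~\ref{ro-bound}: by Moran et al.\ there is an infinite $\CS{V}\subseteq\mathbb{N}$ on which $\ALG$ is order-invariant, so its behaviour is captured by functions $P_1,\dots,P_n$ depending only on the order-type of what has been observed. I would then reuse the ``one dominant value'' instance from those proofs, $\CS{C}=\{\alpha_1,\dots,\alpha_{n+h}\}\subseteq\CS{V}$ with $\alpha_1$ astronomically larger than $\alpha_2>\cdots>\alpha_{n+h}$, so that up to an error controlled by the gap both $\E{\OPT}$ and $\E{\ALG}$ are governed by $\alpha_1$: in either model the competitive-ratio equals, up to $o(1)$, the conditional probability $\Pr[\ALG\text{ accepts }\alpha_1\mid \alpha_1\in\RS{O}]$. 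It therefore suffices to show that in the random-order model this probability is at most $1-c+o(1)$.

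The bridge to the hypothesis is that accepting $\alpha_1$ forces $\ALG$ to reject everything arriving before it, while the adversarial guarantee forces it to be \emph{impatient} on records (left-to-right maxima of the observed sequence, counting $\RS{H}$). Concretely, I would feed $\ALG$ adversarial orders that reveal a chosen prefix and then present $\alpha_1$ as a fresh record while burying the remaining tiny values afterwards; since $\alpha_1$ dominates, $c$-competitiveness forces $\ALG$ to accept that record with probability at least $c-o(1)$, and by order-invariance the \emph{same} acceptance probability governs the identical order-type when it occurs inside a random-order run. In a random-order run $\alpha_1$ is always the last record, so ``accept $\alpha_1$'' means ``reject the first record (and all later ones), then accept $\alpha_1$''; once first-record acceptance is pinned at $\ge c$, every run in which some smaller record precedes $\alpha_1$ contributes at most $1-c$ to the catch probability, and I would sum over the arrival position of $\alpha_1$ and over the number of left-to-right maxima of a random permutation.

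The hard part will be the history interaction and extracting the \emph{tight} constant $1-c$ rather than $1-c+\Theta(h/(n+h))$. A record is an online element exceeding $\max\RS{H}$, and when $\max\RS{H}$ sits near the very top (for instance when $\alpha_2\in\RS{H}$, an event of constant probability) $\alpha_1$ is the only record, so the ``reject the first record'' slack of $1-c$ does not apply to those runs, and a naive single-staircase instance leaks a constant. To close this I expect to need an instance with many well-separated high values (or, equivalently, an average over a family of instances) so that, conditioned on $\alpha_1\in\RS{O}$, the number of records is large with high probability and the event ``$\alpha_1$ is the first record'' carries vanishing weight. The cleanest way to organize the accounting is probably the relative-rank linear program of Buchbinder et al.~\cite{DBLP:journals/mor/BuchbinderJS14} invoked after Theorem~\ref{ro-bound}, in which the $h$-AO-SP guarantee and the $h$-RO-SP objective are both linear in the per-(rank,\,time) acceptance probabilities, so that the statement becomes an LP-duality estimate. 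A secondary technical point is non-record interference: $P_\ell$ formally depends on the interspersed sub-$\max\RS{H}$ arrivals, so I would first argue that it is without loss of generality to assume $\ALG$ never accepts a non-record (this only increases its chance of reaching $\alpha_1$), which lets me treat record-acceptance as a single well-defined family before applying the order-invariance transfer.
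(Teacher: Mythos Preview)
Your plan takes a genuinely different route from the paper, and it is considerably more intricate. The paper does \emph{not} use order-invariance or any records-in-a-random-permutation accounting. Instead it runs a short two-instance argument with many \emph{tied} values. Instance $\mathcal{I}_1$ has $m$ candidates of value $\varepsilon'$ and $n+h-m$ candidates of value $0$; instance $\mathcal{I}_2$ replaces one $0$ by a single dominant $\alpha=(n+h)/n$. In the $\AOS$ model the adversary on $\mathcal{I}_2$ puts $\alpha$ last; $c$-competitiveness then forces $\Pr[\ALG\ \text{reaches round }n\mid \alpha\in\RS{O}]\ge c-\varepsilon'$. Because $\alpha$ replaces a uniformly chosen $0$, $\mathcal{I}_2$ conditioned on $\alpha\in\RS{O}$ is indistinguishable from $\mathcal{I}_1$ conditioned on $c_n\in\CS{Z}$ up to the last round, so the same reach-probability bound transfers to $\mathcal{I}_1$ and holds for \emph{every} ordering of the first $n-1$ online elements, in particular the random one. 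Now evaluate $\mathcal{I}_1$ in the $\ROS$ model: whenever $\ALG$ reaches $c_n$ and $c_n\in\CS{Z}$ it earns $0$, so $\E{\ALG(\mathcal{I}_1)}\le \varepsilon'\bigl(1-\tfrac{n+h-m}{n+h}(c-\varepsilon')\bigr)$, while $\E{\OPT(\mathcal{I}_1)}=\varepsilon'\Pr[\CS{E}\cap\RS{O}\neq\emptyset]$. Choosing $m=\tfrac{n+h}{n}\ln n$ makes both $\tfrac{m}{n+h}\to 0$ and $\Pr[\CS{E}\cap\RS{O}=\emptyset]\le 1/n\to 0$, and the ratio tends to $1-c$.

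The ``history interaction'' obstacle you flag---that with a single-dominant-value instance the event $\alpha_2\in\RS{H}$ has constant mass and leaves $\alpha_1$ as the only record---is exactly what the paper's construction sidesteps. By using many identical values and two coupled instances, it never needs to pin down per-order-type acceptance probabilities, never needs the Moran et~al.\ reduction, and never needs to reason about left-to-right maxima; the entire argument reduces to one indistinguishability step and a limit. Your LP-duality route via \cite{DBLP:journals/mor/BuchbinderJS14} could plausibly be made to work, but as you already note, the naive single-instance version loses a $\Theta(h/(n+h))$ additive term which is a constant in the interesting regime $h=\Theta(n)$, and closing that gap is the whole difficulty; the paper's trick avoids it entirely.
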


\begin{proof}
Fix $\varepsilon > 0$. In  a similar way to the proof of Theorem~\ref{thm:half_upper_bound}, we construct two instances: the first instance $\mathcal{I}_1$ consists of a set $\CS{E}$ of $m$ candidates of value $\varepsilon' = \varepsilon/2$ , and a set $\CS{Z}$ of $n+h-m$ candidates of value $0$. For the second instance $\mathcal{I}_2$, we replace one arbitrary candidate of value $0$ with a candidate $\alpha$ of value $\frac{n+h}{n}$. 

Consider an adversary in the $\AOS$ model which first reveals all candidates in the online set except $\alpha$ in an arbitrary order. At the end, it reveals $\alpha$ if $\alpha \in \RS{O}$. Now, since $\ALG$ is $c$-competitive in the $\AOS$ model, as in the proof of Theorem~\ref{thm:half_upper_bound}, conditioned on $\alpha \in \RS{O}$, the probability that $\ALG$ accepts $\alpha$ is at least $c-\varepsilon'$. Since $\alpha$ arrives last, it follows that $\Pr\left[\given{\ALG(\mathcal{I}_2)\text{ reaches }c_n}{\alpha \in O}\right] \geq c-\varepsilon'$, and since it replaces a uniformly random element of $\CS{Z}$ from $\mathcal{I}_1$, until arriving to the last candidate, it is impossible to distinguish between facing $\mathcal{I}_2$ conditioned on $\alpha \in O$, and facing $\mathcal{I}_1$ conditioned on $c_n \in \CS{Z}$. Therefore, 
\begin{align}
    \Pr\left[\given{\ALG(\mathcal{I}_1)\text{ reaches }c_n}{c_n \in \CS{Z}}\right] \geq c - \varepsilon'.\label{eq:pr_reaches_c_n}
\end{align}
Note that the above probabilities are taken over the random choice of $\RS{O} \subseteq \CS{C}$, and the internal randomness of $\ALG$.

We now move to analyze the expected profit of the algorithm in the $\ROS$ model. We have
\begin{align}
 \E{\ALG(\mathcal{I}_1)} 
 &= \E{\given{\ALG(\mathcal{I}_1)}{c_n \in \CS{Z}}}\Pr\left[c_n \in \CS{Z}\right] 
 + \E{\given{\ALG(\mathcal{I}_1)}{c_n \notin \CS{Z}}}\Pr\left[c_n \notin \CS{Z}\right].\label{eq:ex_I_1}
\end{align}
Conditioned on $c_n \in \CS{Z}$, when the algorithm reaches $c_n$, its profit is $0$. Otherwise, its profit is at most $\varepsilon'$. Therefore, we get that
\begin{align}
\E{\given{\ALG(\mathcal{I}_1)}{c_n \in \CS{Z}}}
 &\leq \varepsilon' \left(1 - \Pr\left[\given{\ALG(\mathcal{I}_1) \text{ reaches }c_n}{c_n \in \CS{Z}}\right] \right) \leq \varepsilon' \left(1 - \left(c - \varepsilon' \right) \right),
 \label{eq:c_n_z}
\end{align}
where the last inequality follows from~\eqref{eq:pr_reaches_c_n}, since it applies for any order of the elements in $\RS{O}$, it also applies when they randomly ordered. In addition, we clearly have $\E{\given{\ALG(\mathcal{I}_1)}{c_n \notin \CS{Z}}}\leq \varepsilon'$. Using this and Inequality \eqref{eq:c_n_z} in \eqref{eq:ex_I_1} we get that
\begin{align*}
 \E{\ALG(\mathcal{I}_1)} 
 &\leq \varepsilon' \left(1 - \left(c - \varepsilon' \right) \right)\Pr[c_n \in \CS{Z}]
  + \varepsilon'\Pr[c_n \notin \CS{Z}] \\
 &= \varepsilon'\left(1-\frac{n+h-m}{n+h}(c-\varepsilon')\right).
\end{align*}
On the other hand, $  \E{\OPT(\mathcal{I}_1)} = \varepsilon'\Pr[\CS{E} \cap O \neq \emptyset]$. We have
\begin{align*}
\frac{\E{\ALG(\mathcal{I}_1)}}{\E{\OPT(\mathcal{I}_1)}} 
& \leq \frac{1-\left(1-\frac{m}{n+h} \right)(c  - \varepsilon')}{\Pr[\CS{E} \cap O \neq \emptyset]}.
\end{align*}
Using Proposition~\ref{prop:choose-r} with $r=\frac{n+h}{n+h-m}$, we can bound $\Pr[\CS{E} \cap O = \emptyset] \leq \left( 1- \frac{m}{n+h} \right)^n \leq e^{-\frac{n\cdot m}{n+h}}$. By choosing $m = \frac{n+h}{n}\ln(n)$, we get that 
\begin{align*}
\frac{\E{\ALG(\mathcal{I}_1)}}{\E{\OPT(\mathcal{I}_1)}}  \leq \frac{n}{n-1}
\left( 1-\left(1- \frac{\ln(n)}{n} \right)(c-\varepsilon') \right) \xrightarrow[n \to \infty]{} 1-(c-\varepsilon').
\end{align*}
Therefore, there exists $n_0 \in \mathbb{N}$ such that for any $n \geq n_0$, we have ${\E{\ALG(\mathcal{I}_1)}}/{\E{\OPT(\mathcal{I}_1)}}  \leq 1-(c-\varepsilon') + \varepsilon' = 1-c+\varepsilon$.
\end{proof}

Note that a weaker (asymptotic) version of Theorem~\ref{thm:half_upper_bound} follows from Theorem~\ref{thm:wc-ro} as follows. Assume by contradiction that $\ALG$ is $(1/2+\varepsilon)$-competitive in the $\AOS$ model and therefore also in the $\ROS$ model. Then by Theorem~\ref{thm:wc-ro}, $\ALG$ is at most $1/2$-competitive in the $\ROS$ model and we get a contradiction.
It also follows that Algorithm~\ref{alg:adv-sec-long-hist} is an optimal $1/2$-competitive algorithm when considering both models. A natural question that arises here is, are there other values of $c$ for which we can obtain a $c$-competitive algorithm in the $\AOS$ model which is $(1-c)$-competitive in the $\ROS$ model? We answer this question in the affirmative for $c=1/e$ and $h \geq n(n-1)$. 

\begin{theorem}
For $h \geq n(n-1)$, there exists a $1/e$-competitive algorithm for the $h$-AO-SP which is also $(1-1/e)$-competitive for the $h$-RO-SP.
\end{theorem}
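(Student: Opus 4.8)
The plan is to exhibit a single random‑subset rule and analyze it separately in the two models, using the tradeoff of Theorem~\ref{thm:wc-ro} as a compass: since we are aiming for $(1-1/e)$ in the $\ROS$ model, that theorem already tells us the $\AOS$ guarantee cannot exceed $1/e$, so the two targets sit exactly on the frontier and the algorithm must be ``just conservative enough''. The candidate I would study samples \emph{from the history}: at each round $\ell$ it draws a fresh uniformly random $\RS{X}_\ell\subseteq\RS{H}$ of cardinality $n-1$ and accepts $c_\ell$ iff $c_\ell>\max\{\RS{X}_\ell\}$. The hypothesis $h\geq n(n-1)$ is used precisely to draw the $n$ subsets $\RS{X}_1,\dots,\RS{X}_n$ with pairwise disjoint supports, so that, conditioned on $\RS{H}$, the acceptance events of distinct rounds become mutually independent. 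Sampling from $\RS{H}$ rather than from the whole observed prefix (as in Algorithm~\ref{alg:ro_secretary}) is exactly what keeps the $\AOS$ guarantee from collapsing: the adversary can no longer ``poison'' a future threshold by revealing a moderately large element early.

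For the $\ROS$ side I would follow the template of Lemma~\ref{Secretary History Lemma} and Theorem~\ref{thm:long_his_sec}. The two facts that drive that analysis survive for history‑only sampling: by the symmetry of the uniform split of $\CS{C}$ into $\RS{H}$ and $\RS{O}$, the set $\RS{X}_\ell\cup\{c_\ell\}$ is a uniformly random $n$‑subset of $\CS{C}$, hence $\E{\max\{\RS{X}_\ell\cup\{c_\ell\}\}}=\E{\OPT}$ and $c_\ell$ is its maximum with probability $1/n$. Combining this with the round‑independence granted by disjoint supports, the probability of reaching round $\ell$ factorizes, and summing the per‑round profits $\E{\RS{R}_\ell}$ gives a geometric series that evaluates to at least $1-1/e$.

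The heart of the proof, and the step I expect to be the main obstacle, is the $\AOS$ lower bound. Fix $\RS{H}$ (the adversary may order $\RS{O}$ as a function of it) and write $b(o)=\Pr[o>\max\{\RS{X}\}]$ for the per‑round acceptance probability of an online element $o$, where $\RS{X}$ is a uniform $(n-1)$‑subset of $\RS{H}$; the expected profit under the adversary's permutation $\pi$ is $\sum_\ell c_{\pi(\ell)}\,b(c_{\pi(\ell)})\prod_{j<\ell}(1-b(c_{\pi(j)}))$, and the danger is that the adversary forces an early acceptance of a low‑value element. The structural identity I would establish is that $\sum_{o\in\RS{O}}b(o)=\E{N}=1$, where $N$ is the number of online elements exceeding $\max\{\RS{X}\}$: viewing $\RS{X}\cup\RS{O}$ as a uniformly labelled arrangement of $n-1$ ``sample'' and $n$ ``online'' elements, $N$ is the number of leading online elements before the first sample element, whose expectation is $n/((n-1)+1)=1$. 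The regime $h\geq n(n-1)$ is what makes this identity the right one and simultaneously forces $\Pr[\,\alpha_2\in\RS{O}\mid\alpha_1\in\RS{O}\,]\le \tfrac1{n+1}$, so that the two top candidates rarely both land online.

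To turn the identity into the $1/e$ bound I would run a downward recursion on the rank of the top candidates, in the spirit of Lemma~\ref{Lemma adv} and Theorem~\ref{thm_lower_bound_short}: condition on whether $\alpha_1\in\RS{O}$ or $\alpha_1\in\RS{H}$; on $\alpha_1\in\RS{O}$ we have $b(\alpha_1)=1$, so the probability of reaching and accepting $\alpha_1$ is $\prod_{o\neq\alpha_1}(1-b(o))$, which the ``$\sum b=1$'' identity controls from below by $1/e$, while the $\alpha_1\in\RS{H}$ branch is charged against $\E{\given{\OPT}{\alpha_1\in\RS{H}}}$ and recursed. The delicate point, and where the estimate must be done carefully, is the direction of the inequality: because $1-b\le e^{-b}$ runs the wrong way, one cannot simply exponentiate the threshold‑count; instead one must either retain the value collected at early stops or phrase the recursion so that the $1/e$ telescopes out against the conditional optimum, exactly as the factor $h/n$ does in Lemma~\ref{Lemma adv}. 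Averaging the resulting per‑$\RS{H}$ bound over the uniform choice of $\RS{H}$ then yields $\E{\ALG}\ge\frac1e\E{\OPT}$, completing the $\AOS$ side and, together with the $\ROS$ bound above, the theorem.
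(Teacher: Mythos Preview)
Your algorithm matches the paper's, and your $\ROS$ sketch is essentially sound: once one checks that the full tuple $((X_1,c_1),\dots,(X_n,c_n))$ is uniform over all ordered choices of $n^2$ distinct elements of $\CS{C}$ (this needs the random split \emph{and} the random order, not just the disjoint supports you cite), the indicators $\{c_\ell>\max X_\ell\}$ are independent Bernoulli$(1/n)$ and your geometric sum goes through.

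The $\AOS$ side, however, has a genuine gap. You begin by fixing $\RS{H}$ and then claim $\sum_{o\in\RS{O}} b(o)=\E{N}=1$ by ``viewing $\RS{X}\cup\RS{O}$ as a uniformly labelled arrangement''. But once $\RS{H}$ is fixed, $\RS{O}$ is fixed too, and the labelling into ``sample'' and ``online'' is \emph{not} uniform: if every element of $\RS{O}$ happens to exceed every element of $\RS{H}$, then $b(o)=1$ for all $o$ and the sum is $n$, not $1$. The identity $\E{N}=1$ holds only after averaging over the random split, so it cannot be invoked inside a fixed-$\RS{H}$ analysis where the adversary's permutation is a function of $\RS{H}$. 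You recognise that $1-b\le e^{-b}$ points the wrong way and fall back on an unspecified recursion ``in the spirit of Lemma~\ref{Lemma adv}'', but that recursion relies on the threshold being the running maximum, so that ``$\alpha_{i+1}\in\RS{H}$'' forces acceptance of some $\alpha_j$ with $j\le i$; with a random $(n-1)$-subset threshold there is no such deterministic implication, and it is unclear how the recursion would close.

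The paper bypasses all of this with one observation you are missing. Because $X_1,\dots,X_n$ are exchangeable and the adversary orders $\RS{O}$ without seeing them, the \emph{set} $P=\{\,o\in\RS{O}:o>\max X_{\text{pos}(o)}\,\}$ of elements that beat their assigned threshold has the same distribution in the $\AOS$ model as in the $\ROS$ model. Whenever $|P|=1$, the arrival order is irrelevant and the algorithm accepts the unique winner; by the independence above, $\Pr[|P|=1]=(1-1/n)^{n-1}\ge 1/e$, and conditioned on $|P|=1$ that winner is the maximum of a uniform $n$-subset of $\CS{C}$, so its expected value is $\E{\OPT}$. This gives $\E{\ALG}\ge\frac{1}{e}\,\E{\OPT}$ in the $\AOS$ model directly, with no $\sum b=1$ identity and no recursion.
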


\begin{proof}
Consider an algorithm that draws a uniformly random subset of cardinality $n(n-1)$ from the history set, and partitions it into $n$ uniformly random subsets $S_1,\dots,S_n$ of cardinality $n-1$ each. Then, at each online round $\ell \in [n]$, the algorithm accepts the current candidate $c_\ell$ if and only if $c_\ell > \max\left\{S_\ell\right\}$.

Let $x_1, \dots, x_n$ be the elements of $\RS{O}$ in a uniformly random order. If we randomly pick a subset $\RS{X} \subseteq \RS{H}$ of cardinality $n(n-1)$, and randomly partition it into $n$ subsets $X_1,\dots,X_n$  of cardinality $n-1$ each, then, $\RS{T}=\left((X_1, x_1), \dots, (X_n,x_n)\right)$ is a tuple of pairs consisting of $n^2$ distinct elements from $\CS{C}$. Each such tuple has the same probability to be chosen. Therefore, $X_i \cup \left\{x_i\right\} \subseteq \CS{C}$ is a uniformly random subset of cardinality $n$, and $x_i > \max\{X_i\}$ with probability $1/n$ independently for each $i\in [n]$. Let $\RS{P}\subseteq \RS{O}$ be the set of candidates that satisfy $x_i > \max \left\{X_i\right\}$.
If the candidates of $\RS{O}$ arrive in random order, the algorithm that we suggest processes the elements of $T$ in a random order, and therefore, it accepts a random element of $\RS{P}$. When it accepts a candidate, it is the maximum of a uniformly random subset of cardinality $n$ from $\CS{C}$, thus, its profit in this case is $\E{\OPT}$. The probability that the algorithm accepts a candidate (i.e., $P\neq \emptyset$) is 
$1- \left(1-1/n\right)^{n} \geq 1-1/e$. Therefore, in the $\ROS$ model, the profit of the algorithm is at least $\left(1-1/e\right)\E{\OPT}$.

When the elements of $\RS{O}$ arrive in adversarial order, the adversary can force the algorithm to pick the worst element in $\RS{P}$. Observe that in case $\left|P\right|=1$, the arrival order is irrelevant and the algorithm must accept the only candidate in $P$. We get that the profit of the algorithm in the adversarial-order case, is at least the profit of the algorithm in the random-order case when $\left|P\right| = 1$. We therefore return to the random-order case and lower bound the profit of the algorithm in the $\AOS$ model by using the above observation. Following the discussion in the previous paragraph, the probability that $x_i > \max\{X_i\}$ for exactly one $i\in[n]$ (i.e., $\left|P\right|=1$) is $\binom{n}{1}\frac{1}{n}\left(1-\frac{1}{n}\right)^{n-1}$ which for $n > 1$ is at least $1/e$ (the case $n=1$ is trivial). In this case the profit of the algorithm is $\E{\OPT}$. Overall, the expected profit of the algorithm in the $\AOS$ model is at least $\frac{1}{e}\E{\OPT}$.

\end{proof}

We note that Algorithm~\ref{alg:ro_secretary} does not achieve any bounded competitive-ratio in the $\AOS$ model when a sampling phase is used (i.e., for $h < 0.567n$), since the adversary can ensure the algorithm does not accept the best candidate by placing him in the sampling phase, and any other choice might yield a negligible profit. 

\section{Discussion}\label{sec:discussion}
In this paper we introduce new models for the design and analysis of online algorithms, where prior knowledge can be accounted for while preserving most of the power of the adversary. This is done by making the adversary expose a random sample of a larger worst case input in advance to the online player. The adversary then uses the remaining part of the input at the online stage. In this way, we model ``similar'' data that the online player can learn from in advance.

We note that our models also cover other natural settings. A particularly interesting example is the following: assume a finite population of $m$ elements is chosen by an adversary, such that any sample of $n$ elements from the population defines an input instance. An online player gets to sample $h$ elements from the population (without replacement) for learning. Then, the adversary gets to sample $n$ elements and challenges the player in an online fashion with these $n$ samples (in adversarial order or random order).

Many interesting questions are open for future research. For the secretary problem in the $\ROS$ model, a gap between the lower and upper bounds remains unresolved. 
There is also room for designing online algorithms with the objective of optimizing the performance in both models simultaneously.

As mentioned in the introduction the models we describe here are general and can be applied to various online problems. A particularly interesting example is the weighted bipartite matching problem. Following the approach of~\cite{DBLP:conf/esa/KesselheimRTV13}, Algorithm~\ref{alg:ro_secretary} can be easily extended to the weighted bipartite matching problem in the $\ROS$ model, without any loss in the competitive-ratio. An interesting question is, can the same be done in the $\AOS$ model?

\appendix

\section{Omitted Proofs}

\subsection{Proof of Proposition~\ref{prop:choose-r}}\label{apx:choose-r}
\begin{proofw}
\begin{align*}
\binom{n}{k} 
&= \frac{n \cdot \left(n -1\right) \cdots \left(n - (k-1)\right)}{k!} \\
&= \frac{1}{r^k}  \cdot \frac{r \cdot n \cdot r \cdot \left(n -1\right)  \cdots r \cdot \left(n - (k-1)\right)}{k!} \\
&\leq \frac{1}{r^k}  \cdot \frac{(r n) \cdot \left(r n-1\right) \cdots  \left(r n - (k-1)\right)}{k!} \\
&= \frac{1}{r^k}\binom{r n}{k}. \qedhere
\end{align*}
\end{proofw}

\subsection{Proof of Proposition~\ref{expected max prop}}\label{proof expected max prop}
\begin{proofw}
Let $\RS{C}\subseteq \RS{B}$ be a uniformly random subset of cardinality $k$. Since $\RS{B}$ is a uniformly random subset of $\CS{X}$ of cardinality $n$, $\RS{C}$ is also a uniformly random subset of $\CS{X}$ of cardinality $k$. Therefore $\E{\max\left\{\RS{A}\right\}}=\E{\max\left\{\RS{C}\right\}}$. By law of total expectation, we have
\begin{align*}
    \E{\max\left\{ \RS{C} \right\}}
                &=     \E{\E{\given{\max\left\{ \RS{C}\right\}}{ \RS{B} }}}
                \geq  \E{ \max\left\{ \RS{B}\right\}\Pr \left[ \given{\max\left\{\RS{B}\right\} \in \RS{C}}{ \RS{B}} \right] } \\
                &=  \E{\frac{k}{n}\max\left\{\RS{B}\right\}} 
                =     \frac{k}{n}  \E{\max\left\{\RS{B}\right\}}.\qedhere
\end{align*}

\end{proofw}

\subsection{Proof of Theorem~\ref{thm:long_his_sec}}\label{apx:long_hist}
\begin{proofw}
Following the inductive argument in the proof of Lemma~\ref{Not Accepted Phase2} and replacing the base case to reflect the fact that the algorithm starts directly from Phase~\ref{phase 2}, it is easy to verify that for every round $\ell\in [n]$ and for any $\CS{U}_\ell\subseteq \CS{C}$ such that $\left|\CS{U}_\ell\right|= h+\ell$, we have
\begin{align*}
 \Pr \left[\given{\bigwedge_{k=1}^{\ell} \neg \ES{M}_k}{\RS{S_\ell}=\CS{U}_\ell}\right] =  \left(1-\frac{1}{n} \right)^{\ell}.~\label{eq:not_accepted_long_hist}
\end{align*}
Following the proof of Lemma~\ref{Secretary History Lemma} and replacing the use of Lemma~\ref{Not Accepted Phase2} by the above result, we get that for every $\ell \in [n]$
\[ 
\E{\RS{R}_\ell} = \left(1 - \frac{1}{n}\right)^{\ell - 1} \cdot \frac{1}{n}\E{\OPT}.
\]
Now we can sum over profit of the algorithm at each round and get the theorem.
\end{proofw}

\subsection{Proof of Theorem~\ref{thm:iid_subsummed}}\label{proof:iid_subsummed}
\begin{proofw}
Let $T_1,\dots,T_h, X_1,\dots, X_n$ be $\IID$ random variables drawn from a distribution $F$ where $T_1,\dots, T_h$ denote are the training samples, and $X_i$ is the sample that arrives at round $i$, for $i\in[n]$. We may assume that the realizations of $T_1,\dots, T_h, X_1,\dots, X_n$ are obtained by first independently drawing $n+h$ values from $F$ to obtain a sequence of values $C = (Y_1,\dots, Y_{n+h})$, then permuting them uniformly at random to obtain the value of each random variable. Fix $\CS{C}=(y_1,\dots,y_{h+n})$. Conditioned on $C = \CS{C}$, the problem is identical to the $h$-RO-SP, since $T_1, \dots, T_h$ are $h$ uniformly random elements of $\CS{C}$, and the remaining elements arrive in a uniformly random order. Therefore, we get
\[
\E{\given{ \ALG \left( \left\{ T_1,\dots,T_h, X_1,\dots, X_n\right\},h \right) } {C=\CS{C}}}
\geq c \cdot \E{\given{\max\left\{X_1,\dots, X_n\right\}}{C=\CS{C}}},
\]
where the expectation is taken over the random permutation of the elements in $\CS{C}$. The lemma follows by taking the expectation over $C$, or explicitly
\begin{align*}
\E{\ALG \left( \left\{ T_1,\dots,T_h, X_1,\dots, X_n\right\},h \right)}
&= \E{\E{\given{ \ALG \left( \left\{ T_1,\dots,T_h, X_1,\dots, X_n\right\},h \right) } {C}}} \\
&\geq \E{ c \cdot \E{\given{\max\left\{X_1,\dots, X_n\right\}}{C}}} \\
&= c \cdot \E{\max\left\{X_1,\dots, X_n\right\}}. \qedhere
\end{align*}
\end{proofw}
\bibliographystyle{plain}
\bibliography{citations}

\begin{thebibliography}{10}

\bibitem{soda/AzarKW14}
Pablo Azar, Robert Kleinberg, and {M}atthew {W}einberg.
\newblock Prophet inequalities with limited information.
\newblock In {\em Proceedings of the {T}wenty-{F}ifth {A}nnual {ACM}-{SIAM}
  {S}ymposium on {D}iscrete {A}lgorithms ({SODA})}, pages 1358--1377, 2014.

\bibitem{approx/BabaioffIKK07}
Moshe Babaioff, Nicole Immorlica, David Kempe, and Robert Kleinberg.
\newblock A knapsack secretary problem with applications.
\newblock In {\em {A}pproximation, {R}andomization, and {C}ombinatorial
  {O}ptimization. {A}lgorithms and {T}echniques}, pages 16--28. 2007.

\bibitem{soda/BabaioffIK07}
Moshe Babaioff, Nicole Immorlica, and Robert Kleinberg.
\newblock Matroids, secretary problems, and online mechanisms.
\newblock In {\em {P}roceedings of the {E}ighteenth {A}nnual {ACM-SIAM}
  {S}ymposium on {D}iscrete {A}lgorithms ({SODA})}, pages 434--443, 2007.

\bibitem{DBLP:journals/mor/BuchbinderJS14}
Niv Buchbinder, Kamal Jain, and Mohit Singh.
\newblock Secretary problems via linear programming.
\newblock {\em Math. Oper. Res.}, 39(1):190--206, 2014.

\bibitem{DBLP:conf/ec/CorreaDFS19}
Jos{\'{e}} Correa, Paul D{\"{u}}tting, Felix Fischer, and Kevin Schewior.
\newblock Prophet inequalities for {I.I.D.} random variables from an unknown
  distribution.
\newblock In {\em {P}roceedings of the 2019 {ACM} {C}onference on {E}conomics
  and {C}omputation, ({EC})}, pages 3--17, 2019.

\bibitem{DBLP:journals/sigecom/CorreaFHOV18}
Jos{\'{e}} Correa, Patricio Foncea, Ruben Hoeksma, Tim Oosterwijk, and Tjark
  Vredeveld.
\newblock Recent developments in prophet inequalities.
\newblock {\em SIGecom Exchanges}, 17(1):61--70, 2018.

\bibitem{dynkin1963optimum}
Eugene Dynkin.
\newblock The optimum choice of the instant for stopping a markov process.
\newblock {\em Soviet Mathematics}, 4:627--629, 1963.

\bibitem{gilbert1966recognizing}
John Gilbert and Frederick Mosteller.
\newblock Recognizing the maximum of a sequence.
\newblock {\em Journal of the American Statistical Association}, pages 35--73,
  1966.

\bibitem{hill1982comparisons}
Theodore Hill and Robert Kertz.
\newblock Comparisons of stop rule and supremum expectations of iid random
  variables.
\newblock {\em The Annals of Probability}, 10(2):336--345, 1982.

\bibitem{kertz1986stop}
Robert Kertz.
\newblock Stop rule and supremum expectations of iid random variables: a
  complete comparison by conjugate duality.
\newblock {\em Journal of multivariate analysis}, 19(1):88--112, 1986.

\bibitem{DBLP:conf/stoc/KesselheimKN15}
Thomas Kesselheim, Robert Kleinberg, and Rad Niazadeh.
\newblock Secretary problems with non-uniform arrival order.
\newblock In {\em {P}roceedings of the {F}orty-{S}eventh {A}nnual {ACM} on
  {S}ymposium on Theory of {C}omputing, ({STOC})}, pages 879--888, 2015.

\bibitem{DBLP:conf/esa/KesselheimRTV13}
Thomas Kesselheim, Klaus Radke, Andreas T{\"{o}}nnis, and Berthold
  V{\"{o}}cking.
\newblock An optimal online algorithm for weighted bipartite matching and
  extensions to combinatorial auctions.
\newblock In {\em {P}roceedings of the 21st {A}nnual {E}uropean {S}ymposium on
  {A}lgorithms ({ESA})}, pages 589--600, 2013.

\bibitem{DBLP:conf/soda/Kleinberg05}
Robert Kleinberg.
\newblock A multiple-choice secretary algorithm with applications to online
  auctions.
\newblock In {\em Proceedings of the Sixteenth Annual {ACM-SIAM} Symposium on
  Discrete Algorithms, ({SODA})}, pages 630--631, 2005.

\bibitem{lindley1961dynamic}
Denis Lindley.
\newblock Dynamic programming and decision theory.
\newblock {\em Journal of the Royal Statistical Society: Series C (Applied
  Statistics)}, 10(1):39--51, 1961.

\bibitem{mehta2013online}
Aranyak Mehta.
\newblock Online matching and ad allocation.
\newblock {\em Found. Trends Theor. Comput. Sci.}, 8(4):265--368, 2012.

\bibitem{DBLP:journals/jacm/MoranSM85}
Shlomo Moran, Marc Snir, and Udi Manber.
\newblock Applications of ramsey's theorem to decision tree complexity.
\newblock {\em J. {ACM}}, 32(4):938--949, 1985.

\bibitem{DBLP:journals/cacm/Roughgarden19}
Tim Roughgarden.
\newblock Beyond worst-case analysis.
\newblock {\em Commun. {ACM}}, 62(3):88--96, 2019.

\end{thebibliography}

\end{document}